\journal{Journal of TCS}
\pgfplotsset{compat=1.12}
\newcommand{\defeq}{\stackrel{\vartriangle}{=}}
\newcommand{\rel}[3]{#2\,#1\,#3}
\newcommand{\Rf}[1]{\mathsf{Rf}_{#1}}
\newcommand{\Rs}[1]{\mathsf{Rs}_{#1}}
\newcommand{\Ef}[1]{\mathsf{Ef}_{#1}}
\newcommand{\Es}[1]{\mathsf{Es}_{#1}}
\newcommand{\Ts}{\mathsf{T}}
\newcommand{\Sp}{\mathsf{S}}
\newcommand{\defiff}{\stackrel{\vartriangle}{\iff}}
\newcommand{\hs}{\mathcal{H}}
\newcommand{\HS}{\mathbb{H}}
\newcommand{\Real}{\mathbb{R}}
\newcommand{\Nat}{\mathbb{N}}
\newcommand{\State}{\mathbb{S}}
\newcommand{\Time}{\mathbb{T}}
\newcommand{\Der}[1]{\dot{#1}}
\newcommand{\Next}[1]{{#1}^+}
\newcommand{\inv}[1]{{#1}^{-1}}
\newcommand{\bFa}[2]{\Box_{#1}(#2)}
\newcommand{\bca}[1]{#1^\Box}
\newcommand{\Den}[1]{[\![#1]\!]}
\newcommand{\X}{\times}
\newcommand{\PS}{\mathsf{P}}
\newcommand{\CS}{\mathsf{C}}
\newcommand{\OS}{\mathsf{O}}
\newcommand{\FS}[2]{|#1|(#2)}
\newcommand{\UpC}[1]{\uparrow\!{#1}}
\newcommand{\pL}{\mathbb{P}}
\newcommand{\cL}{\mathbb{C}}
\newcommand{\FL}[2]{#1(#2)}
\newcommand{\Set}{\mathbf{Set}}
\newcommand{\Po}{\mathbf{Po}}
\newcommand{\Top}{\mathbf{Top}}
\newcommand{\C}{\mathbb{A}}
\newcommand{\id}{\text{id}}
\newcommand{\cl}[1]{\overline{#1}}
\newdefinition{definition}{Definition}[section]
\newtheorem{proposition}[definition]{Proposition}
\newtheorem{lemma}[definition]{Lemma}
\newtheorem{theorem}[definition]{Theorem}
\newproof{proof}{Proof}
\newdefinition{notation}[definition]{Notation}
\newdefinition{remark}[definition]{Remark}
\newdefinition{example}[definition]{Example}
\begin{document}

\begin{frontmatter}
\title{Safe \& Robust Reachability Analysis of Hybrid Systems}
\author{Eugenio Moggi\fnref{genova}\footnote{Research partially supported by the Swedish Knowledge Foundation.}}
\author{Amin Farjudian\fnref{halmstad}}
\author{Adam Duracz\fnref{rice}\footnote{Work done while the author was a PhD student at Halmstad University.}}
\author{Walid Taha\fnref{halmstad}\footnote{Research partially supported by US NSF award \#1736754 ``A CPS Approach to Robot Design'' and the Swedish Knowledge Foundation project ``AstaMoCA: Model-based Communications Architecture for the AstaZero Automotive Safety Facility''.}}
\address[genova]{DIBRIS, Genova Univ., v. Dodecaneso 35, 16146 Genova, Italy, Email: moggi@unige.it}
\address[halmstad]{Halmstad University, Halmstad, Sweden, Email: name.surname@hh.se}
\address[rice]{Rice University, Houston, TX, USA, Email: adam.duracz@rice.edu}

\begin{abstract}
Hybrid systems---more precisely, their mathematical models---can exhibit behaviors, like \emph{Zeno behaviors}, that are absent in purely discrete or purely continuous systems.
First, we observe that, in this context, the usual definition of \emph{reachability}---namely, the reflexive and transitive closure of a transition relation---can be \emph{unsafe}, ie, it may compute a proper subset of the set of states \emph{reachable in finite time} from a set of initial states.
Therefore, we propose \emph{safe reachability}, which always computes a superset of the set of reachable states.

Second, in safety analysis of hybrid and continuous systems, it is important to ensure that a reachability analysis is also \emph{robust} wrt small perturbations to the set of initial states and to the system itself, since discrepancies between a system and its mathematical models are unavoidable.
We show that, under certain conditions, the \emph{best Scott continuous} approximation of an analysis $A$ is also its best robust approximation.
Finally, we exemplify the gap between the set of reachable states and the supersets computed by safe reachability and its best robust approximation.
\end{abstract}
\begin{keyword}
Domain theory;
Models of hybrid systems;
Reachability.
\end{keyword}
\end{frontmatter}

\section*{Introduction}
In a transition system---ie, a binary relation $\to$ on a set of states---reachability is a clearly defined notion, namely, the reflexive and transitive closure $\to^*$ of $\to$.
Reachability analysis plays an important role in computer-assisted verification and analysis \cite{ACHH95},
since \textbf{safety} (a key system requirement) is usually formalized in terms of \textbf{reachability}, namely:
\begin{center}
state $s$ is safe $\iff$ it is not possible to reach a bad state from $s$.
\end{center}
For a hybrid system one can define a transition relation $\to$ on a \emph{continuous and uncountable} state space,
but $\to^*$ captures only the states reachable in finitely many transitions, and they can be a proper subset of those reachable in finite time!
Hybrid systems with \emph{Zeno behaviors}---where infinitely many events occur in finite time---are among the systems in which
the two notions of reachability differ.
Zeno behaviors arise naturally when modeling rigid body dynamics with impacts, as illustrated by the system consisting of a bouncing ball
(Example~\ref{ex-BB}), whose Zeno behavior is due to the modeling of impacts as discrete events.

\subsection*{Contributions}
The first contribution of this paper is the notion of \textbf{safe reachability} (Def~\ref{def-SR}), which
gives an over-approximation---ie, a superset---of the states reachable in finite time, including the case where the hybrid system has Zeno behaviors.
Mathematical models are always \emph{simplifications}, through abstractions and approximations, of \emph{real systems}.
Simplifications are essential to making analyses manageable.
In safety analysis, over-approximations are acceptable, since they can only lead to false negatives, ie,
the analysis may wrongly conclude that (a state $s$ of) the system in unsafe, because the over-approximation includes some unreachable bad states.

The second contribution is to show, under certain assumptions, that the \emph{best Scott continuous approximation} of safe reachability coincides with its \emph{best robust approximation}.
In safety analysis robust over-approximations are important, because inaccuracies in the modeling of a cyber-physical system (as well as in its building and testing) are unavoidable, as convincingly argued in \cite{franzle1999analysis}.

\subsection*{Related Work}

Reachability maps are arrows in the category of complete lattices and monotonic maps,
which is the standard setting for defining and comparing abstract interpretations \cite{cousot1992abstract}.
We build directly on the following papers.
\begin{itemize}
\item
\cite{GST2009} is an excellent tutorial on hybrid systems, from which we borrow the definition of hybrid system (Def~\ref{def-HS}),
but avoid the use of hybrid arcs, since they cannot reach nor go beyond \emph{Zeno points}.
\item
\cite{cuijpers2004,cuijpers2007} introduce topological transition
systems (TTS), which we use for defining safe reachability (Def~\ref{def-SR}).
In TTSs on discrete spaces, standard reachability (Def~\ref{def-R}) and safe
reachability (Def~\ref{def-SR}) coincide.
\item
\cite{Edalat95} is one among several papers, where Edalat recasts mainstream mathematics in Domain Theory,
and shows what is gained by doing so.
In this context Domain Theory becomes particularly relevant when the Scott and Upper Vietoris topologies on certain \emph{hyperspaces} coincide.

\item \cite{franzle1999analysis,gao2013delta} show that proving \emph{$\delta$-safety}, ie, safety of a hybrid system subject to some noise bounded by $\delta$, can make the verification task easier, besides excluding systems that are safe only under unrealistic assumptions.

\end{itemize}

\subsection*{Summary}

The rest of the paper is organized as follows:
\begin{itemize}
\item
Sec~\ref{sec-HS} recalls the definition of a hybrid system from \cite{GST2009}, defines the corresponding transition relation (Def~\ref{def-TR}), and gives some examples.
\item
Sec~\ref{sec-reach} introduces two reachability maps $\Rf{}$ and $\Rs{}$ (Def~\ref{def-R}~and~\ref{def-SR}, respectively),
establishes their properties and how they relate to each other.
\item
Sec~\ref{sec-approx} uses the poset-enriched category of complete lattices and monotonic maps (see Def~\ref{def-po-cat}) as a framework
to discuss approximations and relate reachability maps defined on different complete lattices.
We also give a systematic way to turn a monotonic map $f$ between complete lattices into a Scott continuous map $\bca{f}$ (see Prop~\ref{thm-bca}).
\item Sec~\ref{sec-robust} introduces the notion of robustness (see Def~\ref{def-robust}) and says when robustness and Scott continuity coincide (see Thm~\ref{thm-robust-main}).
\item Sec~\ref{sec-figs} analyses (with the aid of pictures) the differences between the under-approximation $\Rf{}$ and several over-approximations (from $\Rs{}$ to $\bca{\Rs{}}$) of sets of reachable states, for the hybrid systems introduced in Sec~\ref{sec-HS}.
\end{itemize}
\ref{sec-proofs} contains some proofs that were too long to inline into the text.

\section{Mathematical preliminaries}\label{sec-prelim}

We assume familiarity with the notions of Banach, metric, and topological spaces, and the definitions of open, closed, and compact subsets of a topological space (see, eg, \cite{Conway:Fun_Analysis:2ed:1990,Kel1975}).
The three notions of space are related as follows:
\begin{itemize}
    \item Every Banach space is a \emph{Cauchy complete} metric space whose distance is $d(x,y)\defeq|y-x|$, where $|x|$ is the norm of $x$;
    \item Every metric space is a topological space whose open subsets are given by unions of open balls $B(x,\delta)\defeq\{y|d(x,y)<\delta\}$.
\end{itemize}
Throughout the paper, for the sake of simplicity, one may replace Banach spaces with Euclidean spaces $\Real^n$.
For membership we may write $x:X$ instead of $x\in X$, and we use the following notations:
\begin{itemize}
\item $\Real$ is the Euclidean space of the real numbers;
\item $\Nat$ and $\omega$ denote the natural numbers with the usual linear order;
\item $\PS(\State)$ is the set of subsets of a set $\State$ (we use the same notation also when $\State$ is a set with additional structure, eg, a Banach or topological space);
\item $\OS(\State)$ is the set of open subsets of a topological space $\State$, and $\CS(\State)$ is the set of closed subsets (we use the same notation also when $\State$ is a set with additional structure that induces a topology, eg, a Banach or metric space);
\item $\Set$ is the category of sets and (total) maps;
\item $\Set_p$ is the category of sets and partial maps;
\item $\Top$ is the category of topological spaces and continuous maps.
\end{itemize}
Finally, we recall some definitions and their basic properties:
\begin{itemize}
\item $x$ is a \textbf{limit} of a sequence $(x_n|n:\omega)$ in the topological space $\State\defiff$\\
$\forall O:\OS(\State).x:O\implies\exists m.\forall n>m.x_n:O$.

The limits of a sequence form a closed subset of $\State$.
In a metric space a sequence has at most one limit.

\item $x$ is an \textbf{accumulation point} of a sequence $(x_n|n:\omega)$ in the topological space $\State\defiff$
$\forall O:\OS(\State).x:O\implies\forall m.\exists n>m.x_n:O$.

The accumulation points of a sequence form a closed subset of $\State$, every limit is also an accumulation point, and every accumulation point is a limit of a sub-sequence $(x_{f(n)}|n:\omega)$ for some strictly increasing $f:\Set(\omega,\omega)$, ie, $\forall n.f(n)<f(n+1)$.
In a metric space, if a sequence has a limit, then the limit is the only accumulation point.

\item The \textbf{derivative} $\Der{f}:\Set_p(\Real,\State)$ of a partial map $f:\Set_p(\Real,\State)$ from $\Real$ to a Banach space $\State$ is given by $\Der{f}(x)=v \defiff \exists \delta>0$ st $B(x,\delta)$ is included in the domain of $f$, and
for any sequence $(x_n|n:\omega)$ in $B(x,\delta)-\{x\}$, if $x$ is the limit of $(x_n|n:\omega)$ in $\Real$,
then $v$ is the limit of $(\frac{f(x_n)-f(x)}{x_n-x}|n:\omega)$ in $\State$.

If $\Der{f}(x)$ is defined, then $f$ must be continuous at $x$.  A stronger requirement is that $\Der{f}$ is defined and continuous in $B(x,\delta)$, in this case $f$ is called \emph{continuously differentiable} in $B(x,\delta)$.
\end{itemize}

\section{Hybrid Systems and Topological Transition Systems}\label{sec-HS}

In this section, we define what is a hybrid system (cf.~\cite{GST2009}), namely, a mathematical model suitable for describing cyber-physical systems;
what is a topological timed transition system (cf.~\cite{cuijpers2004}), namely, an abstraction of hybrid systems useful for defining
various reachability maps; and, finally, we introduce some example hybrid systems that will be used throughout the paper.

\begin{definition}\label{def-HS}
A \textbf{Hybrid System} (HS for short) $\hs$ on a Banach space $\State$ is a pair $(F,G)$ of binary relations on $\State$, ie, $F,G:\PS(\State\X\State)$, respectively called \textbf{flow} and \textbf{jump} relation.
We say that $\hs$ is \textbf{open/closed/compact}, when the relations $F$ and $G$---as subsets of the topological space $\State\X\State$---are open/closed/compact.
\end{definition}
\begin{remark}\label{rmk-HS}
In~\cite{GST2009}, the authors restrict $\State$ to a Euclidean space $\Real^n$, and
show that HS subsume Hybrid Automata \cite{ACHH95} and Switching Systems.
\end{remark}
The flow and jump relations are constraints for the \emph{trajectories} describing how the HS evolves over time
(see the definition of hybrid arc in \cite[page 39]{GST2009}).
Trajectories are not needed to define reachability (Sec~\ref{sec-reach}), since
a simpler notion of \emph{timed transition} suffices (see also \cite[Sec 2]{ACHH95} and \cite[Def 5]{P08}).
\begin{definition}\label{def-TR}
A \textbf{Topological Timed Transition System} (TTTS) is a pair $(\State,\rTo)$ consisting of
a topological space $\State$ and a timed transition relation $\rTo:\PS(\State\X\Time\X\State)$, where
$\Time\defeq\{d:\Real|d\geq 0\}$ is the continuous \textbf{time line}.
Its corresponding transition relation on $\State$ is given by $s\rTo s'\defiff\exists d.s\rTo^d s'$.

A HS $\hs=(F,G)$ on $\State$ induces a TTTS $(\State,\rTo_\hs)$ such that $s\rTo_\hs^d s'\defiff$
\begin{enumerate}
\item $d=0$ and $\rel{G}{s}{s'}$, or,
\item $d>0$ and there exists a continuous map $f:\Top([0,d],\State)$ such that:
\begin{itemize}
\item the derivative $\Der{f}$ of $f$ is defined and continuous in $(0,d)$;
\item $s=f(0)$, $s'=f(d)$ and $\forall t:(0,d).\rel{F}{f(t)}{\Der{f}(t)}$.
\end{itemize}
In this case we say that $f$ \emph{realizes} the transition.
\end{enumerate}
\end{definition}
\begin{remark}
The Banach space structure is just what is needed to define derivatives.
Hybrid arcs (cf. \cite{GST2009}) could be defined in term of a transition relation where the labels $d>0$ are replaced by their realizer maps $f$.

In \cite{GST2009} the requirements on $f$ are more relaxed than ours, namely: $f$ must be \emph{absolutely continuous} (which, in our case, is implied by the continuity of $\Der{f}$), and the flow relation must hold \emph{almost everywhere} in $(0,d)$.
However, the safe evolution and safe reachability maps (see Def~\ref{def-SR}) are insensitive to these changes. Thus, we have adopted the requirements on $f$ that are mathematically simpler to express.

In \cite{AdamPhD} the requirements on $f$ are stricter than ours, namely: $\Der{f}$ must extend continuously to $[0,d]$, and the flow relation must hold also at the end-points. For instance, the map $f(t)=\sqrt{t}$ is continuous in $[0,d]$, its derivative $\Der{f}(t)=\frac{1}{2*\sqrt{t}}$ is continuous in $(0,d)$, but it cannot be extended continuously to $0$.
The main rationale for the stricter requirements is that a transition $s\rTo^d s'$ with $d>0$ can only start from a state in the domain of the flow relation $F$.
\end{remark}
\begin{notation}
Given a first-order language with an interpretation in a Banach space $\State$, a HS on $\State$ can be described
by two formulas, a flow formula $F(x,\Der{x})$ and a jump formula $G(x,\Next{x})$, with two free variables each:
$x$ denotes the current state, $\Der{x}$ denotes the derivative of a trajectory flowing through $x$, and
$x^+$ denotes a state reachable from $x$ with one jump.

Similarly, given a two sorted language, with one sort interpreted in $\Real$ and the other in a topological space $\State$,
a timed transition relation can be described by a formula $T(x,d,x')$ with three free variables: $x$ denotes the starting state,
$d:\Real$ the duration of the transition, and $x'$ the final state.
\end{notation}
We introduce some hybrid systems, and give 
explicit descriptions of their timed transition relations (see also the Figures in Sec~\ref{sec-figs}).
\begin{example}[Expand]\label{ex-expand}
$\hs_E$ is a HS on $\Real$ describing the expansion of a quantity $m$ until it reaches a threshold $M>0$.
The flow and jump relations are:
\[F=\{(m,\Der{m})|0\leq m=\Der{m}\leq M\}, \qquad G=\emptyset.\]
It has two kinds of trajectories depending on the start state $m_0$ (see Fig~\ref{fig:expand-trajectories}).
\begin{enumerate}
\item When $m_0=0$, the quantity remains $0$ forever, ie, $f(t)=0$ when $t\geq 0$.
\item When $m_0:(0,M)$, there is an exponential growth $f(t)=m_0*e^{t}$ until $f(t)$ becomes $M$, then the trajectory cannot progress further.
\end{enumerate}
The timed transition relation $\rTo_{\hs_E}$ consists of the transitions
\begin{itemize}
\item $m\rTo^d m'$ with $0<d$ and $0\leq m\leq m'=m*e^d\leq M$.
\end{itemize}
Removing $(M,M)$ from $F$ does not change the timed transitions, while
adding $(M,0)$ to $F$ entails the addition of the transitions $M\rTo^d M$ with $d>0$.
\end{example}
\begin{example}[Decay]\label{ex-decay}
The hybrid system $\hs_D$ on $\Real$ describes the decay of a quantity when $m>0$, and a `refill' to $M>0$ when $m=0$. The flow and jump relations are:
\[F=\{(m,\Der{m})|m>0\land\Der{m}=-m\},\qquad G=\{(0,M)\}.\]
It has two kinds of trajectories, depending on the start state $m_0$ (see Fig~\ref{fig:decay-trajectories}):
\begin{enumerate}
\item When $m_0=0$, there is a refill followed by a decay $f(t)=M*e^{-t}$.
\item When $m_0>0$, there is a decay $f(t)=m_0*e^{-t}$. Thus, $f(t)>0$ when $t\geq 0$, and $f(t)\to 0$ as $t\to+\infty$.
\end{enumerate}
The timed transition relation $\rTo_{\hs_D}$ consists of the transitions
\begin{itemize}
\item $0\rTo^0 M$, and
\item $m\rTo^d m'$ with $0<d$ and $m>m'=m*e^{-d}>0$.
\end{itemize}
Adding $(0,0)$ to $F$ entails the addition of the transitions $0\rTo^d 0$ with $d>0$.
\end{example}
\begin{example}[Bouncing ball]\label{ex-BB}
The hybrid system $\hs_B$ on $\Real^2$ describes a bouncing ball with height $h\geq 0$ and
velocity $v$, which is kicked when it stops, ie, when $h=v=0$.
We assume the force of gravity to be $-1$ (for the sake of simplicity), a coefficient of restitution $b$ (we do not restrict its value, but $b:[-1,0]$ would be the obvious restriction), and a velocity $V>0$ given to the ball when it is kicked.
Formally:
\begin{itemize}
    \item $F=\{((h,v),(\Der{h},\Der{v}))|h>0\land\Der{h}=v\land\Der{v}=-1\}$.
    \item $G=\{((0,v),(0,v^+))|v<0\land v^+=b*v\}\uplus\{((0,0),(0,V))\}$.
\end{itemize}
It has seven kinds of trajectories starting from $(h=0,v>0)$,
depending on the value of $b$ (see Fig~\ref{fig:bb-trajectories}).
\begin{enumerate}
\item When $b<-1$, the ball never stops (its energy increases at each bounce).
\item When $b=-1$, the ball never stops (its energy remains constant).
\item When $b:(-1,0)$,
the ball stops in finite time, but after infinitely many bounces (this is a \textbf{Zeno behavior}), then it is kicked, ie, $(h=0,v=V)$.
\item When $b=0$, the ball stops as it hits the ground, then it is kicked.
\item When $b:(0,1)$, as the ball hits the ground, it stops after infinitely many instantaneous slowdowns $0>b^n*v\to 0$ (this is a \textbf{chattering Zeno behavior}), then it is kicked.
\item When $b=1$, as the ball hits the ground, the trajectory cannot progress further in time.
\item When $b>1$, as the ball hits the ground, its velocity drifts to $-\infty$ after an
  infinite sequence of instantaneous accelerations $0>b^n*v\to -\infty$, and the trajectory cannot progress further in time.
\end{enumerate}
The timed transition relation $\rTo_{\hs_B}$ consists of the following transitions:
\begin{itemize}
\item $(0,v)\rTo^0 (0,v')$ with $v<0$ and $v'=b*v$, this is a bounce;
\item $(0,0)\rTo^0 (0,V)$, this is when the ball is kicked;
\item $(h,v)\rTo^d (h',v')$, with $0<d$ and $0\leq h,h'=h+v*d-\frac{d^2}{2}\land v'=v-d$, this is
when the ball moves while the energy $E(h,v)=h+\frac{v^2}{2}$ stays constant.
\end{itemize}
In particular,  $(0,v)\rTo^{2*v} (0,-v)$ is the transition between two bounces.
Adding $\{((0,v),(\Der{h},\Der{v}))|\Der{h}=v\land\Der{v}=-1\}$ to $F$ does not change the timed transitions, while adding $((0,0),(0,0))$ to $G$ entails the addition of $(0,0)\rTo^0 (0,0)$.
\end{example}
The following construction adds a clock to a HS to record the passing of time.
\begin{definition}\label{def-HT}
Given a HS $\hs=(F,G)$ on $\State$,
the derived HS $t(\hs)=(F',G')$ on $\Real\X\State$
\emph{adds} a clock to $\hs$, namely:
\begin{itemize}
\item  $F'\defeq\{((t,s),(1,\Der{s}))|\rel{F}{s}{\Der{s}}\}$, because $dt/dt = 1$;
\item $G'\defeq\{((t,s),(t,s^+))|\rel{G}{s}{s^+}\}$, because jumps are instantaneous.
\end{itemize}
\end{definition}
\begin{proposition}\label{thm-HT}
$(t,s)\rTo_{t(\hs)}^d(t',s')\iff t'=t+d\land s\rTo_\hs^d s'$.
\end{proposition}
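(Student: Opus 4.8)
The plan is to prove the biconditional by unfolding the definition of $t(\hs)=(F',G')$ from Def~\ref{def-HT} and the definition of the induced timed transition relation from Def~\ref{def-TR}, handling the two clauses ($d=0$ jump, $d>0$ flow) separately and showing that in each case the transition for $t(\hs)$ decomposes exactly as the conjunction $t'=t+d \land s\rTo_\hs^d s'$.

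For the case $d=0$: by Def~\ref{def-TR}, $(t,s)\rTo_{t(\hs)}^0(t',s')$ holds iff $\rel{G'}{(t,s)}{(t',s')}$. By the definition of $G'$, this holds iff $t'=t$ and $\rel{G}{s}{s'}$, which is exactly $t'=t+0 \land s\rTo_\hs^0 s'$. This direction is essentially immediate.

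For the case $d>0$: by Def~\ref{def-TR}, $(t,s)\rTo_{t(\hs)}^d(t',s')$ holds iff there is a continuous $g:\Top([0,d],\Real\X\State)$ with $\Der{g}$ defined and continuous on $(0,d)$, with $g(0)=(t,s)$, $g(d)=(t',s')$, and $\rel{F'}{g(\tau)}{\Der{g}(\tau)}$ for all $\tau:(0,d)$. Write $g(\tau)=(g_1(\tau),g_2(\tau))$. The definition of $F'$ forces $\Der{g_1}(\tau)=1$ on $(0,d)$ and $\rel{F}{g_2(\tau)}{\Der{g_2}(\tau)}$. From $\Der{g_1}\equiv 1$ and $g_1(0)=t$ we get $g_1(\tau)=t+\tau$ (using that a differentiable function with derivative $1$ on $(0,d)$ and given value at $0$, continuous on $[0,d]$, is determined — a mean value theorem argument), hence $t'=g_1(d)=t+d$; and $g_2$ is then exactly a realizer for $s\rTo_\hs^d s'$. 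Conversely, given $t'=t+d$ and a realizer $f$ for $s\rTo_\hs^d s'$, set $g(\tau)=(t+\tau,f(\tau))$, which is the required realizer for $t(\hs)$. The one-to-one correspondence between realizers $f$ for $\hs$ and realizers $g$ for $t(\hs)$ (via $g=(t+\cdot\,,f)$) makes both directions routine once this correspondence is set up.

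The main obstacle, such as it is, is the small real-analysis lemma that $\Der{g_1}\equiv 1$ on $(0,d)$ together with continuity of $g_1$ on $[0,d]$ and $g_1(0)=t$ forces $g_1(\tau)=t+\tau$ throughout $[0,d]$; this is a standard consequence of the mean value theorem applied to $g_1(\tau)-\tau$, but it is worth stating explicitly since the flow relation is only required to hold on the open interval $(0,d)$. Everything else is bookkeeping: matching up the component-wise structure of $F'$ and $G'$ with the clauses of Def~\ref{def-TR}, and checking that the projection/pairing of realizers preserves continuity and differentiability, which is immediate since these are taken coordinatewise in the product Banach space $\Real\X\State$.
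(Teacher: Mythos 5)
Your proof is correct and follows essentially the same route as the paper's: split on $d=0$ versus $d>0$ and, in the flow case, observe that any realizer for $t(\hs)$ must have the form $x\mapsto(t+x,f(x))$ with $f$ a realizer for $\hs$, and conversely. The only difference is that you make explicit (via the mean value theorem) the step forcing the clock component to equal $t+\tau$, which the paper states without elaboration; this is a welcome clarification but not a different argument.
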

\begin{proof}
Only the case $d>0$ is non-trivial.
\begin{itemize}
\item If $f':\Top([0,d],\Real\X\State)$ realizes the transition $(t,s)\rTo_{t(\hs)}^d(t',s')$, then
$f'(x)=(t+x,f(x))$ for a unique $f:\Top([0,d],\State)$, and moreover this $f$ realizes the transition $s\rTo_\hs^d s'$.
\item Conversely, if $f:\Top([0,d],\State)$ realizes $s\rTo_\hs^d s'$, then
$f'(x)=(t+x,f(x))$ realizes $(t,s)\rTo_{t(\hs)}^d(t',s')$.\qed
\end{itemize}
\end{proof}

\section{Evolution and Reachability}\label{sec-reach}
Transition systems (TS for short) provide the main formalism for modeling discrete systems.
The formalism does not mention time explicitly, but it assumes that time is discrete, and each transition takes one time unit (or alternatively, it abstracts from time and describes only the order of discrete state changes).

Given a TS $(\State,\to)$, ie, a binary relation $\to$ on a set $\State$ (aka a directed graph),
we identify the \emph{discrete time line} with the set $\Nat$ of natural numbers and define the following notions related to the TS.
\begin{itemize}
\item A trajectory is a map $f:\Set([0,n],\State)$ such that $\forall i<n.f(i)\to f(i+1)$ for some $n:\Nat$,
or equivalently, a path $f:\State^+$ in the graph. The length of $f$ is $n$ and $f(0)$ is its starting state.
\item The evolution map $\Ef{}:\PS(\State)\to \PS(\Nat\X\State)$ is $\Ef{}(I) \defeq \{(n,s')|\exists s:I.s\to^n s'\}$, or equivalently the union of (the graphs of) all trajectories starting from the set $I$ of initial states.
Therefore, $\Ef{}(I)$ says at what time a state is reached, but forgets the trajectories used to reach it.
However, when $\to$ is deterministic, there is at most one trajectory of length $n$ from $s$ to $s'$,
which can be recovered from $\Ef{}(\{s\})$.
\item The reachability map $\Rf{}:\PS(\State)\to \PS(\State)$ is $\Rf{}(I) \defeq \{s'|\exists s:I.s\to^* s'\}$, or equivalently $\{s'|\exists n.(n,s'):\Ef{}(I)\}$.
Therefore, $\Rf{}(I)$ says whether a state is reachable from $I$, but forgets at which time instances it is reached.
\end{itemize}
For TTTS (and HS) one would like to reuse as much as possible the theory available for TS. The main point of this section is that naive reuse can result under-approximating what is reachable in finite time. To address this problem, we present a solution that computes an over-approximation (see Sec~\ref{sec-inclusions}).  This solution exploits
the topological structure of the state space $\State$ and the continuous time line $\Time$.

We choose to cast analyses (eg reachability) as monotonic maps (like $\Rf{}$)  rather than as relations (like $\to^*$). This becomes essential in Def~\ref{def-SR} and for defining approximability (Sec~\ref{sec-approx}) and robustness (Sec~\ref{sec-robust}) of an analysis.

\begin{definition}\label{def-R}
The \textbf{evolution} map $\Ef{}:\PS(\State)\to \PS(\Time\X\State)$ and
the \textbf{reachability} map $\Rf{}:\PS(\State)\to \PS(\State)$ for a TTTS $(\State,\rTo)$ are:
\begin{itemize}
\item $\Ef{}(I)\defeq$ the smallest $S:\PS(\Time\X\State)$ such that $\{0\}\X I\subseteq S$ and
$S$ is \emph{closed} wrt timed transitions, ie, $(t,s)\in S\land s\rTo^d s'\implies (t+d,s')\in S$.
\item $\Rf{}(I)\defeq$ the smallest $S:\PS(\State)$ such that $I\subseteq S$ and\\
$S$ is \emph{closed} wrt transitions, ie, $s\in S\land s\rTo s'\implies s'\in S$.
\end{itemize}
We denote with $\Ef{\hs}$ and $\Rf{\hs}$ the evolution and reachability maps for the TTTS induced by the HS $\hs$.
\end{definition}
\begin{remark}
The "f" in $\Ef{}$ and $\Rf{}$ stands for "finite", because these maps consider only states that are reachable in \textbf{finitely many} transitions.
There is an important difference between discrete systems and continuous/hybrid systems.
In a discrete (time) system the transition relation suffices to define trajectories, the evolution, and the reachability maps.
In a continuous (time) system: to define trajectories, the structure of a HS is needed; to define the evolution map, the timed transition relation suffices; and to define the reachability map, the transition relation suffices.
\end{remark}
\begin{theorem}\label{thm-fRE}
The following properties hold:
\begin{enumerate}
\item $\Ef{}$ is monotonic, ie, $I_0\subseteq I_1\implies\Ef{}(I_0)\subseteq\Ef{}(I_1)$, and preserves unions, ie,
$\forall K\subseteq\PS(\State).\Ef{}(\cup K)=\cup\{\Ef{}(I)|I:K\}$.
\item $\Rf{}$ is monotonic, preserves unions, is a closure, ie, $I\subseteq \Rf{}(I)=\Rf{}^2(I)$,
and $\pi(\Ef{}(I))=\Rf{}(I)$.
\item
If $\hs$ is a HS on $\State$, then
$\forall I:\PS(\State).\Ef{\hs}(I)=\Rf{t(\hs)}(\{0\}\X I)$ and\\
$\forall J:\PS(\Real\X\State).\pi(\Rf{t(\hs)}(J))=\Rf{\hs}(\pi(J))$.
\end{enumerate}
Here, $\pi:\Real\X\State\rTo\State$ is $\pi(t,s)\defeq s$,
and $\pi(J)$ is the image of $J\subseteq\Real\X\State$.
\end{theorem}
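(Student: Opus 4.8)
The plan is to exploit a single structural feature common to all four maps: each of $\Ef{}$, $\Rf{}$, $\Ef{\hs}$, $\Rf{t(\hs)}$ sends its argument (the "seed": $\{0\}\X I$, $I$, $\{0\}\X I$, or $J$) to the \emph{least} subset of the relevant ambient space that contains the seed and is closed under a designated monotone propagation rule --- closure under timed transitions $\rTo^d$ for $\Ef{}$ and $\Ef{\hs}$, closure under transitions $\rTo$ for $\Rf{}$ and $\Rf{\hs}$, closure under $\rTo_{t(\hs)}$ for $\Rf{t(\hs)}$. Such a least set exists because closed sets are closed under arbitrary intersection and the ambient space is itself closed. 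The consequence I will use repeatedly is the minimality principle: to prove an inclusion ``(one of these least sets) $\subseteq C$'' it suffices to check that $C$ contains the seed and is closed under the same rule.

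For items 1 and 2: monotonicity of $\Ef{}$ and $\Rf{}$ is immediate, since enlarging the seed leaves the larger output as a closed set above the smaller seed. Union preservation: $\supseteq$ is monotonicity, and $\subseteq$ holds because a union of sets each closed under a propagation rule is again closed under it (the closure condition is quantified pointwise over members), so the union of the outputs is a closed set above the union of the seeds; the case $K=\emptyset$ just says the map sends $\emptyset$ to $\emptyset$. Extensivity $I\subseteq\Rf{}(I)$ is by definition; idempotence follows since $\Rf{}(I)$ is itself a $\rTo$-closed set containing $\Rf{}(I)$, giving $\Rf{}^2(I)\subseteq\Rf{}(I)$, while the reverse inclusion is extensivity at $\Rf{}(I)$. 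Finally, for $\pi(\Ef{}(I))=\Rf{}(I)$: the inclusion $\subseteq$ follows by minimality of $\Ef{}(I)$ applied to $\Time\X\Rf{}(I)$, which contains $\{0\}\X I$ and is $\rTo^d$-closed because $s\rTo^d s'$ implies $s\rTo s'$ and $\Rf{}(I)$ is $\rTo$-closed; and $\supseteq$ follows by minimality of $\Rf{}(I)$ applied to $\pi(\Ef{}(I))$, which contains $I$ (as $\{0\}\X I\subseteq\Ef{}(I)$) and is $\rTo$-closed (if $(t,s)\in\Ef{}(I)$ and $s\rTo^d s'$ then $(t+d,s')\in\Ef{}(I)$).

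For item 3 the only extra ingredient is Prop~\ref{thm-HT}, which identifies the transitions of $t(\hs)$ with the timed transitions of $\hs$ together with the clock update $t'=t+d$. Hence $\Ef{\hs}(I)$ is $\rTo_{t(\hs)}$-closed and contains $\{0\}\X I$, so minimality of $\Rf{t(\hs)}(\{0\}\X I)$ gives one inclusion, while $\Rf{t(\hs)}(\{0\}\X I)$ is closed under the timed transitions of $\hs$ and contains $\{0\}\X I$, so minimality of $\Ef{\hs}(I)$ gives the other; since the clock starts at $0$ and transitions add only nonnegative durations, $\Rf{t(\hs)}(\{0\}\X I)\subseteq\Time\X\State$, which reconciles the ambient spaces $\Real\X\State$ and $\Time\X\State$. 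Similarly, $\pi(\Rf{t(\hs)}(J))=\Rf{\hs}(\pi(J))$: the inclusion $\subseteq$ is minimality of $\Rf{t(\hs)}(J)$ against $\Real\X\Rf{\hs}(\pi(J))$, which contains $J$ and is $\rTo_{t(\hs)}$-closed by Prop~\ref{thm-HT} and $\rTo_\hs$-closure of $\Rf{\hs}(\pi(J))$; and $\supseteq$ is minimality of $\Rf{\hs}(\pi(J))$ against $\pi(\Rf{t(\hs)}(J))$, which contains $\pi(J)$ and is $\rTo_\hs$-closed, again by Prop~\ref{thm-HT}.

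I do not expect a genuine obstacle: every step is the same minimality argument, so the work is bookkeeping --- tracking which propagation rule is active at each use, checking that the witness sets ($\Time\X\Rf{}(I)$, $\Real\X\Rf{\hs}(\pi(J))$, and the projections $\pi(\Ef{}(I))$, $\pi(\Rf{t(\hs)}(J))$) really are closed under the relevant rule, and handling the degenerate cases ($K=\emptyset$, the $\Time$-versus-$\Real$ mismatch). The only mildly delicate point is the last identity of item 3, where $J$ is an arbitrary subset of $\Real\X\State$ with no constraint on its time coordinates, so one must avoid silently assuming $J\subseteq\Time\X\State$.
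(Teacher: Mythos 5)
Your proposal is correct and follows essentially the same route as the paper: the paper phrases the maps as least prefix-points of the monotone operators $F_I$, $G_I$, $G^t_J$ and invokes their universal property, which is exactly your minimality principle, with the same witness sets ($\Time\X\Rf{}(I)$, $\pi(\Ef{}(I))$, $\Real\X\Rf{\hs}(\pi(J))$, $\pi(\Rf{t(\hs)}(J))$) and the same use of Prop~\ref{thm-HT} in item 3. Your extra care about the $\Time$-versus-$\Real$ ambient spaces is a minor bookkeeping refinement the paper leaves implicit.
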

\begin{proof}See \ref{sec-proofs}.\end{proof}
A pair $(t,s')$ is in $\Ef{\hs}(I)$ exactly when $s'$ is reached at time $t$ in finitely many transitions starting from some $s:I$.
In \emph{Zeno systems} there are states that are reached at a finite time $t$, but not in a finite number of transitions.
Therefore, $\Ef{\hs}$ and $\Rf{\hs}$ may under-approximate what we would like to compute.

\begin{definition}\label{def-Zeno}
A \textbf{Zeno behavior} of $\hs$ is a sequence $((d_n,s_n)|n:\omega)$ in $\Time\X\State$ such that:
\begin{enumerate}
\item $\forall n.s_n\rTo_\hs^{d_n}s_{n+1}$,
\item $d\defeq\displaystyle\sum_{n:\omega} d_n$ is defined and finite, and,
\item the sequence has infinitely many jumps, ie, $\{n|d_n=0\}$ is infinite.
\end{enumerate}
The last requirement excludes \emph{fragmentations} of a flow transition $s\rTo{\hs}^d s'$,
ie, sequences $((f(t_n),t_{n+1}-t_n)|n:\omega)$, where
$f:\Top([0,d],\State)$ realizes $s\rTo{\hs}^d s'$, and
$(t_n|n:\omega)$ is a strictly increasing sequence with $t_0=0$ and $\sup_{n:\omega} t_n=d$.

The accumulation points of $(s_n|n:\omega)$ in the topological space $\State$ are called the \textbf{Zeno points},
and $d$ is called the \textbf{Zeno time}, since it is the time needed to reach a Zeno point from $s_0$.
\end{definition}
$\hs_B$ of Example~\ref{ex-BB} is the classical case of a HS with Zeno behavior.
When $b$ is in the interval $(-1,0)$, the stop state $s=(0,0)$ is reached 
in finite time from $s_0=(0,v)$ with $v<0$, but
after infinitely many bounces (see Fig~\ref{fig:bb-trajectories} in Sec~\ref{sec-figs}).
When $b$ is in the interval $(0,1)$, $\hs_B$ has a chattering Zeno behaviour, ie,
the stop state is reached after infinitely many instantaneous slowdowns.
On the other hand, Prop~\ref{thm-BB1} shows that the stop state is not in $\Rf{\hs_B}(\{s_0\})$ when $b\not=0$.
\begin{proposition}\label{thm-BB1}
Let $S\defeq\{(h,v)|h\geq 0\land E(h,v)>0\}$, where $E(h,v)=h+\frac{v^2}{2}$ is
the energy in state $(h,v)$. Then, $\Rf{\hs_B}(S)=S$, provided that $b\not=0$.
\end{proposition}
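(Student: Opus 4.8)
The plan is to prove the two inclusions separately. The inclusion $S\subseteq\Rf{\hs_B}(S)$ holds for any set by Def~\ref{def-R}, so the real work is $\Rf{\hs_B}(S)\subseteq S$. Since, by Def~\ref{def-R}, $\Rf{\hs_B}(S)$ is the \emph{smallest} superset of $S$ that is closed wrt the transition relation $\rTo_{\hs_B}$, it is enough to check that $S$ is \emph{itself} closed wrt $\rTo_{\hs_B}$: once we know that, minimality of $\Rf{\hs_B}(S)$ forces $\Rf{\hs_B}(S)\subseteq S$ and we are done. Concretely, I would fix $(h,v)\in S$ and a transition $(h,v)\rTo_{\hs_B}^d(h',v')$ and show $(h',v')\in S$, by case analysis on the three families of transitions enumerated in Example~\ref{ex-BB}.

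For a \textbf{flow} transition ($d>0$, with $0\leq h$, $0\leq h'=h+v*d-\frac{d^2}{2}$ and $v'=v-d$), nonnegativity of $h'$ is already part of the transition, and a short computation shows that the energy is conserved,
\[h'+\frac{(v')^2}{2}=h+v*d-\frac{d^2}{2}+\frac{(v-d)^2}{2}=h+\frac{v^2}{2},\]
so $E(h',v')=E(h,v)>0$ and hence $(h',v')\in S$. For a \textbf{bounce} ($d=0$, $h=h'=0$, $v<0$, $v'=b*v$) we have $h'=0\geq 0$ and $E(0,v')=\frac{b^2 v^2}{2}$, which is strictly positive precisely because $v\neq 0$ (as $v<0$) \emph{and} $b\neq 0$; this is the single place where the hypothesis $b\neq 0$ enters. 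For a \textbf{kick} ($(0,0)\rTo^0_{\hs_B}(0,V)$), the source state $(0,0)$ has $E(0,0)=0$, so $(0,0)\notin S$; thus no kick transition can originate from a state of $S$, and this family contributes no constraint. With all three cases checked, $S$ is closed wrt $\rTo_{\hs_B}$, which yields $\Rf{\hs_B}(S)\subseteq S$ and therefore equality.

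I do not expect a deep obstacle here; the only thing to be careful about is the bookkeeping of which transitions can actually originate from inside $S$ (the kick cannot, because its source has zero energy), together with pinpointing why $b\neq 0$ is exactly what is needed --- for $b=0$ the bounce would map $(0,v)\in S$ (with $v<0$) to $(0,0)\notin S$, so $S$ would fail to be closed and $\Rf{\hs_B}(S)$ would be strictly larger than $S$. Everything else is routine: the energy is conserved along flows and is multiplied by $b^2$ at a bounce, and that is all the computation the argument requires.
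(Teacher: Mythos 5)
Your proposal is correct and follows essentially the same route as the paper: both prove that $S$ is closed under the transition relation $\rTo_{\hs_B}$ by the same three-case analysis (energy conservation along flows, $E(0,b*v)>0$ at a bounce since $b\neq 0$, and the kick being vacuous because $(0,0)\notin S$), and then invoke minimality of $\Rf{\hs_B}(S)$. Your write-up merely makes the minimality step and the energy computation more explicit than the paper does.
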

\begin{proof}
We prove that $S$ is closed wrt the transition relation $\rTo_{\hs_B}$ by case analysis. There are three cases:
\begin{itemize}
\item bounce, ie, $(0,v)\rTo (0,b*v)$ with $v<0$: $(0,v)\in S$, because $E(0,v)=\frac{v^2}{2}>0$, and $(0,b*v)\in S$, because $E(0,b*v)>0$ when $b\not=0$;
\item ball kicked, ie, $(0,0)\rTo (0,V)$: there is nothing to prove, as $(0,0)\not\in S$;
\item move, ie, $(h,v)\rTo (h',v')$ with $h,h'\geq 0\land v>v'\land E(h,v)=E(h',v')$: $(h,v)\in S\implies (h',v')\in S$ holds, because the energy stays constant. \qed
\end{itemize}
\end{proof}
We propose a key change to Def~\ref{def-R} that exploits the topology on $\State$ and $\Time$
by considering reachable also a state that is \emph{arbitrarily close} to reachable states.
\begin{definition}[Safe maps]\label{def-SR}
Let $\CS(\State)$ be the set of \textbf{closed subsets} of a topological space $\State$.
The \textbf{safe evolution} map $\Es{}:\CS(\State)\to \CS(\Time\X\State)$ and
the \textbf{safe reachability} map $\Rs{}:\CS(\State)\to \CS(\State)$ for a TTTS $(\State,\rTo)$ are:
\begin{itemize}
\item $\Es{}(I)\defeq$ the smallest $S:\CS(\Time\X\State)$ such that $\{0\}\X I\subseteq S$ and
\emph{closed} wrt timed transitions.
\item $\Rs{}(I)\defeq$ the smallest $S:\CS(\State)$ such that $I\subseteq S$, and
\emph{closed} wrt transitions.
\end{itemize}
We denote with $\Es{\hs}$ and $\Rs{\hs}$ the safe evolution and safe reachability maps for the TTTS induced by the HS $\hs$, respectively.
\end{definition}
\begin{remark}
If $s_0:I$ and $((d_n,s_n)|n:\omega)$ is a Zeno behavior of $\hs$, then the set $S=\Rs{\hs}(I)$
includes every Zeno point $s$ (similarly $\Es{\hs}(I)$ includes $(d,s)$, where $d$ is the Zeno time and $s$ is a Zeno point).
In fact, the sequence $(s_n|n:\omega)$ is included in $S$, and all its accumulation points must be in $S$, because $S$ is closed.
The set $S$ also includes asymptotically reachable points---ie, 
accumulation points of a sequence $(s_n|n:\omega)$ such that
$\forall n.s_n\rTo_\hs^{d_n}s_{n+1}$ and $\sum_{n:\omega} d_n=+\infty$---that may not be reachable in finite time.

The safe maps include other points that should be considered reachable in finite time,
but are not reachable in a finite number of transitions.  For instance, consider a
HS $\hs=(F,\emptyset)$ on $\Real^n$ that can only flow---thus, it cannot have Zeno behaviors---and a continuous map $f:\Top([0,d),\Real^n)$ such that:
\begin{itemize}
\item the derivative $\Der{f}$ of $f$ is defined and continuous in $(0,d)$,
\item $\forall t:(0,d).\rel{F}{f(t)}{\Der{f}(t)}$, but,
\item there is no way to extend $f$ to a continuous map on $[0,d]$.
\end{itemize}
If $f(0):I$, then $\Es{\hs}(I)$ includes $\{(t,f(t))|t<d\}$ and also the pairs $(d,s)$
with $s$ accumulation point of $f(t)$ as $t\to d$.
For instance, $f(x)=x*\sin(\frac{1}{x-1})$ satisfies the above properties for $d=1$.
\end{remark}
There is an analogue of Thm~\ref{thm-fRE} for the safe maps, but with weaker properties,
mainly because the set of closed subsets is closed only wrt finite unions.
\begin{theorem}\label{thm-sRE}
The following properties hold:
\begin{enumerate}
\item $\Es{}$ is monotonic and preserves finite unions.
\item $\Rs{}$ is monotonic, preserves finite unions, is a closure,
and\\$\pi(\Es{}(I))\subseteq\Rs{}(I)$.
\item
$\forall I:\PS(\State).\Ef{}(I)\subseteq\Es{}(\cl{I})$, and
$\forall I:\PS(\State).\Rf{}(I)\subseteq\Rs{}(\cl{I})$.
\item
If $\hs$ is a HS on $\State$, then
$\forall I:\CS(\State).\Es{\hs}(I)=\Rs{t(\hs)}(\{0\}\X I)$, and\\
$\forall J:\CS(\Real\X\State).\pi(\Rs{t(\hs)}(J))\subseteq\Rs{\hs}(\cl{\pi(J)})$.
\end{enumerate}
Here, $\cl{S}:\CS(\State)$ is the \textbf{closure} of $S:\PS(\State)$, ie,
the smallest $S':\CS(\State)$ such that $S\subseteq S'$.
\end{theorem}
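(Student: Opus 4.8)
The plan is to lean throughout on two stability facts: both topological closedness and closure with respect to (timed) transitions are preserved by arbitrary intersections, so for every seed the ``smallest closed set that is closed wrt (timed) transitions'' exists---the ambient space is always one such. Hence the only technique needed is a \emph{candidate argument}: to prove $\Es{}(I)\subseteq S$ it suffices that $S$ be topologically closed, contain $\{0\}\X I$, and be closed wrt timed transitions; symmetrically for $\Rs{}$ (seed $I$, closure wrt transitions), and for $\Ef{},\Rf{}$ the same but dropping topological closedness. I will also use that finite unions of closed sets are closed, that unions of sets closed wrt transitions are closed wrt transitions, that $s\rTo^d s'$ implies $s\rTo s'$, and that for closed $C\subseteq\State$ the cylinders $\Time\X C$ and $\Real\X C$ are closed in $\Time\X\State$ and $\Real\X\State$ respectively.

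For (1)--(3): monotonicity holds because, when $I_0\subseteq I_1$, the set $\Es{}(I_1)$ (resp.\ $\Rs{}(I_1)$) is a candidate for $\Es{}(I_0)$ (resp.\ $\Rs{}(I_0)$); preservation of finite unions follows since $\Es{}(I_0)\cup\Es{}(I_1)$ is closed, is closed wrt timed transitions, and contains $\{0\}\X(I_0\cup I_1)$, the reverse inclusion being monotonicity, and likewise for $\Rs{}$; $\Rs{}$ is a closure because $I\subseteq\Rs{}(I)$ by construction while $\Rs{}(I)$ is itself a candidate for $\Rs{}(\Rs{}(I))$; $\pi(\Es{}(I))\subseteq\Rs{}(I)$ because $\Time\X\Rs{}(I)$ is closed, contains $\{0\}\X I$, and is closed wrt timed transitions (using $s\rTo^d s'\Rightarrow s\rTo s'$ and $t+d\in\Time$ when $t,d\geq 0$), so $\Es{}(I)\subseteq\Time\X\Rs{}(I)$; and finally $\Es{}(\cl{I})$ (resp.\ $\Rs{}(\cl{I})$) contains $\{0\}\X\cl{I}\supseteq\{0\}\X I$ (resp.\ $\cl{I}\supseteq I$) and is transition-closed, hence is a candidate for $\Ef{}(I)$ (resp.\ $\Rf{}(I)$), giving the inclusions of (3).

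For (4): observe first that $\Time\X\State$ is closed in $\Real\X\State$, so the closed subsets of $\Time\X\State$ are precisely the closed subsets of $\Real\X\State$ that lie inside $\Time\X\State$; and, by Prop~\ref{thm-HT}, a subset of $\Time\X\State$ is closed wrt the timed transitions of $\hs$ iff it is closed wrt $\rTo_{t(\hs)}$. Now $\Time\X\State$ is itself closed, closed wrt $\rTo_{t(\hs)}$, and contains $\{0\}\X I$, so $\Rs{t(\hs)}(\{0\}\X I)\subseteq\Time\X\State$; thus $\Rs{t(\hs)}(\{0\}\X I)$ is a candidate for $\Es{\hs}(I)$, while conversely $\Es{\hs}(I)$---closed in $\Real\X\State$, containing $\{0\}\X I$, and closed wrt $\rTo_{t(\hs)}$ by Prop~\ref{thm-HT}---is a candidate for $\Rs{t(\hs)}(\{0\}\X I)$; the two inclusions give the equality. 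For the second claim, $\Real\X\Rs{\hs}(\cl{\pi(J)})$ is closed, contains $J$ (since $J\subseteq\Real\X\pi(J)\subseteq\Real\X\cl{\pi(J)}\subseteq\Real\X\Rs{\hs}(\cl{\pi(J)})$), and is closed wrt $\rTo_{t(\hs)}$ because, by Prop~\ref{thm-HT}, a $\rTo_{t(\hs)}$-step out of $(t,s)$ induces a transition $s\rTo_\hs s'$ and $\Rs{\hs}(\cl{\pi(J)})$ is transition-closed; by minimality $\Rs{t(\hs)}(J)\subseteq\Real\X\Rs{\hs}(\cl{\pi(J)})$, and applying $\pi$ gives the inclusion.

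The stability-under-intersection facts and the candidate checks are routine; the one place that needs real care is the handling of $\pi$. Since projections do not preserve closedness, the projection statements in (2) and (4) cannot be obtained by projecting the constructions themselves, but only by pulling the target back to a cylinder $\Time\X C$ or $\Real\X C$ and verifying that this cylinder is simultaneously topologically closed and (timed-)transition-closed. In Part (4) it is exactly Prop~\ref{thm-HT} that lets the timed transitions of $\hs$ be matched with $\rTo_{t(\hs)}$, and the identification of the closed subsets of $\Time\X\State$ with the closed subsets of $\Real\X\State$ lying inside $\Time\X\State$ that upgrades the expected inclusion to the stated equality $\Es{\hs}(I)=\Rs{t(\hs)}(\{0\}\X I)$.
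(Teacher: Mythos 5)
Your proposal is correct and follows essentially the same route as the paper's proof: the ``candidate plus minimality'' argument is just the universal property of the least prefix-points that the paper works with explicitly (via the operators $F'_I$, $G'_I$, $G^t_J$), and you use the same key devices --- the cylinder $\Time\X\Rs{}(I)$ for $\pi(\Es{}(I))\subseteq\Rs{}(I)$, Prop~\ref{thm-HT} to identify timed transitions of $\hs$ with $\rTo_{t(\hs)}$, and the cylinder $\Real\X\Rs{\hs}(\cl{\pi(J)})$ in part (4). The only cosmetic differences are that you prove $\Rf{}(I)\subseteq\Rs{}(\cl{I})$ directly by a candidate rather than via $\Rf{}=\pi\circ\Ef{}$, and you make explicit the (implicit in the paper) identification of $\CS(\Time\X\State)$ with the closed subsets of $\Real\X\State$ contained in $\Time\X\State$.
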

\begin{proof} See \ref{sec-proofs}.\end{proof}

\subsection{Summary of inclusion relations}\label{sec-inclusions}

We provide a summary of the inclusion relations among the sets computed by the four maps defined in this section.
Given a hybrid system $\hs$ on $\State$ and a subset $I:\PS(\State)$ of initial states,
there are two subsets $E:\PS(\Time\X\State)$ and $R:\PS(\State)$ \emph{informally defined} as:
\begin{itemize}
\item $E$: the set of $(t,s)$ such that $s$ is reached at time $t$, ie, there is a \emph{trajectory} of $\hs$ starting from a state in $I$ and reaching $s$ at time $t$.
\item $R$: the set of states reachable (from $I$) in finite time, ie, $R=\pi(E)$.
\end{itemize}
The monotonic maps in Def~\ref{def-R}~and~\ref{def-SR} allow to define four subsets:
\begin{itemize}
\item $\Ef{\hs}(I)$: the set of $(t,s)$ such that $s$ is reached at time $t$ in finitely many transitions.
\item $\Rf{\hs}(I)$: the set of states reachable in finitely many transitions, ie, $\pi(\Ef{\hs}(I))$.
\item $\Es{\hs}(\cl{I}):\CS(\Time\X\State)$ a closed over-approximation of $E$.
\item $\Rs{\hs}(\cl{I}):\CS(\State)$ a closed over-approximation of $R$.
\end{itemize}
When $I:\CS(\State)$, the inclusion relations among these six subsets are:
\[\commdiag{
\Ef{\hs}(I)&\rIntoA&E\\
\dIntoA&&\dIntoA\\
\cl{\Ef{\hs}(I)}&\rIntoA&\Es{\hs}(I)\\
}\qquad\commdiag{
\Rf{\hs}(I)&\rIntoA&R&\rIntoA&\pi(\Es{\hs}(I))\\
\dIntoA&&&&\dIntoA\\
\cl{\Rf{\hs}(I)}&&\rIntoA&&\Rs{\hs}(I)\\
}\] 
By suitable choices of closed HS $\hs$ on $\Real$ and singletons $I$ we show that no other inclusion holds.
In particular, $\Ef{\hs}$ and $\Rf{\hs}$ may compute proper under-approximations,
while $\Es{\hs}$ and $\Rs{\hs}$ may compute proper over-approximations.
\begin{enumerate}
\item
$\Ef{\hs}(I)\subset\cl{\Ef{\hs}(I)}\subset E=\Es{\hs}(I)$ and
$\Rf{\hs}(I)\subset\cl{\Rf{\hs}(I)}\subset R=\Rs{\hs}(I)$.\\
Take $\hs=(\emptyset,G)$ with $G\defeq\{(x,x/2)|0\leq x\}\uplus\{(0,2)\}$ and $I=\{1\}$, then
\begin{itemize}
\item $\Rf{\hs}(I)=\{2^{-n}|n:\Nat\}$ and
$\Ef{\hs}(I)=\{0\}\X\Rf{\hs}(I)$
\item $\cl{\Rf{\hs}(I)}=\{2^{-n}|n:\Nat\}\uplus\{0\}$ and
$\cl{\Ef{\hs}(I)}=\{0\}\X\cl{\Rf{\hs}(I)}$
\item $\Rs{\hs}(I)=\{2^{-n}|n:\Nat\}\uplus\{0,2\}$ and
$\Es{\hs}(I)=\{0\}\X\Rs{\hs}(I)$
\end{itemize}

\item $\Ef{\hs}(I)=E\subset\cl{\Ef{\hs}(I)}\subset\Es{\hs}(I)$ and
$\Rf{\hs}(I)=R\subset\cl{\Rf{\hs}(I)}\subset\pi(\Es{\hs}(I))$.\\
Take $\hs=(\emptyset,G)$ with $G\defeq\{(2,x)|x\geq 2\}\uplus\{(x,1/x)|x\geq 2\}\uplus\{(0,1)\}$ and $I=\{2\}$, then
\begin{itemize}
\item $\Rf{\hs}(I)=[2,+\infty)\uplus (0,1/2]$ and
$\Ef{\hs}(I)=\{0\}\X\Rf{\hs}(I)$
\item $\cl{\Rf{\hs}(I)}=[2,+\infty)\uplus [0,1/2]$ and
$\cl{\Ef{\hs}(I)}=\{0\}\X\cl{\Rf{\hs}(I)}$
\item $\Rs{\hs}(I)=[2,+\infty)\uplus [0,1/2]\uplus\{1\}$ and
$\Es{\hs}(I)=\{0\}\X\Rs{\hs}(I)$
\end{itemize}

\item $\Rf{\hs}(I)=R=\pi(\Es{\hs}(I))\subset\cl{\Rf{\hs}(I)}\subset\Rs{\hs}(I)$.\\
Take $\hs=(F,\{(0,2)\})$ with $F\defeq\{(x,-x)|0\leq x\}$ and $I=\{1\}$, then
\begin{itemize}
\item $\Ef{\hs}(I)=E=\Es{\hs}(I)=\{(t,e^{-t})|0\leq t\}$
\item $\Rf{\hs}(I)=(0,1]$
\item $\cl{\Rf{\hs}(I)}=[0,1]$
\item $\Rs{\hs}(I)=[0,2]$
\end{itemize}
\end{enumerate}

\section{A Framework for Approximability}\label{sec-approx}

All maps introduced in Sec~\ref{sec-reach} (see Def~\ref{def-R}~and~\ref{def-SR}) are monotonic maps between complete lattices. Thus, they \emph{live} in the \emph{poset-enriched} category $\Po$ of complete lattices (more generally of posets) and monotonic maps.
In fact, one can stay within $\Po$, by defining a complete lattice of hybrid systems on $\State$ (this can be done also for timed transition relations), by exploiting its cartesian closed structure, and by using the order relation to express when something is an over- or under-approximation of something else.
$\Po$ is also the natural setting for defining and comparing \emph{abstract interpretations} \cite{cousot1992abstract}, with \emph{adjunctions} giving a systematic way to relate more concrete to more abstract interpretations.

Therefore, $\Po$ will be used as a framework in which to place and compare the reachability (and evolution) maps introduced so far, as well as their variants.
\begin{definition}\label{def-po-enriched}
A \textbf{poset-enriched} category $\C$ (see \cite{kelly1982basic}) consists of:
\begin{itemize}
\item a class of objects, notation $X:\C$;
\item a \textbf{poset} $\C(X,Y)$ of arrows from $X$ to $Y$, notation  $X\rTo^f Y$ and $f_1\leq f_2$;
\item identities $X\rTo^{\id_X} X$ and composition $X\rTo^{g\circ f}Z$ of composable arrows $X\rTo^f Y\rTo^g Z$
satisfying the usual equations and \textbf{monotonicity}, ie,\\
$f_1\leq f_2:X\to Y\land g_1\leq g_2:Y\to Z\implies (g_1\circ f_1)\leq(g_2\circ f_2):X\to Z$.
\end{itemize}
An \textbf{adjunction} $f\dashv g$ is a pair $X\pile{\rTo^f\\\lTo_g}Y$ of maps st
$f\circ g\leq\id_Y\land\id_X\leq g\circ f$.
If $\C$ has a terminal object $1$ and $h:\C(X,X)$, then:
\begin{itemize}
\item $\mu:\C(1,X)$ is an \textbf{initial algebra} (aka least prefix-point) for $h$ $\defiff$\\
$h\circ \mu\leq \mu$ and $\forall x:\C(1,X).h\circ x\leq x\implies \mu\leq x$.
\item $\nu:\C(1,X)$ is a \textbf{final co-algebra} (aka greatest postfix-point) for $h$ $\defiff$\\
$\nu\leq h\circ \nu$ and $\forall x:\C(1,X).x\leq h\circ x\implies x\leq\nu$.
\end{itemize}
\end{definition}
\begin{remark}
The terminology in Def~\ref{def-po-enriched} comes from Category Theory, in particular from \emph{2-categories},
since posets are a degenerate form of categories.  Ordinary categories amount to a poset-enriched categories where the order on $\C(X,Y)$ is equality.
If $\C$ is poset-enriched, then $\C^{op}$ denotes $\C$ with the direction of arrows reversed, ie, $\C^{op}(X,Y)=\C(Y,X)$, and $\C^{co}$ denotes $\C$ with the order on $\C(X,Y)$ reversed, ie, $\C^{co}(X,Y)=\C(X,Y)^{op}$.  Any notion in $\C$ has a dual notion in $\C^{op}$ and a co-notion in $\C^{co}$, in particular:
\begin{itemize}
\item $f\dashv g$ in $\C$ $\iff$ $g\dashv f$ in $\C^{co}$.
\item $x$ initial algebra for $h$ in $\C$ $\iff$ $x$ final co-algebra for $h$ in $\C^{co}$.
\end{itemize}
\end{remark}
The following facts are instances of more general results valid in 2-categories.
\begin{proposition}\label{thm-fR-unique}
In any poset-enriched category $\C$:
\begin{enumerate}
    \item If $f\dashv g$, then $g$ is uniquely determined by $f$, and when $g$ exists
    it is denoted by $f^R$, and called the \textbf{right adjoint} to $f$.
    \item If $f:X\to Y$ is an isomorphism, then $f\dashv\inv{f}$.
\end{enumerate}
\end{proposition}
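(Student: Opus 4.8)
The plan is to prove both parts using only the monotonicity of composition and the antisymmetry of the hom-posets; no deeper 2-categorical machinery is needed, so this is really a warm-up lemma rather than a substantial result.

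For part (1), I would suppose $f\dashv g$ and $f\dashv g'$, giving the four inequalities $f\circ g\leq\id_Y$, $\id_X\leq g\circ f$, $f\circ g'\leq\id_Y$, and $\id_X\leq g'\circ f$. The key computation is the ``triangle'' chain
\[
g \;=\; \id_X\circ g \;\leq\; (g'\circ f)\circ g \;=\; g'\circ(f\circ g) \;\leq\; g'\circ\id_Y \;=\; g',
\]
where the first inequality uses $\id_X\leq g'\circ f$ composed on the right with $g$, and the second uses $f\circ g\leq\id_Y$ composed on the left with $g'$ --- both steps being instances of monotonicity of composition. Swapping the roles of $g$ and $g'$ gives $g'\leq g$, and antisymmetry of the poset $\C(Y,X)$ yields $g=g'$. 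Hence a right adjoint to $f$, when it exists, is unique, and it is legitimate to denote it $f^R$.

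For part (2), if $f:X\to Y$ is an isomorphism with inverse $\inv{f}$, then by definition $f\circ\inv{f}=\id_Y$ and $\inv{f}\circ f=\id_X$. In particular $f\circ\inv{f}\leq\id_Y$ and $\id_X\leq\inv{f}\circ f$ by reflexivity of the order, which are exactly the two inequalities required for $f\dashv\inv{f}$; combined with part (1) this also shows $f^R=\inv{f}$.

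The only place demanding any care is the bookkeeping in the triangle chain: one must compose the unit inequality $\id_X\leq g'\circ f$ on the \emph{right} (with $g$) and the counit inequality $f\circ g\leq\id_Y$ on the \emph{left} (with $g'$), and then reassociate so that the middle factor $f$ cancels correctly. I do not expect any genuine obstacle; the statement is the degenerate, poset-enriched shadow of the classical uniqueness-of-adjoints argument.
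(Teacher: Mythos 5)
Your proof is correct. The paper itself gives no proof of this proposition---it is stated as an instance of more general results valid in 2-categories---and your triangle-chain computation $g=\id_X\circ g\leq(g'\circ f)\circ g=g'\circ(f\circ g)\leq g'\circ\id_Y=g'$, together with the symmetric inequality and antisymmetry of $\C(Y,X)$, is exactly the standard poset-enriched specialization of that classical argument; part (2) is likewise handled correctly by reflexivity of the hom-order.
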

\begin{definition}\label{def-po-cat}
The poset-enriched category $\Po$ is defined as follows:
\begin{itemize}
\item Objects $X:\Po$ are \textbf{complete lattices}, ie,
posets $(|X|,\leq_X)$ such that every subset $S$ of $|X|$ has a \textbf{sup}, denoted as $\bigsqcup S$. In paricular, $\bigsqcup \emptyset$ is the least element $\bot_X$ of $X$.
\item Arrows $f:\Po(X,Y)$ are \textbf{monotonic} maps $f:|X|\to|Y|$, ie,\\
$\forall x_0,x_1:|X|.x_0\leq_X x_1\implies f(x_0)\leq_Y f(x_1)$ (sups are irrelevant).
\item The poset-enrichment $f\leq g:\Po(X,Y)$ is given by the point-wise order.
\end{itemize}
\end{definition}
\begin{remark}
$\Po$ is \emph{cartesian closed} in the enriched sense, ie, it has a terminal object and the following poset isomorphisms \emph{natural} in $Z$:
\begin{itemize}
\item
$\Po(Z,X\X Y)\cong\Po(Z,X)\X\Po(Z,Y)$, where $X\X Y$ is the cartesian product of the complete lattices $X$ and $Y$.
\item
$\Po(Z,Y^X)\cong\Po(Z\X X,Y)$, where $Y^X$ is the complete lattice $\Po(X,Y)$.
\end{itemize}
Moreover, $f\dashv g$ in $\Po$ exactly when $f$ and $g$ form a \emph{Galois connection}, ie:
\[\forall x:|X|.\forall y:|Y|.x\leq_X g(y)\iff f(x)\leq_Y y.\]
Restricting the objects of $\Po$ to complete lattices allows to characterize the arrows $f:\Po(X,Y)$ that have right adjoints, and implies that every $h:\Po(X,X)$ has an initial algebra and a final co-algebra (the proofs of Thm~\ref{thm-fRE}~and~\ref{thm-sRE} make systematic use of the universal property of initial algebras).
\end{remark}
\begin{proposition}\label{thm-fR-galois}
Given a map $f:\Po(X,Y)$ the following are equivalent:
\begin{itemize}
\item $f$ preserves all sups, ie, $f(\bigsqcup S)=\bigsqcup(f\,S)$ when $S\subseteq|X|$.
\item $f$ has a right adjoint $f^R$, namely, $f^R(y)=\bigsqcup\{x|f(x)\leq_Y y\}$.
\end{itemize}
\end{proposition}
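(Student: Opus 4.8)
The statement to prove is the equivalence, for $f:\Po(X,Y)$, of (i) $f$ preserves all sups and (ii) $f$ has a right adjoint given by the explicit formula. This is the classical ``adjoint functor theorem for posets,'' and in the complete-lattice setting both directions are short.

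For the direction (i)$\Rightarrow$(ii), the plan is to define $g(y)\defeq\bigsqcup\{x\mid f(x)\leq_Y y\}$ — which is well-defined precisely because $X$ is a complete lattice, so this arbitrary sup exists — and then verify the Galois connection $x\leq_X g(y)\iff f(x)\leq_Y y$. For the ``$\Leftarrow$'' half of that biconditional, if $f(x)\leq_Y y$ then $x$ is a member of the set whose sup is $g(y)$, hence $x\leq_X g(y)$; no sup-preservation is needed here. For the ``$\Rightarrow$'' half, suppose $x\leq_X g(y)$; applying $f$ and using monotonicity gives $f(x)\leq_Y f(g(y))$, so it suffices to show $f(g(y))\leq_Y y$. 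Here is where sup-preservation enters: $f(g(y))=f\big(\bigsqcup\{x'\mid f(x')\leq_Y y\}\big)=\bigsqcup\{f(x')\mid f(x')\leq_Y y\}$, and this last sup is a sup of a set of elements each $\leq_Y y$, hence is itself $\leq_Y y$. I should also note in passing that $g$ is monotonic (immediate, since enlarging $y$ enlarges the set being supped), so $g$ is genuinely an arrow of $\Po$. By the remark preceding the proposition, a Galois connection is exactly an adjunction $f\dashv g$ in $\Po$, and by Prop~\ref{thm-fR-unique} the right adjoint is unique, so $g=f^R$.

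For the direction (ii)$\Rightarrow$(i), suppose $f\dashv f^R$ and let $S\subseteq|X|$. One inequality, $\bigsqcup(f\,S)\leq_Y f(\bigsqcup S)$, follows purely from monotonicity of $f$: each $x\in S$ satisfies $x\leq_X\bigsqcup S$, hence $f(x)\leq_Y f(\bigsqcup S)$, and taking the sup over $x\in S$ gives the claim. For the reverse inequality I would use the universal property of the sup together with the adjunction: it suffices to show $\bigsqcup S\leq_X f^R(\bigsqcup(f\,S))$, i.e. that $f^R(\bigsqcup(f\,S))$ is an upper bound of $S$ in $X$. But for each $x\in S$ we have $f(x)\leq_Y\bigsqcup(f\,S)$, which by the Galois connection is equivalent to $x\leq_X f^R(\bigsqcup(f\,S))$; so indeed $f^R(\bigsqcup(f\,S))$ bounds $S$, giving $\bigsqcup S\leq_X f^R(\bigsqcup(f\,S))$ and hence, applying $f$ and the counit $f\circ f^R\leq\id_Y$, $f(\bigsqcup S)\leq_Y f(f^R(\bigsqcup(f\,S)))\leq_Y\bigsqcup(f\,S)$. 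Combining the two inequalities yields equality. (One should also observe the edge case $S=\emptyset$, where the statement reads $f(\bot_X)=\bot_Y$, which is subsumed by the general argument.)

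There is no real obstacle here; the only point requiring care is bookkeeping the direction of each inequality and being explicit that the existence of the defining sup in (i)$\Rightarrow$(ii) is exactly what the completeness of the lattice $X$ buys us — this is the place where ``complete lattice'' rather than merely ``poset'' is essential, and it is worth flagging since the preceding remark already advertised that restricting $\Po$'s objects to complete lattices is what makes this characterization available.
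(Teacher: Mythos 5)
Your proof is correct, and it is the standard argument: the paper itself states Proposition~\ref{thm-fR-galois} without proof (it is the classical adjoint functor theorem for posets, relying on the Galois-connection characterization of adjunctions in $\Po$ given in the preceding remark). Both directions are handled properly --- sup-preservation is used exactly where needed (to get $f(f^R(y))\leq_Y y$), the converse uses only monotonicity plus the unit/counit inequalities, and the appeal to Prop~\ref{thm-fR-unique} correctly justifies writing $f^R$ for the map defined by the explicit formula --- so nothing is missing.
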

For a HS on $\State$, the complete lattices $X$ of interest should have as underlying set a subset of $\PS(\State)$, but there are two choices for $\leq_X$: inclusion $\subseteq$; and reverse inclusion $\supseteq$.
When $\leq_X$ is an \emph{information order}, the correct choice is reverse inclusion, since
a smaller \emph{over-approximation} is more informative.
\begin{definition}\label{def-lattices}
Given a topological space $\State$, define:
\begin{itemize}
\item
$\pL(\State)\defeq(\PS(\State),\leq)$, complete lattice of
subsets of $\State$ ordered by reverse inclusion, ie, $U\leq V\defiff U\supseteq V$. Thus, $\bigsqcup S = \bigcap S$.
\item
$\cL(\State)\defeq(\CS(\State),\leq)$, complete lattice of closed subsets of $\State$ ordered by $\leq$.
\item
$\HS(\State)\defeq\pL(\State^2)\X\pL(\State^2)$, complete lattice of all HS on $\State$.
\item
$\HS_c(\State)\defeq\cL(\State^2)\X\cL(\State^2)$, complete lattice of all closed HS on $\State$.
\end{itemize}
Given a complete lattice $X$ and $x:|X|$, the complete lattice $X\uparrow x$ is the set $\{x'|x\leq_X x'\}$ ordered by $\leq_X$.
When $X$ is one of the complete lattices from Def~\ref{def-lattices}, instead of $X\uparrow x$, we write $x$ in place of $\State$, eg, $\pL(S)$
stands for $\pL(\State)\uparrow S$ when $S:\PS(\State)$, and $\HS(\hs)$ stands for $\HS(\State)\uparrow\hs$ when $\hs:\HS(\State)$.
\end{definition}
\begin{proposition}\label{thm-isos}
In $\Po$ one has the following isomorphisms:
\[\prod i:I.\pL(\State_i)\cong\pL(\sum i:I.\State_i),\qquad
  \prod i:I.\cL(\State_i)\cong\cL(\sum i:I.\State_i),\]
with isomorphism maps $(A_i|i:I)\mapsto\{(i,a)|i:I\land a:A_i\}$.
Thus, $\HS(\State)\cong\pL(\State')$ and $\HS_c(\State)\cong\cL(\State')$, where
$\State'$ is the topological space $\State^2+\State^2\cong 2\X\State\X\State$.

Given $C:\CS(\State)$, equivalently, a closed subspace of $\State$, (1) below is a commuting diagram in $\Po$ of sup-preserving maps, and
(2) is the commuting diagram of right adjoints to the maps in (1), where $g(U)=U\cap C$, and $f^R(U)=\cl{U}$.
\[\commdiag{
\cL(C)&\rIntoA~f&\pL(C)\\
\uTo~g&(1)&\uTo~g\\
\cL(\State)&\rIntoA~f&\pL(\State)
}\qquad\qquad\commdiag{
\cL(C)&\lTo~{f^R}&\pL(C)\\
\dIntoA~{g^R}&(2)&\dIntoA~{g^R}\\
\cL(\State)&\lTo~{f^R}&\pL(\State)
}\]
\end{proposition}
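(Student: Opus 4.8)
The plan is to verify each claimed ingredient in turn: first the two product isomorphisms, then the derived isomorphisms for $\HS$ and $\HS_c$, and finally the commutativity of diagrams (1) and (2) together with the identification of the right adjoints. For the first isomorphism $\prod i:I.\pL(\State_i)\cong\pL(\sum i:I.\State_i)$, I would exhibit the candidate map $(A_i|i:I)\mapsto\{(i,a)\mid i:I\land a:A_i\}$ and its inverse $B\mapsto(\{a\mid (i,a):B\}\mid i:I)$, check they are mutually inverse bijections on underlying sets, and check both are monotonic with respect to the reverse-inclusion order on each factor and on the coproduct — this is immediate since union and intersection of families in a coproduct are computed componentwise. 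For the closed-set version, the only extra point is that a subset $B$ of the topological coproduct $\sum i:I.\State_i$ is closed iff each fibre $\{a\mid(i,a):B\}$ is closed in $\State_i$, which is exactly the defining property of the coproduct topology; hence the bijection restricts to a bijection between $\prod i:I.\cL(\State_i)$ and $\cL(\sum i:I.\State_i)$ and remains an order-isomorphism. The consequences $\HS(\State)\cong\pL(\State')$ and $\HS_c(\State)\cong\cL(\State')$ then follow by instantiating $I=\{1,2\}$ and $\State_1=\State_2=\State^2$, using the stated homeomorphism $\State^2+\State^2\cong 2\X\State\X\State$.

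For the second half, I would first argue that the four maps in diagram (1) are sup-preserving. The inclusions $f:\cL(C)\rInto\pL(C)$ and $f:\cL(\State)\rInto\pL(\State)$ preserve sups because in $\cL$ the sup of a family is the closure of its union while in $\pL$ (with reverse inclusion) the sup is the plain union — wait, more carefully: sup in $\pL$ is intersection, sup in $\cL$ is also intersection (an arbitrary intersection of closed sets is closed), so the inclusion literally commutes with taking intersections and hence preserves sups. For the vertical maps $g(U)=U\cap C$: on $\pL$, intersection with $C$ commutes with arbitrary intersections, so $g$ preserves sups; on $\cL$ the same computation works and $U\cap C$ is closed when $U$ is. Commutativity of (1) is then the trivial set identity $\iota(U\cap C)=\iota(U)\cap C$ for the relevant inclusions $\iota$. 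Next, for the right adjoints: by Prop~\ref{thm-fR-galois} each sup-preserving map has a unique right adjoint given by $h^R(y)=\bigsqcup\{x\mid h(x)\leq y\}$. I would compute that the right adjoint of the inclusion $f:\cL(\State)\rInto\pL(\State)$ is closure, $f^R(U)=\cl U$: indeed $\cl U$ is the smallest closed set containing $U$, i.e. (in the reverse-inclusion order) the largest closed $A$ with $f(A)\leq U$, namely $A\supseteq U$. For $g^R$ I would check that $g^R$ is the inclusion $\cL(C)\rInto\cL(\State)$ (resp. $\pL(C)\rInto\pL(\State)$): the defining Galois condition $V\cap C\supseteq U \iff V\supseteq U$ holds for $U\subseteq C$, so the right adjoint to ``intersect with $C$'' is just ``regard a subset of $C$ as a subset of $\State$''. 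Finally, diagram (2) being the diagram of right adjoints of (1) is automatic from Prop~\ref{thm-fR-unique}(1) and the general fact that right adjoints of a commuting square of left adjoints again commute (taking right adjoints is functorial on the subcategory of sup-preserving maps), so no separate verification is needed beyond identifying the four maps.

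The only genuinely delicate point — and the step I expect to be the main obstacle — is the interaction between closure, restriction, and the subspace topology on $C$: one must make sure that ``closure in $C$'' and ``closure in $\State$ then intersect with $C$'' agree appropriately, and that $f^R:\pL(C)\to\cL(C)$ is closure computed in the subspace $C$ while $f^R:\pL(\State)\to\cL(\State)$ is closure in $\State$. Since $C$ itself is closed in $\State$, the closure in the subspace $C$ of a set $U\subseteq C$ coincides with $\cl U$ taken in $\State$, so the two readings of the top and bottom arrows of (2) coincide; this is where closedness of $C$ is essential and where I would spend the most care. Everything else is a matter of unwinding the reverse-inclusion conventions and applying Props~\ref{thm-fR-unique} and~\ref{thm-fR-galois}.
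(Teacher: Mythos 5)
Your proposal is correct and follows essentially the same route as the paper's (much terser) proof: the fibrewise characterization of closed subsets of a topological sum, sup-preservation of the inclusion $f$ via closure of $\CS(\State)$ under intersections, closedness of $C$ to see that $g(U)=U\cap C$ stays closed, and the general adjoint machinery of Prop~\ref{thm-fR-unique} and Prop~\ref{thm-fR-galois} for diagram (2). The extra details you supply---the explicit inverse of the isomorphism, the computations $f^R(U)=\cl{U}$ and $g^R=$ inclusion, and the observation that closedness of $C$ makes closure in the subspace agree with closure in $\State$---are exactly the routine verifications the paper leaves implicit, and they are all correct.
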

\begin{proof}
By definition of sum $\State=\sum i:I.\State_i$ of topological spaces, a subset $A$ of $\State$ is closed exactly when each $A_i=\{a|(i,a):A\}$ is a closed subset of $\State_i$.

The set $\CS(\State)$ of closed subsets of a topological space $\State$ is closed wrt intersections computed in $\PS(\State)$, thus the inclusion $f$ preserves sups.  Since $C$ is closed, $g(U)=U\cap C$ is closed when $U$ is closed.
\qed\end{proof}
\begin{definition}\label{def-mon-maps}
Given a Banach space $\State$, we define the following maps in $\Po$:
\begin{enumerate}
\item $t:\Po(\HS(\State),\HS(\Real\X\State))$ is the construction on HS in Def~\ref{def-HT}.
\item $\Ts:\Po(\HS(\State)\X\pL(\State),\pL(\State))$ is given by $\Ts(\hs,S)=\{s'|s\rTo_\hs s'\}$.
\item $\Rf{}:\Po(\HS(\State)\X\pL(\State),\pL(\State))$ is the reachability map in Def~\ref{def-R}.
\item $\Rs{}:\Po(\HS(\State)\X\cL(\State),\cL(\State))$ is the safe reachability map in Def~\ref{def-SR}.
\item $\Sp:\Po(\HS(\State),\pL(\State))$ is given by $\Sp(F,G)=\{s|\exists s'.\rel{F}{s}{s'}\lor\rel{G}{s}{s'}\lor\rel{G}{s'}{s}\}$.
\end{enumerate}
We call $\Ts(\hs):\Po(\pL(\State),\pL(\State))$ \textbf{transition map} of $\hs$ and
$\Sp(\hs):\PS(\State)$ its \textbf{support}.
\end{definition}
The transition map of $\hs$ corresponds to the transition relation and suffices for defining the maps $\Rf{}(\hs)$ and $\Rs{}(\hs)$.
The support of $\hs$ does not include the values in the image of $F$, because they represent velocities, instead of states.
\begin{proposition}\label{thm-mon-maps}
Given a Banach space $\State$, the following hold:
\begin{enumerate}
    \item $t(\cl{\hs})=\cl{t(\hs)}$, for every $\hs:\HS(\State)$.
    \item If $\hs:\HS_c(\State)$ and $\Ts(\hs,C)\subseteq C:\cL(\State)$, then
    $\Ts:\Po(\HS(\hs)\X\pL(C),\pL(C))$,\\
    $\Rf{}:\Po(\HS(\hs)\X\pL(C),\pL(C))$, and
    $\Rs{}:\Po(\HS(\hs)\X\cL(C),\cL(C))$.
    \item $\cl{\Sp(\hs)}=\Rf{}(\hs,\cl{\Sp(\hs)})$, for every $\hs:\HS(\State)$.
    \item $\cl{\Sp(\cl{\hs})}=\cl{\Sp(\hs)}\leq \Sp(\cl{\hs})\leq \Sp(\hs):\pL(\State)$, for every $\hs:\HS(\State)$.
    \item If $\hs$ is compact (therefore, closed), then $\Sp(\hs)$ is compact (therefore, closed).
\end{enumerate}
\end{proposition}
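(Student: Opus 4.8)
The plan is to establish the five items independently, using only the defining minimality of $\Rf{}$ and $\Rs{}$ (Def~\ref{def-R},~\ref{def-SR}), elementary point‑set topology, and the fact that in $\HS(\State)$ the order is reverse inclusion componentwise, so $\hs\leq\hs'$ means $F\supseteq F'$ and $G\supseteq G'$, whence $\rTo_{\hs'}\subseteq\rTo_\hs$. For (1), I would unfold $t(\hs)=(F',G')$ from Def~\ref{def-HT} and permute the coordinates of $(\Real\X\State)^2$ into $\State\X\State\X\Real\X\Real$, sending $((t,s),(\Der{t},\Der{s}))$ to $(s,\Der{s},t,\Der{t})$: under this homeomorphism the flow component $F'$ is $F\X\Real\X\{1\}$ and the jump component $G'$ is $G\X\Delta$, where $\Delta\subseteq\Real^2$ is the closed diagonal. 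Since closure commutes with products and $\{1\},\Delta$ are closed, $\cl{F\X\Real\X\{1\}}=\cl{F}\X\Real\X\{1\}$ and $\cl{G\X\Delta}=\cl{G}\X\Delta$, and these are precisely the components of $t(\cl{\hs})$.

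For (2), fix $\hs'\in\HS(\hs)$, so $\hs'\subseteq\hs$ componentwise and $\rTo_{\hs'}\subseteq\rTo_\hs$. Then $\Ts(\hs',S)\subseteq\Ts(\hs,S)\subseteq\Ts(\hs,C)\subseteq C$ for every $S\subseteq C$ (the last step is the hypothesis), which is the corestriction of $\Ts$. Since $C$ is a superset of $S$ closed under $\rTo_{\hs'}$, minimality forces $\Rf{}(\hs',S)\subseteq C$; since $C:\cL(\State)$, $C$ is moreover a \emph{closed} superset of any closed $S\subseteq C$ that is closed under $\rTo_{\hs'}$, so minimality forces $\Rs{}(\hs',S)\subseteq C$, and a closed‑in‑$\State$ subset of $C$ is closed in the subspace $C$, hence lies in $\cL(C)$. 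Monotonicity on these restricted complete lattices is inherited from the ambient maps of Def~\ref{def-mon-maps}.

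For (3), Thm~\ref{thm-fRE}(2) gives $\cl{\Sp(\hs)}\subseteq\Rf{}(\hs,\cl{\Sp(\hs)})$, so it suffices to show $\cl{\Sp(\hs)}$ is closed under $\rTo_\hs$, minimality then yielding the reverse inclusion. If $s\rTo_\hs s'$ is a jump $\rel{G}{s}{s'}$, then $s'\in\Sp(\hs)$ via the disjunct $\rel{G}{s''}{s'}$ with $s''=s$; if it is a flow transition of duration $d>0$ realised by $f:\Top([0,d],\State)$, then $\rel{F}{f(t)}{\Der{f}(t)}$ for $t\in(0,d)$ gives $f(t)\in\Sp(\hs)$, and continuity of $f$ at $d$ gives $f(t)\to f(d)=s'$ as $t\to d$, so $s'\in\cl{\Sp(\hs)}$. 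For (4): $\Sp$ is monotone for inclusion, so $\Sp(\hs)\subseteq\Sp(\cl{\hs})$, i.e.\ $\Sp(\cl{\hs})\leq\Sp(\hs)$ in $\pL(\State)$, while $\cl{\Sp(\cl{\hs})}\leq\Sp(\cl{\hs})$ is trivial; hence it remains to show $\Sp(\cl{\hs})\subseteq\cl{\Sp(\hs)}$, after which $\cl{\Sp(\hs)}\subseteq\cl{\Sp(\cl{\hs})}\subseteq\cl{\cl{\Sp(\hs)}}=\cl{\Sp(\hs)}$ gives $\cl{\Sp(\cl{\hs})}=\cl{\Sp(\hs)}$ and the whole chain. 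If $s\in\Sp(\cl{\hs})$, witnessed say by $(s,s')\in\cl{F}$ (the $\cl{G}$ cases are identical, using the appropriate projection), then every open neighbourhood $U$ of $s$ has $U\X\State$ meeting $F$, producing a point of $U\cap\Sp(\hs)$; hence $s\in\cl{\Sp(\hs)}$.

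For (5), write $\Sp(F,G)=\pi_1(F)\cup\pi_1(G)\cup\pi_2(G)$ with $\pi_1,\pi_2:\State\X\State\to\State$ the continuous projections; if $\hs$ is compact then $F$ and $G$ are compact, hence so are the three images and their finite union, and in the Hausdorff space $\State$ a compact set is closed. The routine items are (2) and (5); the main obstacle is the flow case of (3)–(4), where one must exploit that a positive‑duration flow transition is realised by a map $f$ continuous on the \emph{closed} interval $[0,d]$ (not merely on $(0,d)$), so that the endpoint $s'=f(d)$ is forced into $\cl{\Sp(\hs)}$ even though $s'$ itself need not lie in the domain of the flow relation. The coordinate bookkeeping in (1) is also error‑prone, so I would carry it out explicitly.
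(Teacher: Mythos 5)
Your proposal is correct and follows essentially the same route as the paper's proof: the componentwise product decomposition (up to a coordinate permutation) for (1), the monotonicity/minimality argument confining $\Ts$, $\Rf{}$, $\Rs{}$ to $C$ for (2), transition-closedness of $\cl{\Sp(\hs)}$ via the jump case and continuity of the realizer $f$ at the endpoint for (3), reducing (4) to $\Sp(\cl{\hs})\subseteq\cl{\Sp(\hs)}$ via projections (you use neighbourhoods where the paper uses sequences, an immaterial difference in a metric space), and compactness of images of projections plus finite unions for (5). No gaps.
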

\begin{proof}See \ref{sec-proofs}.\end{proof}
Given an adjunction $\commdiag{Y&\pile{\rTo~f\\\bot\\\lDashto~g}&X}$ in $\Po$ such that $g\circ f=\id_Y$, one can identify $Y$ with its image in $X$. Then, the right adjoint $g$ maps every $x:X$ to its \emph{best approximation} in $Y$, ie, $\forall y:Y.y\leq x\iff y\leq g(x)$.  Clearly a \emph{smaller} $Y$ means that the $g(x)$ are less \emph{accurate}.
In \emph{static analysis} it is customary to take as $Y$ a finite lattice, in order to get decidablility.
In the case of safe reachability we take $Y=\cL(\State)$ in place of $X=\pL(\State)$. Also in this case there is a reduction in cardinality: when $\State$ is a Banach space with the cardinality of the continuum, $\CS(\State)$ has the cardinality of the continuuum, while $\pL(\State)$ has a bigger cardinality.
\begin{definition}\label{def-bFa}
Given $X=(|X|,\leq_X):\Po$ and a subset $F\subseteq|X|$, define:
\begin{itemize}
    \item $\FS{X}{F}$ the smallest subset of $|X|$ containing $F$ and closed wrt sups in $X$;
    \item $\FL{X}{F}$ the poset $\FS{X}{F}$ ordered by $\leq_X$;
    \item $\bFa{F}{x}\defeq\bigsqcup\{y:F|y\leq_X x\}:\FL{X}{F}$ the \textbf{best approximation} of $x:X$.
\end{itemize}
\end{definition}
It is quite easy to establish the following basic properties.
\begin{proposition}\label{thm-bFa}
$\FL{X}{F}$ is a complete lattice. Moreover:
\begin{enumerate}
    \item $F\subseteq\FS{X}{F}=\FS{X}{\FS{X}{F}}$.
    \item $F$ is finite$\implies$$\FS{X}{F}$ is finite.
    \item $F_1\subseteq F_2$$\implies$$\FS{X}{F_1}\subseteq\FS{X}{F_2}$ and $\FS{X}{F_1}=\FS{\FL{X}{F_2}}{F_1}$.
    \item The inclusion $f:\FL{X}{F}\rInto X$ is sup-preserving and $f^R(x)=\bFa{F}{x}$.
    \item $\commdiag{Y&\pile{\rTo~f\\\bot\\\lDashto~g}&X}$ and $F$ image of $f$$\implies$$F=\FS{X}{F}$ and $\bFa{F}{x}=f(g(x))$.
\end{enumerate}
\end{proposition}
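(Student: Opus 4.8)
The plan is to verify the five claims in sequence, each reducing to a short argument about sups in $X$. The key observation throughout is that $\FS{X}{F}$, being closed under arbitrary sups (including the empty sup, which gives $\bot_X = \bigsqcup\emptyset$), carries the structure of a complete lattice: given $S \subseteq \FS{X}{F}$, its sup computed in $X$ lies in $\FS{X}{F}$ by closure, and this is visibly the least upper bound of $S$ within $\FS{X}{F}$ as well. So $\FL{X}{F}$ is a complete lattice and the inclusion $f\colon\FL{X}{F}\rInto X$ preserves all sups.

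For (1), the inclusion $F\subseteq\FS{X}{F}$ and idempotence $\FS{X}{F}=\FS{X}{\FS{X}{F}}$ follow from minimality: $\FS{X}{F}$ already contains $F$ and is sup-closed, so it is the smallest sup-closed set containing itself, and also the smallest containing $F$. For (2), finiteness of $F$ implies finiteness of $\FS{X}{F}$ because in a finite subset every sup is the sup of a \emph{finite} subfamily; I would argue that the collection of all sups $\bigsqcup S$ with $S\subseteq F$ is already sup-closed (a sup of such sups is the sup of the union of the index families) and contains $F$, hence equals $\FS{X}{F}$, and there are at most $2^{|F|}$ such sups. For (3), monotonicity $F_1\subseteq F_2\implies\FS{X}{F_1}\subseteq\FS{X}{F_2}$ holds because $\FS{X}{F_2}$ is a sup-closed set containing $F_1$; the identity $\FS{X}{F_1}=\FS{\FL{X}{F_2}}{F_1}$ holds because sups in $\FL{X}{F_2}$ agree with sups in $X$ (as just noted), so ``sup-closed inside $\FL{X}{F_2}$'' and ``sup-closed inside $X$'' coincide for subsets of $\FL{X}{F_2}$, and $\FS{X}{F_1}\subseteq\FL{X}{F_2}$ by the inclusion just proved.

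For (4), since $f$ is sup-preserving it has a right adjoint $f^R$ by Prop~\ref{thm-fR-galois}, given by $f^R(x)=\bigsqcup\{y\in\FS{X}{F}\mid y\leq_X x\}$. I must check this equals $\bFa{F}{x}=\bigsqcup\{y\in F\mid y\leq_X x\}$. The inequality $\bFa{F}{x}\leq f^R(x)$ is immediate since $F\subseteq\FS{X}{F}$. For the reverse, note every $y\in\FS{X}{F}$ with $y\leq_X x$ is a sup of elements of $F$ each of which is $\leq_X y\leq_X x$, hence $\leq_X\bFa{F}{x}$; taking sups gives $y\leq_X\bFa{F}{x}$, so $f^R(x)\leq_X\bFa{F}{x}$. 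This last step---that every element of $\FS{X}{F}$ is actually realized as $\bigsqcup S$ for some $S\subseteq F$, i.e.\ that the ``one-step'' closure already gives everything---is the main thing to get right; it again follows from the remark used in (2) that $\{\bigsqcup S\mid S\subseteq F\}$ is itself sup-closed. For (5), if $\commdiag{Y&\pile{\rTo~f\\\bot\\\lDashto~g}&X}$ and $F$ is the image of $f$, then $F$ is sup-closed: for $S\subseteq F$, write $S=f(S')$; since $f$, being a left adjoint, preserves sups, $\bigsqcup S=f(\bigsqcup S')\in F$. Hence $F=\FS{X}{F}$, and then $\bFa{F}{x}=f^R(x)$ by (4); but $f^R=g$ by uniqueness of right adjoints (Prop~\ref{thm-fR-unique}), which after identifying $Y$ with its image $F$ in $X$ reads $\bFa{F}{x}=f(g(x))$.

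The only genuine obstacle is the realization fact underlying (2) and (4): one must be careful that arbitrary iterated sups of sups collapse to a single sup over $F$, which uses associativity/commutativity of $\bigsqcup$ in a complete lattice (a sup of a family of sups equals the sup over the union of the indexing sets). Once that is in hand, every other item is a one-line consequence of adjunction generalities from Prop~\ref{thm-fR-galois}~and~\ref{thm-fR-unique}.
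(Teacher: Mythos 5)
Your proposal is essentially correct, and there is nothing in the paper to compare it against: Prop~\ref{thm-bFa} is stated with no proof (the text just calls the properties ``quite easy to establish''). Your key device---the realization lemma $\FS{X}{F}=\{\bigsqcup S\mid S\subseteq F\}$, proved via $\bigsqcup_{i}\bigsqcup S_i=\bigsqcup\bigcup_i S_i$ in a complete lattice---is exactly the right one, and it cleanly delivers the complete-lattice claim and items (1)--(4) as you describe. (As a side remark, (4) can also be done without realization: the set $\{y\in|X|\mid y\leq_X x\implies y\leq_X\bFa{F}{x}\}$ is sup-closed and contains $F$, hence contains $\FS{X}{F}$, which gives $f^R(x)\leq\bFa{F}{x}$ directly.)

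The one step that needs a caveat is the end of (5). You identify $Y$ with its image $F$ and invoke uniqueness of right adjoints; that identification is only legitimate under the extra hypothesis $g\circ f=\id_Y$, which the paper assumes in the paragraph \emph{preceding} Def~\ref{def-bFa} but which is not part of item (5), where only $f\dashv g$ is given. Without it, $f$ need not be injective, so $Y$ need not be order-isomorphic to $\FL{X}{F}$ and ``$f^R=g$'' is comparing adjoints of different maps. The conclusion is nevertheless true and the patch is two lines: $f(g(x)):F$ and $f\circ g\leq\id_X$ give $f(g(x))\leq_X x$, hence $f(g(x))\leq\bFa{F}{x}$; conversely any $y=f(z)$ with $y\leq_X x$ satisfies $z\leq_Y g(x)$ by the Galois connection, so $y\leq_X f(g(x))$, whence $\bFa{F}{x}\leq f(g(x))$. (Equivalently, use $f\circ g\circ f=f$ to see that the corestriction of $f$ onto $\FL{X}{F}$ is left adjoint to the restriction of $g$, and then your uniqueness argument goes through.) With that repair the proof is complete.
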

\begin{example}\label{ex-bFa}
When $Z$ is the complete lattice $\Po(X,Y)$, there are several subsets $F$ of $|Z|$ satisfying $F=\FS{Z}{F}$ which
consist of monotonic maps that preserve certain types of sups. For instance:
\begin{enumerate}
\item \textbf{strict maps}, which preserve the sup of the empty set, ie, $f(\bot)=\bot$.
\item \textbf{additive maps}, which preserve all sups, ie, $f(\bigsqcup S)=\bigsqcup f(S)$ for $S\subseteq|X|$.
\item Scott \textbf{$D$-continuous maps}, which preserve sups of \textbf{directed subsets}, ie, non-empty $D\subseteq|X|$ satisfying
$\forall x,y:D.\exists z:D.x,y\leq_X z$.
\item Scott \textbf{$\omega$-continuous maps}, which preserve sups of \textbf{$\omega$-chains}, ie, sequences $(x_n|n:\omega)$ such that
$\forall n.x_n\leq x_{n+1}$.
\end{enumerate}

The following implications always hold:
\[\mbox{strict $\lImplies$ additive $\rImplies$ $D$-continuous $\rImplies$ $\omega$-continuous}.\]
Requirements can be combined, eg, by defining the \emph{strict $D$-continuous} maps.
\end{example}
\begin{proposition}\label{thm-bca}
The \textbf{best $D$-continuous approximation} $\bca{f}$ of $f$
satisfies the properties:
$\id_X=\bca{\id_X}$ and
$\bca{(\bca{g}\circ\bca{f})}=\bca{g}\circ\bca{f}\leq\bca{(g\circ f)}$.
\end{proposition}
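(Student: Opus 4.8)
The approach is to unfold the definition. With $Z=\Po(X,Y)$ and $F$ the set of Scott $D$-continuous maps $X\to Y$, one has $\bca{f}=\bFa{F}{f}=\bigsqcup\{h:F\mid h\leq f\}$, the sup being computed pointwise in the complete lattice $Z$. By Example~\ref{ex-bFa} we have $F=\FS{Z}{F}$ (a pointwise sup of $D$-continuous maps is again $D$-continuous, by interchange of sups), so this sup lies in $F$; hence $\bca{f}$ is the \emph{largest} $D$-continuous map below $f$. In particular $\bca{f}\leq f$, and I record the one fact used throughout: if $h$ is itself $D$-continuous, then $h\in\{h':F\mid h'\leq h\}$, so $h\leq\bca{h}\leq h$, i.e.\ $\bca{h}=h$. (Equivalently, this is Prop~\ref{thm-bFa}(4): $\bFa{F}{-}$ is the right adjoint of the full inclusion $\FL{X}{F}\rInto X$, so it restricts to the identity on $F$.)

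The first claim is then immediate: $\id_X$ preserves all sups, hence is $D$-continuous, so $\bca{\id_X}=\id_X$.

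For the second claim I would first prove the composition lemma: if $f:X\to Y$ and $g:Y\to Z$ are $D$-continuous, so is $g\circ f$. Given a directed $D\subseteq|X|$, its image $f(D)$ is directed because $f$ is monotonic, whence $g(f(\bigsqcup D))=g(\bigsqcup f(D))=\bigsqcup g(f(D))=\bigsqcup (g\circ f)(D)$. Applying this with $f$ replaced by $\bca{f}$ and $g$ by $\bca{g}$ shows that $\bca{g}\circ\bca{f}$ is $D$-continuous, so the idempotence fact gives $\bca{(\bca{g}\circ\bca{f})}=\bca{g}\circ\bca{f}$. For the inequality, $\bca{f}\leq f$ and $\bca{g}\leq g$ together with monotonicity of composition in the poset-enriched category $\Po$ (Def~\ref{def-po-enriched}) give $\bca{g}\circ\bca{f}\leq g\circ f$; since the left-hand side is $D$-continuous and $\bca{(g\circ f)}$ is the largest $D$-continuous map below $g\circ f$, we conclude $\bca{g}\circ\bca{f}\leq\bca{(g\circ f)}$.

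The only step that is more than bookkeeping is the composition lemma, and within it the observation that a monotonic map sends directed sets to directed sets; this is exactly what makes $\bca{(-)}$ behave like a lax functor, and I expect no genuine obstacle beyond getting this interchange-of-sups computation right. The same proof works verbatim for the best $F$-approximation whenever $F$ is one of the classes in Example~\ref{ex-bFa} that is closed under composition and contains the identities, e.g.\ additive maps or strict maps.
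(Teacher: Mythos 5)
Your proof is correct and takes essentially the same route as the paper: both rest on the facts that $\bca{-}$ is a monotonic best approximation fixing the $D$-continuous maps, and that the $D$-continuous maps contain identities and are closed under composition (which you verify explicitly via the directed-image argument, while the paper simply cites it as a subcategory fact). Your final step via ``largest $D$-continuous map below $g\circ f$'' is just an unfolding of the paper's appeal to monotonicity of composition and of $\bca{-}$.
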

\begin{proof}
Like any best approximation $\bFa{F}{-}$ on $\Po(X,Y)$, $\bca{-}$ is monotonic and satisfies $\bca{(\bca{f})}=\bca{f}\leq f$.
The $D$-continuous maps form a sub-category of $\Po$, ie, they include identities and are closed wrt composition, thus
$\bca{\id_X}=\id_X$ and $\bca{(\bca{g}\circ\bca{f})}=\bca{g}\circ\bca{f}$.
Finally $\bca{g}\circ\bca{f}\leq\bca{(g\circ f)}$ follows from monotonicity of composition and $\bca{-}$. 
\qed\end{proof}

\subsection{Best Approximations on Continuous Lattices}

Thm~\ref{thm-bca-cl} gives a simple way to compute $\bca{f}$ of $f:\Po(X,Y)$ when $X$ is a \emph{continuous} lattice.
Moreover, all continuous lattices for which we want to compute $\bca{f}$ have a countable \emph{base}, which implies that $D$- and $\omega$-continuous maps over these lattices coincide.
\begin{remark}\label{rmk-edalat}
\cite{Edalat95} advocates the use of Domain Theory for the study of dynamical systems.
In this context, the complete lattice $\cL(\State)$ becomes relevant when $\State$ is a compact metric space.
In fact, under this assumption, $\cL(\State)$ is a continuous lattice with a countable base (see \cite[Prop 3.4]{Edalat95}).\footnote{There are some minor differences between our work and that of~\cite{Edalat95}, namely, in~\cite{Edalat95}, $UX$ is a \emph{hyperspace} on the set of non-empty compact subsets of $X$, while our $\cL(\State)$ is a complete lattice on the closed subsets of $\State$.  However, for compact Hausdorff spaces, the only difference is given by the empty subset.} Thus, one can address computability issues, and the Scott topology coincides with the Upper topology (see \cite[Prop 3.2 and 3.3]{Edalat95}).
\end{remark}

 We recall some basic notions and facts on continuous lattices. For more details, the reader is referred to~\cite{Gierz-ContinuousLattices-2003,AbramskyJung94-DT}.

\begin{definition}
Given two complete lattices $X$ and $Y$, define:
\begin{itemize}
    \item the \textbf{way-below} relation: $x\ll_X x'\defiff$ for any directed $D\subseteq|X|$, if $x'\leq_X\bigsqcup D$, then $\exists d:D.x\leq_X d$.
    \item the subset $\twoheaddownarrow x'\defeq\{x|x\ll_X x'\}$ of elements \textbf{way-below} $x'$.
\end{itemize}
The \textbf{step map} $[x,y]:\Po(X,Y)$ is $[x,y](x')\defeq\mbox{if $(x\ll_X x')$ then $y$ else $\bot_Y$}$.
\end{definition}
\begin{proposition}
For any complete lattices $X$ and $Y$, the following hold:
\begin{enumerate}
\item $x_0'\leq_X x_0\ll_X x_1\leq_X x_1'\implies x_0'\ll_X x_1'$.
\item $x_0\ll_X x_1\implies x_0\leq_X x_1$.
\item $x_0,x_1\ll_X x\implies x_0\sqcup x_1\ll_X x$ and $\bot\ll_X x$.
\item $\twoheaddownarrow x$ is directed.
\end{enumerate}
\end{proposition}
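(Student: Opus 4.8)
All four items are routine consequences of the definition of $\ll_X$, unwound against a generic directed set $D$, together with the standing convention that directed sets are \emph{non-empty}. I would treat them in the stated order, since (2) uses nothing, (3) is where the real content lies, and (4) is an immediate corollary of (3).

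For (1), I would fix a directed $D\subseteq|X|$ with $x_1'\leq_X\bigsqcup D$. From $x_1\leq_X x_1'$ one gets $x_1\leq_X\bigsqcup D$, so $x_0\ll_X x_1$ yields some $d:D$ with $x_0\leq_X d$, and then $x_0'\leq_X x_0\leq_X d$ finishes it. For (2), I would instantiate the definition of $x_0\ll_X x_1$ at the singleton directed set $D=\{x_1\}$: since $x_1\leq_X\bigsqcup D=x_1$, there is $d:D$ with $x_0\leq_X d$, and the only such $d$ is $x_1$.

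For (3), fix a directed $D$ with $x\leq_X\bigsqcup D$. From $x_0\ll_X x$ and $x_1\ll_X x$ pick $d_0,d_1:D$ with $x_0\leq_X d_0$ and $x_1\leq_X d_1$; directedness of $D$ supplies $d:D$ above both, whence $x_0\sqcup x_1\leq_X d_0\sqcup d_1\leq_X d$, so $x_0\sqcup x_1\ll_X x$. For $\bot\ll_X x$, the same $D$ is non-empty, so any $d:D$ already satisfies $\bot\leq_X d$. Finally (4): $\twoheaddownarrow x$ is non-empty because $\bot:\twoheaddownarrow x$ by the second half of (3), and it is up-directed because for $x_0,x_1:\twoheaddownarrow x$ the first half of (3) gives $x_0\sqcup x_1:\twoheaddownarrow x$, which dominates both.

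The only point requiring the slightest care — and the nearest thing to an obstacle — is remembering that ``directed'' as defined earlier in the excerpt bakes in non-emptiness; this is exactly what makes $\bot\ll_X x$ hold for every $x$, and without it both (3) and (4) would fail. Everything else is a one-line instantiation.
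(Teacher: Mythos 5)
Your proof is correct and is exactly the standard argument for these textbook facts about the way-below relation; the paper itself states this proposition without proof (deferring implicitly to the cited domain-theory references), so your write-up simply supplies the omitted routine verification. Your one point of care is also well placed: the paper's definition of directed set (in Example~\ref{ex-bFa}) does require non-emptiness, which is what justifies $\bot\ll_X x$ and hence the non-emptiness of $\twoheaddownarrow x$ in item (4).
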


\begin{definition}
A complete lattice $X$ is \textbf{continuous} $\defiff\forall x:X.x=\bigsqcup\twoheaddownarrow x$.
A subset $B\subseteq|X|$ is a \textbf{base} for $X$ $\defiff\forall x:X.$ $B\cap\twoheaddownarrow x$ is directed and $x=\bigsqcup(B\cap\twoheaddownarrow x)$.
\end{definition}
\begin{proposition}
Given a continuous lattice $X$ and a complete lattice $Y$:
\begin{enumerate}
\item $D\subseteq|X|$ is directed and $x\ll_X\bigsqcup D\implies \exists d:D.x\ll_X d$.
\item The step map $[x,y]:\Po(X,Y)$ is $D$-continuous.
\item if $X$ has a countable base, then $f:\Po(X,Y)$ is $\omega$-continuous$\implies$$f$ is $D$-continuous.
\end{enumerate}
\end{proposition}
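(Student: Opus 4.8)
The three claims build on one another, so the plan is to prove them in that order, with part~1 carrying the real content (it is the usual ``interpolation'' step for the way-below relation). For part~1, I would replace the given directed set $D$ by the larger set $E\defeq\{e\mid\exists d:D.\,e\ll_X d\}=\bigcup_{d:D}\twoheaddownarrow d$ and apply the definition of $\ll_X$ to $E$. First I would check that $E$ is directed: given $e_1\ll_X d_1$ and $e_2\ll_X d_2$ with $d_1,d_2:D$, pick $d:D$ above both by directedness of $D$; since $e_i\leq_X e_i\ll_X d_i\leq_X d$ we get $e_i\ll_X d$ by the monotonicity property of $\ll_X$, hence $e_1\sqcup e_2\ll_X d$ and $e_1\sqcup e_2:E$. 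Next, using continuity of $X$ (so $\bigsqcup\twoheaddownarrow d=d$) and the elementary identity $\bigsqcup\bigl(\bigcup_{d:D}\twoheaddownarrow d\bigr)=\bigsqcup_{d:D}\bigsqcup\twoheaddownarrow d$, I obtain $\bigsqcup E=\bigsqcup_{d:D}d=\bigsqcup D$. Now $x\ll_X\bigsqcup D=\bigsqcup E$ with $E$ directed gives, straight from the definition of $\ll_X$, some $e:E$ with $x\leq_X e$; unfolding $e:E$ produces $d:D$ with $e\ll_X d$, and $x\leq_X e\ll_X d$ then yields $x\ll_X d$ again by monotonicity of $\ll_X$.

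For part~2, I would first note that $[x,y]$ is monotonic (if $x'\leq_X x''$ and $x\ll_X x'$ then $x\ll_X x''$), so only preservation of directed sups remains. Given a directed $D'$ with $z\defeq\bigsqcup D'$, I would split on whether $x\ll_X z$. If so, $[x,y](z)=y$, and part~1 gives $d':D'$ with $x\ll_X d'$, so $[x,y](d')=y$; since each value $[x,y](d')$ is $y$ or $\bot_Y$, the sup $\bigsqcup_{d':D'}[x,y](d')$ is $y$. If not, $[x,y](z)=\bot_Y$, and since $d'\leq_X z$ for every $d':D'$, having $x\ll_X d'$ would force $x\ll_X z$; hence every $[x,y](d')=\bot_Y$ and the sup is $\bot_Y$. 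Either way $[x,y]$ preserves the sup of $D'$.

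For part~3, let $B$ be a countable base of $X$ and $f:\Po(X,Y)$ be $\omega$-continuous; fix a directed $D$ and set $x\defeq\bigsqcup D$ (the case $x=\bot_X$ being trivial, as then $D=\{\bot_X\}$). Enumerate the countable, nonempty set $B\cap\twoheaddownarrow x$ as $(b_k\mid k:\omega)$ and put $c_k\defeq b_0\sqcup\cdots\sqcup b_k$; each $c_k$ is a finite join of elements way-below $x$, hence $c_k\ll_X x$, and $(c_k\mid k:\omega)$ is an $\omega$-chain with $\bigsqcup_k c_k=\bigsqcup(B\cap\twoheaddownarrow x)=x$ by the base property. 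For each $k$, $c_k\ll_X x=\bigsqcup D$ yields (by the definition of $\ll_X$) some $d_k:D$ with $c_k\leq_X d_k$, so by monotonicity of $f$ we get $f(c_k)\leq_Y f(d_k)\leq_Y f(x)$ and therefore $\bigsqcup_k f(c_k)\leq_Y\bigsqcup_{d:D}f(d)\leq_Y f(x)$. But $\omega$-continuity gives $f(x)=f(\bigsqcup_k c_k)=\bigsqcup_k f(c_k)$, so equality holds throughout and $f(\bigsqcup D)=\bigsqcup_{d:D}f(d)$.

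The main obstacle is part~1: one has to come up with the auxiliary set $E$ and verify both its directedness and that $\bigsqcup E=\bigsqcup D$, which is precisely where continuity of $X$ and the stability of $\ll_X$ (under finite joins, and under moving the arguments down on the left and up on the right) get used. After that, part~2 is a two-case bookkeeping argument, and part~3 only needs to trade the arbitrary directed sup $\bigsqcup D$ for the $\omega$-chain $(c_k)$ supplied by the countable base, so that $\omega$-continuity of $f$ can finish the job.
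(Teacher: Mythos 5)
Your proof is correct. Note, though, that the paper does not prove this proposition itself: its ``proof'' consists of citations (part~1 to Gierz et al., Thm~I-1.9; part~2 to Abramsky--Jung, Prop~4.0.1 together with the observation that $\{x' \mid x\ll_X x'\}$ is Scott open in a continuous lattice; part~3 to Abramsky--Jung, Prop~2.2.14). What you have written out are essentially the standard arguments behind those citations: the interpolation property via the auxiliary directed set $E=\bigcup_{d:D}\{e \mid e\ll_X d\}$ whose sup equals $\bigsqcup D$ by continuity; for the step map, your two-case analysis is exactly the content of the Scott-openness of the way-up set that the paper invokes; and for part~3, the $\omega$-chain $(c_k)$ of finite joins from the countable base is the standard reduction of directed sups to chain sups. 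So the route is the same mathematics, just self-contained rather than outsourced. Two trivial points you could tighten: directedness of $E$ also requires nonemptiness, which follows since $\bot_X\ll_X d$ for any $d:D$ and $D$ is nonempty; and the special case $x=\bot_X$ in part~3 is not needed, since by the paper's definition of base the set $B\cap\{b \mid b\ll_X x\}$ is directed (hence nonempty) for every $x$.
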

\begin{proof}\ 
\begin{enumerate}
\item \cite[Thm I-1.9]{Gierz-ContinuousLattices-2003}.
\item \cite[Prop 4.0.1]{AbramskyJung94-DT} and $\{x'|x\ll_X x'\}$ Scott open when $X$ is continuous, by the previous point.
\item \cite[Prop 2.2.14]{AbramskyJung94-DT}.\qed
\end{enumerate}
\end{proof}

The following theorem says that $\bca{f}:\Po(X,Y)$ is the sup of step maps when $X$ is a continuous lattice, and
more specifically, $\bca{f}(x)$ is the sup of $\{f(b)|b\ll_X x\}$.
\begin{theorem}\label{thm-bca-cl}
If $f:\Po(X,Y)$ and $B$ is a base for $X$, then $\bca{f}=\bigsqcup_{b:B}[b,f(b)]$.
\end{theorem}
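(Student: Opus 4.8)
The plan is to show that $g := \bigsqcup_{b:B}[b,f(b)]$ coincides with $\bca{f}$, the best $D$-continuous approximation of $f$. By Prop~\ref{thm-bFa}(4), $\bca{f} = f^R(f)$ where $f^R$ maps $f$ to the sup of all $D$-continuous maps below it. So it suffices to establish three things: (i) each step map $[b,f(b)]$ is $D$-continuous and lies below $f$; (ii) $g$ itself is $D$-continuous; and (iii) $g$ dominates every $D$-continuous map below $f$, so $g$ is the largest such. Since the $D$-continuous maps are closed under arbitrary sups in $\Po(X,Y)$ (they form a set $F$ with $F = \FS{Z}{F}$ by Example~\ref{ex-bFa}), (i) and (ii) together already give $g \leq \bca{f}$, and (iii) gives the reverse inequality.

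First I would handle (i). That $[b,f(b)]$ is $D$-continuous is exactly the second part of the Proposition immediately preceding the theorem (the step map is $D$-continuous when $X$ is continuous — note that $B$ being a base forces $X$ to be continuous). That $[b,f(b)] \leq f$ holds pointwise: for any $x$, if $b \ll_X x$ then $[b,f(b)](x) = f(b) \leq f(x)$ since $b \leq_X x$ by way-below implying $\leq$ and $f$ monotonic; otherwise $[b,f(b)](x) = \bot_Y \leq f(x)$. Taking the sup over $b:B$, we get $g \leq f$, and $g$ is $D$-continuous because it is a sup of $D$-continuous maps. Hence $g \leq \bca{f}$.

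For (iii), let $h:\Po(X,Y)$ be $D$-continuous with $h \leq f$; I want $h \leq g$. Fix $x:X$. Because $B$ is a base, $B \cap \twoheaddownarrow x$ is directed and $x = \bigsqcup(B \cap \twoheaddownarrow x)$. By $D$-continuity of $h$,
\[
h(x) = h\Bigl(\bigsqcup(B \cap \twoheaddownarrow x)\Bigr) = \bigsqcup\{h(b) \mid b:B,\ b \ll_X x\} \leq \bigsqcup\{f(b) \mid b:B,\ b \ll_X x\} = g(x),
\]
where the last equality is just the definition of $g(x)$ as a pointwise sup of step maps: $g(x) = \bigsqcup_{b:B}[b,f(b)](x)$, and $[b,f(b)](x)$ equals $f(b)$ when $b \ll_X x$ and $\bot_Y$ otherwise, so the sup collapses to $\bigsqcup\{f(b) \mid b \ll_X x\}$. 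Since $x$ was arbitrary, $h \leq g$. Therefore $\bca{f} = \bigsqcup\{h \mid h\ D\text{-continuous},\ h \leq f\} \leq g$, and combined with (i)–(ii) this yields $\bca{f} = g$, proving the remark after the theorem that $\bca{f}(x) = \bigsqcup\{f(b) \mid b \ll_X x\}$ as well.

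The main obstacle is the careful bookkeeping in step (iii): one must use that $B$ is a \emph{base} (not merely that $X$ is continuous) so that $B \cap \twoheaddownarrow x$ is directed — this directedness is precisely what licenses applying $D$-continuity of $h$ to that sup. A secondary subtlety is making sure the pointwise sup defining $g$ is computed correctly, i.e.\ that sups in $\Po(X,Y)$ are pointwise (which holds since $Y$ is a complete lattice), so that $g(x) = \bigsqcup_{b}[b,f(b)](x)$; everything else is routine monotonicity and the two recalled propositions.
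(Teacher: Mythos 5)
Your proof is correct and follows essentially the same route as the paper: show that $\bigsqcup_{b:B}[b,f(b)]$ is $D$-continuous and below $f$ (via $D$-continuity of step maps and closure of $D$-continuous maps under sups), and that any $D$-continuous $h\leq f$ satisfies $h\leq\bigsqcup_{b:B}[b,f(b)]$ by applying $D$-continuity of $h$ to the directed set $B\cap\twoheaddownarrow x$ whose sup is $x$. The only differences are cosmetic bookkeeping (you verify $[b,f(b)]\leq f$ pointwise, while the paper folds this into the chain $g(x)\leq f'(x)\leq f(x)$).
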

\begin{proof}
As $X$ is a continuous lattice, the maps $[b,f(b)]$ and $f'\defeq\bigsqcup_{b:B}[b,f(b)]$ are $D$-continuous.
We prove that for any given $D$-continuous map $g\leq f$, we have $g\leq f'\leq f$ .
For $x:X$, the subset $B(x)\defeq B\cap\twoheaddownarrow x$ is directed and $x=\bigsqcup B(x)$. Therefore:
\begin{itemize}
\item $\forall b:B(x).g(b)\leq_Y f(b)\leq_Y f(x)$, since $g\leq f$ and $\forall b:B(x).b\leq_X x$.
\item $g(x)=\bigsqcup_{b:B(x)}g(b)$ by continuity of $g$.
\item $f'(x)=\bigsqcup_{b:B(x)}f(b)$ by definition of $f'$.
\end{itemize}
Thus, one concludes that $g(x)\leq_Y f'(x)\leq_Y f(x)$.
\qed\end{proof}

\section{Robustness}\label{sec-robust}

The safe reachability maps introduced in Def~\ref{def-SR} and~\ref{def-mon-maps} are of the form $A:\Po(\cL(\State_1),\cL(\State_2))$, with $\State_1$ and $\State_2$ metric spaces.
We say that such an $A$ is \emph{robust} at $C$ when \emph{small extensions} to $C$ cause small extensions to $A(C)$.

\begin{definition}[Robustness]\label{def-robust}
Given two metric spaces $(\State_i,d_i)$, $i \in \{ 1, 2\}$, we say that a map $A:\Po(\cL(\State_1),\cL(\State_2))$ is \textbf{robust} $\defiff$
\[\forall C:\CS(\State_1).\forall\epsilon>0.\exists\delta>0.A(C_\delta)\subseteq A(C)_\epsilon,\]
where $S_\delta:\CS(\State_i)$ is the closure of the open subset $B(S,\delta)\defeq\{y|\exists x:S.d_i(x,y)<\delta\}$.
\end{definition}

For discrete systems, robustness is not an issue (when the metric space $\State_1$ is \emph{discrete} every monotonic map is robust),
while robustness of safe reachability $\Rs{\hs}:\Po(\cL(\State),\cL(\State))$ for a HS $\hs$ on $\State$ is very sensitive to the HS, and restricting to purely continuous systems does not help either (see Example~\ref{ex-no-robust}).

The main result of this section is that robustness and $D$-continuity coincide
when the $\State_i$ are compact metric spaces. In this case, an analysis $A$ \emph{becomes} robust when replaced by its best $D$-continuous approximation.
Remarkable instances of $A$, for which the main result applies, are safe reachability maps $\Rs{}:\Po(\HS_c(\hs_0)\X\cL(S_0),\cL(S_0))$, where $\hs_0$ is a compact HS on $\State$ with support $S_0$ and $\HS_c(\hs_0)$ is the complete lattice of closed hybrid systems that \emph{refine} $\hs_0$, ie, $\hs\geq \hs_0:\HS_c(\State)$.

In order to relate different properties of monotonic maps in $\Po(X,Y)$, it is conceptually useful to move to the category $\Top$ of topological spaces, by considering suitable topologies on the underlying sets $|X|$ and $|Y|$.
For a complete lattice $X$, one can define two topologies on $|X|$: Alexandrov topology $\OS_A(X)$, and Scott topology $\OS_S(X)\subseteq \OS_A(X)$.
The partial order $\leq_X$ can be recovered as the \emph{specialization order} of the two topologies. Thus, topologies are \emph{more informative} than partial orders: the Alexandrov continuous maps, ie, the maps in $\Top(\OS_A(X),\OS_A(Y))$, are the monotonic maps from $X$ to $Y$, while the Scott continuous maps $\Top(\OS_S(X),\OS_S(Y))$ are the $D$-continuous maps from $X$ to $Y$.
Furthermore, for robustness one can identify a suitable topology on $\CS(\State)$.

\begin{definition}[Topologies]\label{def-top}
When $\State$ is a topological space and $S:\PS(\State)$, let $\UpC{S}\defeq\{C:\CS(\State)|S\leq C\}$.
We define three topologies on $\CS(\State)$:
\begin{itemize}
    \item $U$ \textbf{Alexandrov open} $\defiff$ $U$ is \emph{upward closed}, ie, $C:U\implies \UpC{C}\subseteq U$.
    \item $U$ \textbf{Scott open} $\defiff$ $U$ is \emph{upward closed} and $(\bigsqcup D):U\implies\exists C:D.C:U$, for any directed subset $D$ of closed subsets.
    \item $U$ open in the \textbf{Upper topology} $\defiff$ $C:U\implies\exists O:\OS(\State).C:\UpC{O}\subseteq U$.
\end{itemize}
Moreover, when $(\State,d)$ is a metric space, we define a fourth topology on $\CS(\State)$:
\begin{itemize}
    \item $U$ open in the \textbf{Robust topology} $\defiff$ $C:U\implies\exists\delta>0.\UpC{B(C,\delta)}\subseteq U$.
\end{itemize}
\end{definition}

\begin{lemma}\label{thm-top}
In a metric space $(\State,d)$, the following implications hold:
\begin{enumerate}
\item $U$ Scott open $\implies$
\item $U$ open for the Robust topology $\implies$
\item $U$ open for the Upper topology $\implies$
\item $U$ Alexandrov open.
\end{enumerate}
\end{lemma}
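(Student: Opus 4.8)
The plan is to establish the four implications in sequence, working through the definitions in Definition~\ref{def-top}. Three of the four are essentially unpacking of definitions; only one requires a genuine metric-space argument, so I would dispatch the easy ones first and concentrate effort on the middle implication.

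First, $(3)\implies(4)$: if $U$ is open for the Upper topology and $C:U$, then by definition there is $O:\OS(\State)$ with $C:\UpC{O}\subseteq U$. Now $\UpC{O}$ is upward closed: if $O\leq C'\leq C''$ then $O\leq C''$. Hence any $C'':\UpC{C}$ satisfies $C''\geq C\geq O$... wait, the order is $\leq$, meaning I need $O\leq C$ and $C\leq C''$ giving $O\leq C''$, so $C'':\UpC{O}\subseteq U$. Thus $\UpC{C}\subseteq U$, i.e. $U$ is upward closed, which is exactly Alexandrov openness. Second, $(1)\implies(2)$ is the only place the directed-sup condition of Scott openness gets used. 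Suppose $U$ is Scott open and $C:U$. Consider the family $D\defeq\{\cl{B(C,\delta)}\mid\delta>0\}$; equivalently, using $C_\delta$ notation, $D=\{C_\delta\mid\delta>0\}$. This is a directed subset of $\cL(\State)$ (decreasing in $\delta$ in the inclusion order, hence directed in the lattice order $\leq$), and its sup is $\bigcap_{\delta>0}C_\delta$. The key sublemma is that this intersection equals $C$ itself: since $C$ is closed, $\cl{B(C,\delta)}\supseteq B(C,\delta)\supseteq C$ for all $\delta$, and conversely any point at distance $\geq$ some fixed positive amount from $C$ is excluded for small enough $\delta$ — here closedness of $C$ is exactly what guarantees $\bigcap_\delta \cl{B(C,\delta)}=C$. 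So $(\bigsqcup D)=C:U$, and Scott openness yields some $\delta$ with $C_\delta:U$; since $U$ is upward closed and $\UpC{C_\delta}=\UpC{\cl{B(C,\delta)}}\subseteq\UpC{B(C,\delta)}$... I need to be careful about direction, but the point is every $C'\geq\cl{B(C,\delta)}$ lies in $U$, which is what Robust openness at $C$ requires with this $\delta$.

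Third, $(2)\implies(3)$: suppose $U$ is Robust-open and $C:U$; pick $\delta>0$ with $\UpC{B(C,\delta)}\subseteq U$. Now $O\defeq B(C,\delta)$ is an open subset of the metric space $\State$, and $C:\UpC{O}$ since $C\subseteq O$, i.e. $O\leq C$. So we have exhibited the required open set $O$ witnessing Upper-openness of $U$ at $C$, and $\UpC{O}=\UpC{B(C,\delta)}\subseteq U$. This completes the chain.

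The main obstacle is the sublemma in $(1)\implies(2)$ that $\bigcap_{\delta>0}\cl{B(C,\delta)}=C$, together with verifying directedness of $\{C_\delta\}$; everything else is routine bookkeeping about upward-closed families and the specialization order. I would state the intersection fact cleanly: for $x\notin C$, closedness of $C$ gives $\delta_0>0$ with $B(x,\delta_0)\cap C=\emptyset$, hence $d(x,C)\geq\delta_0$, so $x\notin B(C,\delta_0/2)$ and (since $B(x,\delta_0/2)\cap B(C,\delta_0/2)=\emptyset$, a point within $\delta_0/2$ of $x$ cannot be within $\delta_0/2$ of $C$) in fact $x\notin\cl{B(C,\delta_0/3)}$, say. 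One subtlety worth a sentence: directedness in the lattice $\cL(\State)$ whose order is inclusion $\subseteq$ means every pair has an \emph{upper bound} in the family, i.e. a common \emph{subset} among the $C_\delta$; since $\delta\leq\delta'\implies C_\delta\supseteq C_{\delta'}$, wait — the order on $\cL(\State)$ from Definition~\ref{def-lattices} is plain $\subseteq$, so smaller $\delta$ gives a \emph{larger} element, and the directed set should be thought of with $\delta\to 0$... I would double-check the variance here against Definition~\ref{def-lattices} and phrase it so that the sup of the directed family is the intersection $\bigcap_\delta C_\delta=C$, which is the only thing the argument actually needs.
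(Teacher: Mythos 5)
Your proof is correct and follows essentially the same route as the paper: the paper's proof of (1)$\implies$(2) uses the $\omega$-chain $C_n=\cl{B(C,2^{-n})}$ with $\bigcap_n C_n=C$ in place of your family $\{C_\delta\mid\delta>0\}$, and the other two implications are handled identically. On your flagged worry about variance: by Def~\ref{def-lattices} the order on $\cL(\State)$ is \emph{reverse} inclusion, so $\UpC{S}$ consists of the closed subsets of $S$, directed sups are intersections, and your argument---directedness of $\{C_\delta\}$, $\bigsqcup_\delta C_\delta=\bigcap_\delta C_\delta=C$, and $\UpC{B(C,\delta)}\subseteq\UpC{C_\delta}\subseteq U$ (note this inclusion, not the reversed one you wrote)---goes through exactly as you intended.
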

\begin{proof}
We exploit the following facts, which are valid in any metric space:
\begin{itemize}
\item[(F1)] $B(B(S,\delta),\delta')\subseteq B(S,\delta+\delta')$.
\item[(F2)] $S\subseteq \cl{S}=\cap_{\delta>0} B(S,\delta)$.
\item[(F3)] $B(S,\delta)\subseteq S_\delta\defeq\cl{B(S,\delta)}\subseteq B(S,\delta')$ whenever $\delta<\delta'$.
\end{itemize}
Now we prove the implications:
\begin{itemize}
\item (1)$\implies$(2).
If $U$ is Scott open and $C:U$, let $O_n\defeq B(C,2^{-n})$,
we prove that $\UpC{O_n}\subseteq U$ for some $n$.

Let $C_n\defeq\cl{O_n}$. Then, $C_n\supseteq O_n\supseteq C_{n+1}$, and
$C=\cl{C}=(\cap_n C_n):U$. Thus, $C_n:U$, for some $n$.

Since $O_n\subseteq C_n:U$ and $U$ is upward closed, then $\UpC{O_n}\subseteq U$.

\item (2)$\implies$(3).
If $U$ is open in the Robust topology and $C:U$, by definition, there exists a $\delta>0$ for which $\UpC{B(C,\delta)}\subseteq U$. Then, $B(C,\delta)$ is the open subset that witnesses that $U$ is open in the Upper topology.

\item (3)$\implies$(4).
If $U$ is open in the Upper topology, then $U$ is the union of upward closed subsets of the form $\UpC{O}$,
therefore it is Alexandrov open.\qed
\end{itemize}
\end{proof}
As anticipated, the Robust topology captures the robustness property:
\begin{theorem}\label{thm-robust-equal}
Given two metric spaces $(\State_i,d_i)$, and a map $A:\Set(\CS(\State_1),\CS(\State_2))$, the following properties are equivalent:
\begin{enumerate}
\item $A$ is monotonic and robust in the sense of Def~\ref{def-robust}.
\item $A$ is continuous wrt the Robust topologies.
\end{enumerate}
\end{theorem}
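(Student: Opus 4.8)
We prove the two implications separately, unfolding the Robust topology on $\CS(\State_i)$ (Def~\ref{def-top}) and the definition of robustness (Def~\ref{def-robust}), and using the metric facts from the proof of Lemma~\ref{thm-top}, namely $B(B(S,\delta),\delta')\subseteq B(S,\delta+\delta')$ and $\cl{B(S,\delta)}\subseteq B(S,\delta')$ whenever $\delta<\delta'$. We also use that $\UpC{S}=\{C\in\CS(\State)\mid C\subseteq S\}$ is monotone in $S$, and that every Robust-open set is upward closed, hence stable under passing to closed subsets (Lemma~\ref{thm-top}).

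\textbf{(1)$\implies$(2).} Assume $A$ monotonic and robust, let $V\subseteq\CS(\State_2)$ be Robust-open, and fix $C\in A^{-1}(V)$; we must find $\delta>0$ with $\UpC{B(C,\delta)}\subseteq A^{-1}(V)$. Since $A(C)\in V$, choose $\epsilon>0$ with $\UpC{B(A(C),\epsilon)}\subseteq V$, and apply robustness at $C$ with $\epsilon/2$ to get $\delta>0$ with $A(C_\delta)\subseteq A(C)_{\epsilon/2}\subseteq B(A(C),\epsilon)$. Any closed $C'\subseteq B(C,\delta)$ satisfies $C'\subseteq C_\delta$, hence $A(C')\subseteq A(C_\delta)\subseteq B(A(C),\epsilon)$ by monotonicity, so $A(C')\in\UpC{B(A(C),\epsilon)}\subseteq V$, i.e.\ $C'\in A^{-1}(V)$. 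Thus $\UpC{B(C,\delta)}\subseteq A^{-1}(V)$, so $A^{-1}(V)$ is Robust-open.

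\textbf{(2)$\implies$(1).} For monotonicity: given $p\in\State_2$, the set $\{D\in\CS(\State_2)\mid p\notin D\}$ is Scott-open — it is upward closed, and if $p$ is absent from a directed intersection it is absent from some member — hence Robust-open, so $A^{-1}(\{D\mid p\notin D\})$ is Robust-open and therefore upward closed; if $C_1\subseteq C_2$ but $p\in A(C_1)\setminus A(C_2)$, then $C_2$ lies in this preimage and upward closedness puts $C_1$ in it too, contradicting $p\in A(C_1)$. For robustness: fix $C$ and $\epsilon>0$ and set $V\defeq\{D\in\CS(\State_2)\mid\exists\rho>0.\,B(D,\rho)\subseteq B(A(C),\epsilon)\}$, the Robust-interior of $\UpC{B(A(C),\epsilon)}$. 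Then $V\subseteq\UpC{B(A(C),\epsilon)}$, $A(C)\in V$ (take $\rho=\epsilon$), and $V$ is Robust-open: if $D\in V$ via $\rho$, then any closed $D'\subseteq B(D,\rho/2)$ satisfies $B(D',\rho/2)\subseteq B(D,\rho)\subseteq B(A(C),\epsilon)$, so $\UpC{B(D,\rho/2)}\subseteq V$. By continuity $A^{-1}(V)$ is Robust-open and contains $C$, so $\UpC{B(C,\delta)}\subseteq A^{-1}(V)$ for some $\delta>0$; since $C_{\delta/2}=\cl{B(C,\delta/2)}\subseteq B(C,\delta)$, we get $A(C_{\delta/2})\in V\subseteq\UpC{B(A(C),\epsilon)}$, i.e.\ $A(C_{\delta/2})\subseteq B(A(C),\epsilon)\subseteq A(C)_\epsilon$; so $\delta/2$ witnesses robustness at $C$ for $\epsilon$.

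\textbf{Expected main obstacle.} The delicate point is the robustness half of (2)$\implies$(1). One would like to pull back the neighbourhood $\UpC{B(A(C),\epsilon)}$ of $A(C)$ along $A$, but this set is only Upper-open, not Robust-open, when $\State_2$ is non-compact: a closed set inside an open ball can hug the boundary at infinity — e.g.\ in $\Real^2$ the graph of $x\mapsto\epsilon|x|/(1+|x|)$ lies inside the strip $B(\Real\times\{0\},\epsilon)$ yet admits no uniform $\rho$-neighbourhood inside it. Passing to the Robust-interior $V$ repairs this: $V$ is Robust-open by construction, still contains $A(C)$, and remains inside $\UpC{B(A(C),\epsilon)}$, so continuity of $A$ applies while the conclusion stays tight. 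Everything else is routine bookkeeping with the elementary inclusions among metric balls and their closures.
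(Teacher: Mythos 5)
Your proof is correct and follows essentially the same route as the paper's: the same use of monotonicity plus the inclusion $\cl{B(S,\delta)}\subseteq B(S,\delta')$ for $\delta<\delta'$ in (1)$\implies$(2), and in (2)$\implies$(1) the same key construction of the Robust-interior $V=\{D\mid\exists\rho>0.\,B(D,\rho)\subseteq B(A(C),\epsilon)\}$ pulled back along $A$. The only (minor) deviation is the monotonicity step, which you verify concretely via the Scott-open sets $\{D\mid p\notin D\}$ instead of the paper's appeal to specialization orders; both arguments are fine.
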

\begin{proof}See \ref{sec-proofs}.\end{proof}

\begin{lemma}\label{thm-top-equal}
We have the following inclusions among topologies on $\CS(\State)$:
\begin{enumerate}
\item The Upper topology is included in the Scott topology, when $\State$ is compact.
\item The Alexandrov topology is included in the Robust topology, when $\State$ is a discrete metric space (eg, $d(x,y)=1$ when $x\neq y$).
\end{enumerate}
\end{lemma}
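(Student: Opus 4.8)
The plan is to prove each of the two inclusions separately, working directly from the definitions of the relevant topologies in Def~\ref{def-top}.

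For part (1), I would take $U$ open in the Upper topology and show it is Scott open. Being open in the Upper topology already implies being Alexandrov open (upward closed) by Lemma~\ref{thm-top}, so only the directed-sup condition remains: given a directed set $D$ of closed subsets with $\bigsqcup D : U$, I must find $C : D$ with $C : U$. Let $C^* \defeq \bigsqcup D = \cl{\bigcup D}$. By the Upper-topology assumption there is an open $O : \OS(\State)$ with $C^* : \UpC{O} \subseteq U$, i.e.\ $O \subseteq C^*$. The key step is the following claim: when $\State$ is compact, some member $C : D$ already satisfies $O \subseteq C$, hence $C : \UpC{O} \subseteq U$. To see this, consider the closed complement $K \defeq \State \setminus O$; since $\State$ is compact and $K$ is closed, $K$ is compact. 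Each $\State \setminus C$ (for $C : D$) is open, and because $D$ is directed with $\bigcup D$ dense-closure equal to $C^*$ and $O \subseteq C^*$, one argues that the family $\{\State \setminus C \mid C : D\}$ together with $O$ covers $\State$; restricting to the compact $K$ and using directedness to combine finitely many members of $D$ into a single one gives the desired $C$. I expect this compactness argument — getting from a finite subcover back to a single element of the directed set, and correctly handling the closure in $C^* = \cl{\bigcup D}$ versus the union $\bigcup D$ — to be the main obstacle; the subtlety is that $O \subseteq C^*$ only gives $O \subseteq \cl{\bigcup D}$, so one works with $K \cap \bigcup D$ being dense in $K$ (or rather shows $K \subseteq \bigcup_{C:D}(\State\setminus C)$ directly, since any point of $K$ lies outside $O$ hence, by a suitable argument, outside some $C$).

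For part (2), I would take $U$ Alexandrov open on $\CS(\State)$ with $\State$ a discrete metric space, and show $U$ is open in the Robust topology. Fix $C : U$. I need $\delta > 0$ with $\UpC{B(C,\delta)} \subseteq U$. Since the metric is discrete — say $d(x,y) = 1$ for $x \neq y$ — choose any $\delta \leq 1$, e.g.\ $\delta = 1/2$. Then $B(C,\delta) = C$ because the only point within distance $<1/2$ of a point of $C$ is that point itself. Hence $\UpC{B(C,\delta)} = \UpC{C}$, and since $U$ is Alexandrov open (upward closed) and $C : U$, we get $\UpC{C} \subseteq U$. This part is essentially immediate once the observation $B(C,\delta) = C$ for $\delta \le 1$ is made.

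To summarize the overall structure: part (1) reduces, via Lemma~\ref{thm-top}, to verifying the directed-sup condition for Scott-openness, and this is where compactness of $\State$ is used in an essential cover-and-finite-subcover argument; part (2) is a direct computation exploiting that balls of radius $\le 1$ in a discrete metric are trivial. The only genuinely delicate point is the cover argument in part (1), and care is needed there because sups in $\cL(\State)$ are closures of unions, not plain unions.
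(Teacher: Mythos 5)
Your part (2) is correct and is essentially the paper's own argument (the paper takes $\delta=1$, you take $\delta=1/2$; either way $B(C,\delta)=C$ and upward closedness finishes it). Part (1), however, has a genuine gap: you have inverted the order on $\cL(\State)$. By Def~\ref{def-lattices} the lattice of closed sets is ordered by \emph{reverse} inclusion, so a directed $D$ is a family filtered under inclusion and its sup is $\bigsqcup D=\bigcap D$, not $\cl{\bigcup D}$; likewise, by Def~\ref{def-top}, $\UpC{O}=\{C:\CS(\State)\mid O\leq C\}$ is the set of closed sets \emph{contained in} $O$, so $C^*:\UpC{O}$ means $C^*\subseteq O$, not $O\subseteq C^*$. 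Consequently your ``key claim'' --- that some $C:D$ satisfies $O\subseteq C$ --- is both not what is needed and false as stated: take $\State=[0,1]$ and $D=\{[1/n,1]\mid n\geq 2\}$, directed under inclusion with $\cl{\bigcup D}=[0,1]$, yet no member of $D$ contains the open set $(0,1)$. The cover argument you sketch (density of $\bigcup D$ in $C^*$ plus compactness of $\State\setminus O$) cannot repair this, because the claim itself is the wrong target.

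What is actually needed, and what the paper proves, is the opposite direction: if $D$ is a filtered-under-inclusion family of closed sets with $\bigcap D\subseteq O$ and $\UpC{O}\subseteq U$, then some $C:D$ already satisfies $C\subseteq O$. Compactness enters via the finite intersection property: the sets $C\cap(\State\setminus O)$, for $C:D$, form a filtered family of closed (hence compact, since $\State$ is compact) subsets whose total intersection is empty, so some member is empty, i.e.\ $C\subseteq O$, i.e.\ $C:\UpC{O}\subseteq U$. (The paper phrases this by reducing to basic opens: it suffices to show that each $\UpC{O}$ with $O:\OS(\State)$ is Scott open, since Upper-open sets are unions of such sets.) Your instinct to use compactness together with a finite-subcover/finite-intersection argument is exactly right, but as written all the inclusions in part (1) run backwards, so the proof does not go through until you redo it with the correct order, the correct sup ($\bigcap D$), and the correct reading of $\UpC{O}$.
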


\begin{proof}
When $\State$ is a compact topological space, any closed subset of $\State$ is compact.
To prove that the Upper topology is included in the Scott topology, it suffices to show that for any open subset $O$ of $\State$, the subset $\UpC{O}$ of $\CS(\State)$ is Scott open.
Clearly $\UpC{O}$ is upward closed. Now, let $C$ be the complement of $O$. If $\{K_i|i:I\}$ is an $I$-indexed directed set of closed subsets such that $K=(\cap_{i:I} K_i)\subseteq O$, then $K'_i=K_i\cap C$ is another $I$-indexed directed subset $D'$ of closed subsets whose intersection is $\emptyset$. Since compact subsets (therefore, also the closed subsets) have the \emph{finite intersection property} and $D'$ is directed, some $K'_i$ must be $\emptyset$, or equivalently, $K_i:\UpC{O}$.

If $(\State,d)$ is discrete, then $S=\cl{S}=B(S,1)$, as all subsets are both open and closed.  Therefore, when $U$ is Alexandrov open, ie, $U$ is the union of all $\UpC{C}$ with $C:U$, it is also open in the Robust topology, because $\UpC{C}=\UpC{B(C,1)}$.
\qed\end{proof}
\begin{remark}
The Robust topology depends on the metric $d$, while the other topologies on $\CS(\State)$ are definable for any topological space $\State$.
There are metric spaces $(\State,d)$ where the four topologies on $\CS(\State)$ differ.
For instance, in the space $\State=\{x|x>0\}$ of positive reals with the usual metric $d(x,y)=|y-x|$,
let $x_n\defeq 2^{-n}$ and $\delta_n\defeq x_{n+1}$. Then,
$C=\{x_n|n:\omega\}$ is a closed subset of $\State$, $O=\cup_n B(x_n,\delta_n)$ is an open subset of $\State$, and
the following counter-examples show that the four topologies differ:
\begin{itemize}
\item $\UpC{C}$ is Alexandrov open, but it is not open in the other topologies.
\item $\UpC{O}$ is open in the Upper topology, but it is not open in the Robust topology, because $C:\UpC{O}$ but there is no $\delta>0$ such that $B(C,\delta)\subseteq O$, because $(0,\delta)\subseteq B(C,\delta)$ but $(0,\delta)\not\subseteq O$.
\item $\UpC\emptyset=\{\emptyset\}$ is open in the Robust topology, because $B(\emptyset,\delta)=\emptyset$, but it is not Scott open, because $C_n=\{x|x\geq 2^{n}\}$ is an $\omega$-chain of closed non-empty subsets of $\State$ whose intersection is $\emptyset$.
\end{itemize}
In finite dimensional Banach spaces like $\Real^n$, the compact subsets are exactly the closed subsets $C$ that are \emph{bounded}, ie, $C\subset B(0,\delta)$ for some $\delta>0$.  On the contrary, in infinite dimensional Banach spaces (under the strong topology) they form a proper subset of the closed bounded ones. For instance, neither the closure nor the boundary of a ball $B(0,\delta)$ is compact.
\end{remark}

\begin{theorem}\label{thm-robust-main}
Given a map $A:\Po(\cL(\State_1),\cL(\State_2))$, with  $\State_1$ and $\State_2$ metric spaces:
\begin{enumerate}
\item If $\State_1$ and $\State_2$ are compact, then, $A$ is robust $\iff$ $A$ is Scott continuous.
\item If $\State_1$ and $\State_2$ are discrete, then, $A$ is robust.
\end{enumerate}
\end{theorem}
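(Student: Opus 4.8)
The plan is to assemble the theorem from the machinery already in place, since essentially all the work has been offloaded to the preceding lemmas. For part (1), I would first invoke Theorem~\ref{thm-robust-equal} to translate the robustness of $A$ into continuity of $A$ with respect to the Robust topologies on $\CS(\State_1)$ and $\CS(\State_2)$, and separately recall that Scott continuity of $A$ is exactly continuity with respect to the Scott topologies (this is the remark that Scott continuous maps $\Top(\OS_S(X),\OS_S(Y))$ are the $D$-continuous maps). So the whole content of part (1) is that, on a compact metric space, the Robust topology and the Scott topology on $\CS(\State)$ coincide.

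That coincidence follows by combining Lemma~\ref{thm-top} with Lemma~\ref{thm-top-equal}(1). Lemma~\ref{thm-top} gives, in any metric space, the chain Scott $\Rightarrow$ Robust $\Rightarrow$ Upper $\Rightarrow$ Alexandrov; in particular Scott open $\Rightarrow$ Robust open always. Lemma~\ref{thm-top-equal}(1) gives, when $\State$ is compact, that Upper $\Rightarrow$ Scott. Chaining these when both $\State_1$ and $\State_2$ are compact: an open set is Scott open iff it is open in the Upper topology iff it is open in the Robust topology; hence the Scott and Robust topologies agree on both $\CS(\State_1)$ and $\CS(\State_2)$. Therefore $A$ is continuous for the Robust topologies precisely when it is continuous for the Scott topologies, which is the desired equivalence. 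I would write this as: robust $\stackrel{\ref{thm-robust-equal}}{\iff}$ Robust-continuous $\iff$ Scott-continuous $\iff$ $D$-continuous $\iff$ Scott continuous.

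For part (2), when $\State_1$ (and $\State_2$) are discrete, every monotonic map $A:\Po(\cL(\State_1),\cL(\State_2))$ is robust. By Lemma~\ref{thm-top-equal}(2), on a discrete metric space the Alexandrov topology is included in the Robust topology; since the Robust topology is always contained in the Alexandrov topology (Lemma~\ref{thm-top}, implications (2)$\Rightarrow$(3)$\Rightarrow$(4)), the two coincide on $\CS(\State_1)$. A monotonic map is by definition Alexandrov continuous (continuity for Alexandrov topologies = monotonicity), so $A$ is continuous for the Alexandrov = Robust topology on the domain; and since the Robust topology on the codomain is contained in its Alexandrov topology, $A$ is a fortiori continuous into the Robust topology on $\CS(\State_2)$. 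By Theorem~\ref{thm-robust-equal} this continuity with respect to the Robust topologies is exactly robustness, so $A$ is robust. Alternatively one can argue directly from Def~\ref{def-robust}: in a discrete metric space $C_\delta = C$ for $\delta < 1$, so $A(C_\delta) = A(C) \subseteq A(C)_\epsilon$.

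I do not anticipate a genuine obstacle here — the theorem is a corollary assembled from Lemmas~\ref{thm-top}, \ref{thm-top-equal} and Theorem~\ref{thm-robust-equal}. The one place to be careful is bookkeeping the direction of the topology inclusions on \emph{both} the domain and the codomain: continuity needs the codomain topology to be coarser (fewer opens) and the domain topology finer, so when the two candidate topologies actually coincide the equivalence is clean, but one should state explicitly that the coincidence holds on $\CS(\State_1)$ and $\CS(\State_2)$ separately before concluding the map-level equivalence. A secondary small point is noting that compactness of the $\State_i$ is used only through Lemma~\ref{thm-top-equal}(1), and discreteness only through Lemma~\ref{thm-top-equal}(2), so the two parts of the theorem are independent.
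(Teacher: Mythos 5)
Your proposal is correct and follows essentially the same route as the paper: combine Theorem~\ref{thm-robust-equal} with Lemmas~\ref{thm-top} and~\ref{thm-top-equal} to see that the Robust topology coincides with the Scott topology on $\CS(\State_i)$ in the compact case and with the Alexandrov topology in the discrete case, then transfer continuity accordingly. Your version is just slightly more explicit than the paper's (which leaves the appeal to Theorem~\ref{thm-robust-equal} and the domain/codomain bookkeeping implicit), and the extra direct argument $C_\delta=C$ for the discrete case matches the paper's reading of ``discrete metric space''.
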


\begin{proof}
We use Lemma~\ref{thm-top}~and~\ref{thm-top-equal}:
\begin{enumerate}
\item If $\State_i$ is a compact metric space, the Upper and Scott topologies on $\CS(\State_i)$ coincide.  The result follows as the Robust topology is between the two.
\item If $\State_i$ is a discrete metric space, the Alexandrov and Robust topologies on $\CS(\State_i)$ coincide. The result follows as every the monotonic map is Alexandrov continuous.\qed
\end{enumerate}
\end{proof}
\begin{remark}
Theorem~\ref{thm-robust-main} says that for discrete metric spaces robustness is not an interesting notion, because every monotonic map $A:\Po(\cL(\State_1),\cL(\State_2))$ is robust, and moreover $\cL(\State_i)=\pL(\State_i)$.
When the metric spaces are compact,  there is a best robust approximation of $A$ given by $\bca{A}$.
In all other cases robustness is an interesting property, but there is no \emph{best way} to approximate $A$ with a robust map, because the subset of robust maps is not closed wrt arbitrary sups computed in $\Po(\cL(\State_1),\cL(\State_2))$.
\end{remark}

\begin{example}\label{ex-no-robust}
We give an example of a compact HS $\hs$ with support $C$, whose safe reachability map $\Rs{\hs}:\Po(\cL(C),\cL(C))$ is not robust.

Let $\hs$ be $\hs_E$ of Example~\ref{ex-expand}---which can only flow---and let $C=[0,M]$.
Since $C$ is compact, robustness is equivalent to Scott continuity.
Consider the $\omega$-chain $(I_n|n)$ with $I_n=[0,M/2^n]$, whose sup is the singleton $\{0\}$. Then:

\[\Rs{\hs}(I_n)=[0,M],\qquad \Rs{\hs}(\{0\})=\{0\}.\]

Thus, $\Rs{\hs}$  does not preserve the sup of the $\omega$-chain $(I_n|n)$.
Consider its best Scott continuous approximation $\bca{\Rs{\hs}}$ (see Thm~\ref{thm-bca}), which is given by:
\begin{itemize}
\item $\bca{\Rs{\hs}}(\emptyset)=\emptyset$.
\item $\bca{\Rs{\hs}}(U)=[m,M]$ with $m=\min(U)$, when $U \neq\emptyset$ is compact.
\end{itemize}
Then, $\Rs{\hs}(\{0\})=\{0\}\subset[0,M] =\bca{\Rs{\hs}}(\{0\})$.
\end{example}

\definecolor{zenoGreen}{rgb}{0.1,0.8,0.1}
\newcommand\colorIC{blue}
\newcommand\colorZeno{zenoGreen}
\newcommand\colorTraj{blue}
\newcommand\colorReachSet{blue}
\newcommand\colorSafereachSet{zenoGreen}
\newcommand\colorUnreachSet{red}
\newcommand\widthTraj{thick}
\newcommand\widthIC{1.5pt}

\pgfplotsset{compat = 1.8}

\section{Figures and Examples}\label{sec-figs}
We go through the hybrid systems introduced in Sec~\ref{sec-HS} and for each of them we compare the sets computed by different analyses.  More precisely, given a HS $\hs$ on $\State$ and a state $s_0$ in the support $\Sp(\hs)$ of $\hs$, take as set of initial states $I=\{s_0\}$, then define four subsets $S_f\subseteq S_s\subseteq S_r\subseteq S_R$ of $\State$:
\begin{itemize}
\item $S_f\defeq\Rf{\hs}(I)$ set of states reachable in finitely many transition.
\item $S_s\defeq\Rs{\hs}(I)$ safe approximation of the set of states reachable in finite time.  One should use $\cl{I}$ in place of $I$, but a singleton is already closed.
\item[] To define $S_r$ one must restrict $\Rs{\hs}$ to a map in $\Po(\cL(S_0),\cL(S_0))$, where $S_0$ is a \emph{sufficiently large} compact subset of $\State$.  When $\cl{\hs}$ is compact, the canonical choice for $S_0$ is $\Sp(\cl{\hs})$.
\item $S_r\defeq\bca{\Rs{\hs}}(I)$ approximation of $S_s$ robust wrt perturbations to $I$.
\item[] To define $S_R$ one must restrict $\Rs{}$ to a map in $\Po(\HS_c(\hs_0)\X\cL(S_0),\cL(S_0))$, where $\hs_0$ is a \emph{sufficiently large} compact HS on $\State$ and $S_0=\Sp(\hs_0)$. There is no canonical choice. $\hs_0$ must capture the \emph{allowed} perturbations to $\hs$, thus $\hs_0\leq\cl{\hs}$ in $\HS_c(\State)$, where $\cl{\hs}$ is the closure of $\hs$.
\item $S_R\defeq\bca{\Rs{}}(\cl{\hs},I)$ approximation of $S'_r\defeq\bca{\Rs{\cl{\hs}}}(I)$ robust wrt perturbations to $\hs$ \& $I$ \emph{allowed} by $\hs_0$.
\end{itemize}
In Figures we adopt the following color coding for states and trajectories:
\begin{itemize}
\item a bullet \raisebox{0.1em}{\begin{tikzpicture}
 \filldraw [\colorIC] (0,0) circle (\widthIC);
\end{tikzpicture}} indicates the initial state $s_0$
\item blue - \textcolor{\colorReachSet}{$S_f$} and the \textcolor{\colorTraj}{part of a trajectory} reachable in finitely many transitions
\item green - \textcolor{\colorSafereachSet}{$S_s-S_f$} and the \textcolor{\colorZeno}{rest of a trajectory} not reachable in finitely many transitions, like Zeno points and beyond
\item red - \textcolor{\colorUnreachSet}{$S_r-S_s$}, there is no analogue for a trajectory starting from $s_0$.
\end{itemize}

\pgfarrowsdeclare{bracket}{bracket}
{
\arrowsize=0.2pt
\advance\arrowsize by .5\pgflinewidth
\pgfarrowsleftextend{-4\arrowsize-.5\pgflinewidth}
\pgfarrowsrightextend{.5\pgflinewidth}
}
{
\arrowsize=0.2pt
\advance\arrowsize by .5\pgflinewidth
\pgfsetdash{}{0pt} 
\pgfsetroundjoin 
\pgfsetroundcap 
\pgfpathmoveto{\pgfpoint{-4\arrowsize}{4\arrowsize}}
\pgfpathlineto{\pgfpoint{0\arrowsize}{4\arrowsize}}
\pgfpathlineto{\pgfpoint{0\arrowsize}{-4\arrowsize}}
\pgfpathlineto{\pgfpoint{-4\arrowsize}{-4\arrowsize}}
\pgfusepathqstroke
}

\pgfarrowsdeclare{parenthesis}{parenthesis}
{
\arrowsize=0.2pt
\advance\arrowsize by .5\pgflinewidth
\pgfarrowsleftextend{-4\arrowsize-.5\pgflinewidth}
\pgfarrowsrightextend{.5\pgflinewidth}
}
{
\arrowsize=0.2pt
\advance\arrowsize by .5\pgflinewidth
\pgfsetdash{}{0pt} 
\pgfsetroundjoin 
\pgfsetroundcap 
\pgfpathmoveto{\pgfpoint{-4\arrowsize}{4\arrowsize}}
\pgfpatharc{90}{-90}{4\arrowsize}
\pgfusepathqstroke
}

\subsection{Expand}\label{sec-ex-expland}

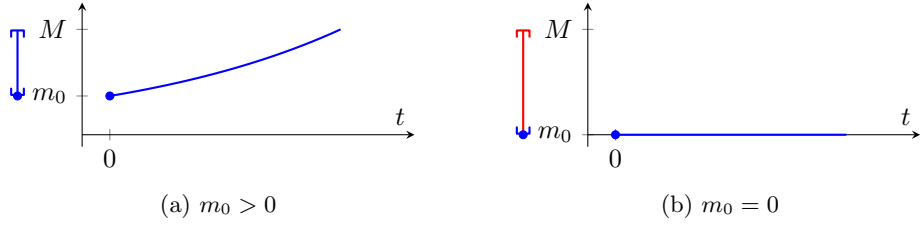
\begin{figure}[t]
\centering

\begin{subfigure}[b]{0.45\textwidth}

\begin{tikzpicture}
  \begin{axis} 
      [ width = 6cm
      , height = 3.5cm
      , xlabel = $t$
      , y label style={
          at={(axis description cs:0.435,1)},rotate=-90,anchor=north
        }
      , axis x line = middle
      , axis y line = left
      , xmax = 1.2
      , ymax = 3.1
      , ymin = 0.0
      , xtick = {0}
      , ytick = {1,2.7182818285}
      , yticklabels={$m_0$,$M$}
      , enlarge x limits=true
      , enlarge y limits=true
      , clip=false
      ]
      \addplot [\colorTraj,\widthTraj,domain=0:1] { exp(x) };
      \filldraw[\colorIC] (axis cs: 0,1) circle (\widthIC);
      \draw [thick, bracket-bracket, \colorReachSet]
            (axis cs: -0.4,1) -- 
            (axis cs: -0.4,2.7182818285);
      \filldraw [\colorIC]
            (axis cs: -0.4,1) circle (\widthIC);
  \end{axis}
\end{tikzpicture}
\caption{$m_0 > 0$}

\end{subfigure}
\hfill
\begin{subfigure}[b]{0.45\textwidth}
\centering
\begin{tikzpicture}
  \begin{axis} 
      [ width = 6cm
      , height = 3.5cm
      , xlabel = $t$
      , y label style={
          at={(axis description cs:0.435,1)},rotate=-90,anchor=north
        }
      , axis x line = middle
      , axis y line = left
      , xmax = 1.2
      , ymax = 3.1
      , ymin = 0.0
      , xtick = {0}
      , ytick = {0,2.7182818285}
      , yticklabels={$m_0$,$M$}
      , enlarge x limits=true
      , enlarge y limits=true
      , clip = false
      ]
      \addplot [\colorTraj,\widthTraj,domain=0:1] { 0 };
      \filldraw[\colorIC] (axis cs: 0,0) circle (\widthIC);
      \draw [thick, bracket-, \colorReachSet]
            (axis cs: -0.4,0) -- 
            (axis cs: -0.4,0.00001);
      \draw [thick, -bracket, red]
            (axis cs: -0.4,0) -- 
            (axis cs: -0.4,2.7182818285);
      \filldraw [\colorIC]
            (axis cs: -0.4,0) circle (\widthIC);
  \end{axis}
\end{tikzpicture}
\caption{$m_0 = 0$}

\end{subfigure}

\caption{Trajectories and reachable states of $\hs_E$ (Expand).
In each case the set of states reachable from $I=\{m_0\}$ is on the left of the trajectory starting from $m_0$.}
\label{fig:expand-trajectories}
\end{figure}
$\hs_E$ of Example~\ref{ex-expand} is a compact \emph{deterministic} HS on $\Real$, whose behavior is depicted in Fig~\ref{fig:expand-trajectories}.  The canonical choice for $S_0$ is the interval $[0,M]$.
For $\hs_0$ we take $F_0=S_0\X[-M,M]$ and $G_0=\emptyset$, whose support is still $S_0$.
\begin{itemize}
\item $S_f=S_s=S_r=S_R=[m_0,M]$ when $0<m_0\leq M$
\item $S_f=S_s=[0]\subset [0,M]=S_r=S_R$ when $0=m_0$.
\end{itemize}
We now explain why making the set of reachable states robust wrt perturbations to $\hs$ does not make a difference
in the case $0<m_0\leq M$.
To approximate $S_R$ we take a small $\delta>0$ and define $I_\delta\ll I$ in $\cL(S_0)$ and $\hs_\delta\ll\hs$ in $\HS_c(\hs_0)$.

Let $I_\delta=[m_0-\delta,M]$ and $F_\delta=\{(m,\dot{m})|0\leq m\leq M\land m-\delta\leq\dot{m}\leq M\}$.
By taking $2*\delta<m_0$ we can ensure that $F_\delta(m)\subset (0,M]$ for any $m:I_\delta$, therefore
$S_R=\bca{\Rs{}}(\cl{\hs},I)\subseteq\Rs{}(\hs_\delta,I_\delta)=[m_0-\delta,M]\to S_r$ when $\delta\to 0$.

\subsection{Decay}
$\hs_D$ of Example~\ref{ex-decay} is a deterministic HS on $\Real$, whose behavior is depicted in Fig~\ref{fig:decay-trajectories}.
Its closure is compact but it is no longer deterministic.  The canonical choice for $S_0$ is the interval $[0,M]$.
For $\hs_0$ we take $F_0=S_0\X[-M,M]$ and $G_0=S_0\times S_0$, whose support is still $S_0$.

\begin{figure}[th]
\centering

\begin{subfigure}[b]{0.45\textwidth}

\begin{tikzpicture}
  \begin{axis} 
      [ width = 6cm
      , height = 3.5cm
      , xlabel = $t$
      , y label style={
          at={(axis description cs:0.435,1)},rotate=-90,anchor=north
        }
      , axis x line = middle
      , axis y line = left
      , xmax = 1.2
      , ymax = 2
      , ymin = 0.0
      , xtick = {0}
      , ytick = {1,1.75}
      , yticklabels={$m_0$,$M$}
      , enlarge x limits=true
      , enlarge y limits=true
      , clip = false
      ]
      \addplot [\colorTraj,\widthTraj,domain=0:1] { exp(-x) };
      \filldraw[\colorIC] (axis cs: 0,1) circle (\widthIC);
      \draw [thick, parenthesis-bracket, \colorReachSet]
            (axis cs: -0.4,0) -- 
            (axis cs: -0.4,1);
      \draw [thick, -bracket, \colorSafereachSet]
            (axis cs: -0.4,1) -- 
            (axis cs: -0.4,1.75);
      \filldraw [\colorIC]
            (axis cs: -0.4,1) circle (\widthIC);
      \filldraw [\colorSafereachSet]
            (axis cs: -0.4,0) circle (\widthIC);
  \end{axis}
\end{tikzpicture}
\caption{$m_0 > 0$}

\end{subfigure}
\hfill
\begin{subfigure}[b]{0.45\textwidth}
\centering
\begin{tikzpicture}
  \begin{axis} 
      [ width = 6cm
      , height = 3.5cm
      , xlabel = $t$
      , y label style={
          at={(axis description cs:0.435,1)},rotate=-90,anchor=north
        }
      , axis x line = middle
      , axis y line = left
      , xmax = 1.2
      , ymax = 2
      , ymin = 0.0
      , xtick = {0}
      , ytick = {0,1.75}
      , yticklabels={$m_0$,$M$}
      , enlarge x limits=true
      , enlarge y limits=true
      , clip = false
      ]
      \addplot [\colorTraj,\widthTraj,domain=0:1] { 1.75*exp(-x) };
      \filldraw[\colorIC] (axis cs: 0,0) circle 
        (\widthIC);
      \filldraw[draw=black,fill=white] (axis cs: 0,1.75) circle 
        (\widthIC);
      \draw [thick, bracket-bracket, \colorReachSet]
            (axis cs: -0.4,0) -- 
            (axis cs: -0.4,1.75);
      \filldraw [\colorIC]
            (axis cs: -0.4,0) circle (\widthIC);
  \end{axis}
\end{tikzpicture}
\caption{$m_0 = 0$}

\end{subfigure}

\caption{Trajectories and reachable states of $\hs_D$ (Decay).
In each case the set of states reachable from $I=\{m_0\}$ is on the left of the trajectory starting from $m_0$.}
\label{fig:decay-trajectories}
\end{figure}
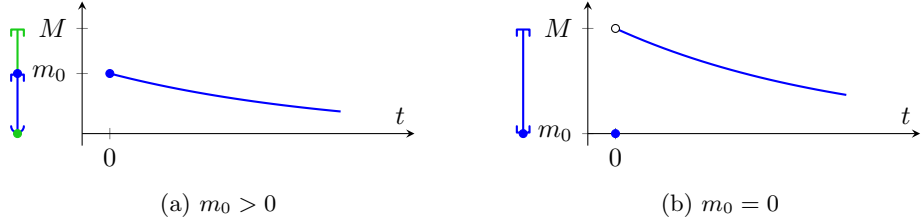
\begin{itemize}
\item $S_f=(0,m_0]\subset[0,M]=S_s=S_r=S_R$ when $0<m_0\leq M$
\item $S_f=S_s=S_r=S_R=[0,M]$ when $0=m_0$.
\end{itemize}
$S_s=S_r=S_R=S_0$, because $S_s=S_0$ and these subsets cannot be bigger than the support of $\hs_0$.
This result does not change, when $\hs_0$ is replaced with a HS with a bigger support, but the proof is not as simple
(see  Sec~\ref{sec-ex-expland}).

\subsection{Bouncing Ball}
$\hs_B$ of Example~\ref{ex-BB} is a deterministic HS on $\Real^2$, its behavior depends on the coefficient of restitution $b$
(see Fig~\ref{fig:bb-trajectories}).  The closure of $\hs_B$ is not compact and its support is the closed subset
$\{(h,v)|0\geq h\}$.  However, compactness is irrelevant to define and compare $S_f$ and $S_s$.
Let $s_0=(0,v_0)$ with $0<v_0<V$ and $S(u)\defeq\{(h,v)|0\geq h\land E(h,v)=E(0,u)\}$ be the set of states whose energy $E(h,v)=h+\frac{v^2}{2}$ is exactly $E(0,u)$, then the sets $S_f$ and $S_s$ are (see Fig~\ref{fig:bb-states}):

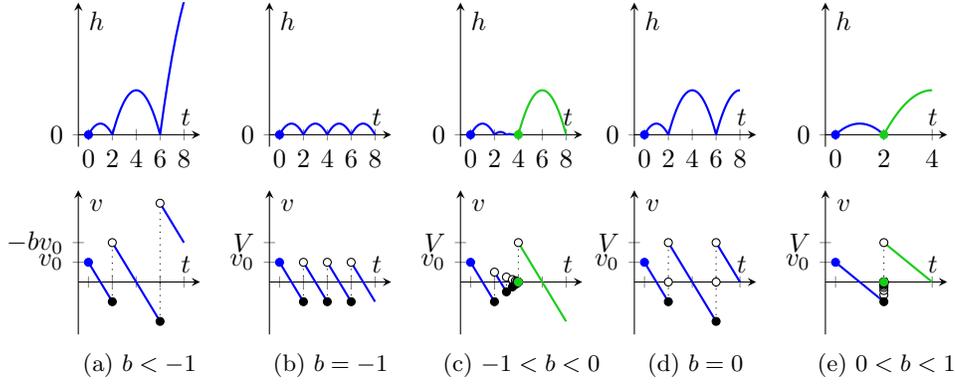
\begin{figure}[t]
\centering
\newcommand\widthSubFigBB{0.181\textwidth}
\newcommand\widthSubFigBBAxis{3.2cm}
\newcommand\ymaxBBh{54}
\newcommand\yminBBh{0.0}
\newcommand\ymaxBBv{40.5}
\newcommand\yminBBv{-20.5}

\begin{subfigure}[b]{\widthSubFigBB}
\centering
\begin{tikzpicture}
  [ trim axis left 
  ]
  \begin{axis} 
      [ width = \widthSubFigBBAxis
      , height = 3.5cm
      , xlabel = $t$
      , ylabel = $h$
      , y label style={
          at={(axis description cs:0.15,1)},rotate=-90,anchor=north
        }
      , axis x line = middle
      , axis y line = left
      , xmax = 8.5
      , ymax = 20.5
      , ymin = 0.0
      , ymax = \ymaxBBh
      , xtick = {0,2,4,6,8}
      , ytick = {0}
      , yticklabels={0}
      , enlarge x limits=true
      , enlarge y limits=true
      ]
        \foreach \zl/\zr in 
            { 0.0/2.0
            , 2.0/6.0
            , 6.0/14.0
            } {
          \edef\temp{\noexpand
            \addplot [\colorTraj,\widthTraj,domain=\zl:\zr] 
              { -5*(x-\zl)*(x-\zr) };
          }
          \temp
        }
        \filldraw[\colorIC] 
          (axis cs: 0,0) circle (\widthIC);
  \end{axis}
  \begin{axis} 
      [ width = \widthSubFigBBAxis
      , height = 3.5cm
      , xlabel = $t$
      , ylabel = $v$
      , y label style={
          at={(axis description cs:0.15,1)},rotate=-90,anchor=north
        }
      , yshift = -2.5cm
      , axis x line = middle
      , axis y line = left
      , xmax = 8.5
      , ymax = \ymaxBBv
      , ymin = \yminBBv
      , xtick = {0,2,4,6,8}
      , xticklabels={}
      , ytick = {10,20}
      , yticklabels={$v_0$,$-bv_0$}
      , enlarge x limits=true
      , enlarge y limits=true
      ]
        \addplot [\colorTraj,\widthTraj,domain=0:2] 
          { 10-10*x };
        \draw[dotted] 
          (axis cs: 2,-10) -- (axis cs: 2,20);
        \filldraw[black] 
          ( axis cs: 2
          , -10
          ) circle (\widthIC);
        \addplot [\colorTraj,\widthTraj,domain=2:6] 
          { 40-10*x };
        \filldraw[draw=black,fill=white] 
          ( axis cs: 2
          , 20
          ) circle (\widthIC);
        \draw[dotted] 
          (axis cs: 6,-20) -- (axis cs: 6,40);
        \filldraw[black] 
          ( axis cs: 6
          , -20
          ) circle (\widthIC);
        \addplot [\colorTraj,\widthTraj,domain=6:8] 
          { 100-10*x };
        \filldraw[draw=black,fill=white] 
          ( axis cs: 6
          , 40
          ) circle (\widthIC);

        \filldraw[\colorIC] 
          (axis cs: 0,10) circle (\widthIC);
  \end{axis}
\end{tikzpicture}
\caption{$b < -1$}

\end{subfigure}
\hfill
\begin{subfigure}[b]{\widthSubFigBB}
\centering
\begin{tikzpicture}
  [ trim axis left 
  ]
  \begin{axis} 
      [ width = \widthSubFigBBAxis
      , height = 3.5cm
      , xlabel = $t$
      , ylabel = $h$
      , y label style={
          at={(axis description cs:0.15,1)},rotate=-90,anchor=north
        }
      , axis x line = middle
      , axis y line = left
      , xmax = 8.5
      , ymax = 20.5
      , ymin = 0.0
      , ymax = \ymaxBBh
      , xtick = {0,2,4,6,8}
      , ytick = {0}
      , yticklabels={0}
      , enlarge x limits=true
      , enlarge y limits=true
      ]
        \foreach \zl/\zr in 
            { 0.0/2.0
            , 2.0/4.0
            , 4.0/6.0
            , 6.0/8.0
            } {
          \edef\temp{\noexpand
            \addplot [\colorTraj,\widthTraj,domain=\zl:\zr] 
              { -5*(x-\zl)*(x-\zr) };
          }
          \temp
        }
        \filldraw[\colorIC] 
          (axis cs: 0,0) circle (\widthIC);
  \end{axis}
  \begin{axis} 
      [ width = \widthSubFigBBAxis
      , height = 3.5cm
      , xlabel = $t$
      , ylabel = $v$
      , y label style={
          at={(axis description cs:0.15,1)},rotate=-90,anchor=north
        }
      , yshift = -2.5cm
      , axis x line = middle
      , axis y line = left
      , xmax = 8.5
      , ymax = \ymaxBBv
      , ymin = \yminBBv
      , xtick = {0,2,4,6,8}
      , xticklabels={}
      , ytick = {10,20}
      , yticklabels={$v_0$,$V$}
      , enlarge x limits=true
      , enlarge y limits=true
      ]
        \addplot 
          [ \colorTraj
          , \widthTraj
          , domain=0:2
          ] { 10-10*x };
        \foreach \zi/\zl/\zr
          [remember = \zi as \zil] in 
            { -10.0/2.0/4.0
            , -10.0/4.0/6.0
            , -10.0/6.0/8.0
            } {
          \edef\tempTraj{\noexpand
            \addplot 
              [ \colorTraj
              , \widthTraj
              , domain=\zl:\zr
              ] { 10-10*(x-\zl) };
          }
          \edef\tempReset{\noexpand
            \draw[dotted] 
              (axis cs: \zl,\zi) -- (axis cs: \zl,-\zi);
          }
          \edef\tempSource{\noexpand
            \filldraw[black] 
              ( axis cs: \zl
              , \zi
              ) circle (\widthIC);
          }
          \edef\tempTarget{\noexpand
            \filldraw[draw=black,fill=white] 
              ( axis cs: \zl
              , -\zi
              ) circle (\widthIC);
          }
          \tempTraj;
          \tempReset;
          \tempSource;
          \tempTarget;
        }
        \filldraw[\colorIC] 
          (axis cs: 0,10) circle (\widthIC);
  \end{axis}
\end{tikzpicture}
\caption{$b = -1$}

\end{subfigure}
\hfill
\begin{subfigure}[b]{\widthSubFigBB}
\centering

\begin{tikzpicture}
  [ trim axis left 
  ]
  \begin{axis} 
      [ width = \widthSubFigBBAxis
      , height = 3.5cm
      , xlabel = $t$
      , ylabel = $h$
      , y label style={
          at={(axis description cs:0.15,1)},rotate=-90,anchor=north
        }
      , axis x line = middle
      , axis y line = left
      , xmax = 8.5
      , ymin = 0.0
      , ymax = \ymaxBBh
      , xtick = {0,2,4,6,8}
      , ytick = {0}
      , yticklabels={0}
      , enlarge x limits=true
      , enlarge y limits=true
      ]
        \foreach \zl/\zr in 
            { 0.0/2.0
            , 2.0/3.0
            , 3.0/3.5
            , 3.5/3.75
            , 3.75/3.875
            , 3.875/3.9375
            , 3.9375/3.96875
            , 3.96875/3.984375
            , 3.984375/3.9921875
            , 3.9921875/3.99609375
            , 3.99609375/3.998046875
            , 3.998046875/3.9990234375
            } {
          \edef\temp{\noexpand
            \addplot [\colorTraj,\widthTraj,domain=\zl:\zr] 
              { -5*(x-\zl)*(x-\zr) };
          }
          \temp
        }
        \addplot [\colorZeno,\widthTraj,domain=4:8] 
          { -5*(x-4)*(x-8) };
        \filldraw[\colorIC] 
          (axis cs: 0,0) circle (\widthIC);
        \filldraw[\colorZeno] 
          (axis cs: 4,0) circle (\widthIC);
  \end{axis}
  \begin{axis} 
      [ width = \widthSubFigBBAxis
      , height = 3.5cm
      , xlabel = $t$
      , ylabel = $v$
      , y label style={
          at={(axis description cs:0.15,1)},rotate=-90,anchor=north
        }
      , yshift = -2.5cm
      , axis x line = middle
      , axis y line = left
      , xmax = 8.5
      , ymax = \ymaxBBv
      , ymin = \yminBBv
      , xtick = {0,2,4,6,8}
      , xticklabels={}
      , ytick = {10,20}
      , yticklabels={$v_0$,$V$}
      , enlarge x limits=true
      , enlarge y limits=true
      ]
        \addplot [\colorTraj,\widthTraj,domain=0:2] { 10-10*x };
        \foreach \zi/\zl/\zr
          [remember = \zi as \zil] in 
            { -10.0/2.0/3.0
            , -5.0/3.0/3.5
            , -2.5/3.5/3.75
            , -1.25/3.75/3.875
            , -0.625/3.875/3.9375
            , -0.3125/3.9375/3.96875
            , -0.15625/3.96875/3.984375
            , -7.8125e-2/3.984375/3.9921875
            , -3.90625e-2/3.9921875/3.99609375
            , -1.953125e-2/3.99609375/3.998046875
            , -9.765625e-3/3.998046875/3.9990234375
            } {
          \edef\tempTraj{\noexpand
            \addplot 
              [ \colorTraj
              , \widthTraj
              , domain=\zl:\zr
              ] { -10*(x-((\zl+\zr)/2)) };
          }
          \edef\tempReset{\noexpand
            \draw[dotted] 
              (axis cs: \zl,\zi) -- (axis cs: \zl,-\zi/2);
          }
          \edef\tempSource{\noexpand
            \filldraw[black] 
              ( axis cs: \zl
              , \zi
              ) circle (\widthIC);
          }
          \edef\tempTarget{\noexpand
            \filldraw[draw=black,fill=white] 
              ( axis cs: \zl
              , -\zi/2
              ) circle (\widthIC);
          }
          \tempTraj;
          \tempReset;
          \tempSource;
          \tempTarget;
        }
        \addplot [\colorZeno,\widthTraj,domain=4:8] 
          { -10*(x-6) };
        \draw[dotted] 
          (axis cs: 4,0) -- (axis cs: 4,20);
        \filldraw[black] 
          (axis cs: 4, 0) circle (\widthIC);
        \filldraw[draw=black,fill=white] 
          (axis cs: 4, 20) circle (\widthIC);
        \filldraw[\colorIC] 
          (axis cs: 0,10) circle (\widthIC);
        \filldraw[\colorZeno] 
          (axis cs: 4,0) circle (\widthIC);
  \end{axis}
\end{tikzpicture}
\caption{$-1 < b < 0$}

\end{subfigure}
\begin{subfigure}[b]{\widthSubFigBB}
\centering

\begin{tikzpicture}
  [ trim axis left 
  ]
  \begin{axis} 
      [ width = \widthSubFigBBAxis
      , height = 3.5cm
      , xlabel = $t$
      , ylabel = $h$
      , y label style={
          at={(axis description cs:0.15,1)},rotate=-90,anchor=north
        }
      , axis x line = middle
      , axis y line = left
      , xmax = 8.5
      , ymin = 0.0
      , ymax = \ymaxBBh
      , xtick = {0,2,4,6,8}
      , ytick = {0}
      , yticklabels={0}
      , enlarge x limits=true
      , enlarge y limits=true
      ]
        \addplot [\colorTraj,\widthTraj,domain=0:2] 
          { -5*x*(x-2) };
        \addplot [\colorTraj,\widthTraj,domain=2:6] 
          { -5*(x-2)*(x-6) };
        \addplot [\colorTraj,\widthTraj,domain=6:8] 
          { -5*(x-6)*(x-10) };
        \filldraw[\colorIC] 
          (axis cs: 0,0) circle (\widthIC);
  \end{axis}
  \begin{axis} 
      [ width = \widthSubFigBBAxis
      , height = 3.5cm
      , xlabel = $t$
      , ylabel = $v$
      , y label style={
          at={(axis description cs:0.15,1)},rotate=-90,anchor=north
        }
      , yshift = -2.5cm
      , axis x line = middle
      , axis y line = left
      , xmax = 8.5
      , ymax = \ymaxBBv
      , ymin = \yminBBv
      , xtick = {0,2,4,6,8}
      , xticklabels={}
      , ytick = {10,20}
      , yticklabels={$v_0$,$V$}
      , enlarge x limits=true
      , enlarge y limits=true
      ]
        \addplot [\colorTraj,\widthTraj,domain=0:2] 
          { 10-10*x };
        \addplot [\colorTraj,\widthTraj,domain=2:6] 
          { 40-10*x };
        \draw[dotted] 
          (axis cs: 2,-10) -- (axis cs: 2,20);
        \filldraw[black] 
          (axis cs: 2, -10) circle (\widthIC);
        \filldraw[draw=black,fill=white] 
          (axis cs: 2, 20) circle (\widthIC);
        \filldraw[draw=black,fill=white] 
          (axis cs: 2, 0) circle (\widthIC);
        \filldraw[\colorIC] 
          (axis cs: 0,10) circle (\widthIC);
        \filldraw[black] 
          (axis cs: 6, -20) circle (\widthIC);
        \draw[dotted]
          (axis cs: 6,-20) -- (axis cs: 6,20);
        \addplot [\colorTraj,\widthTraj,domain=6:8] 
          { 80-10*x };
        \filldraw[draw=black,fill=white] 
          (axis cs: 6, 0) circle (\widthIC);
        \filldraw[draw=black,fill=white] 
          (axis cs: 6, 20) circle (\widthIC);
  \end{axis}
\end{tikzpicture}
\caption{$b = 0$}

\end{subfigure}
\hfill
\begin{subfigure}[b]{\widthSubFigBB}
\centering

\begin{tikzpicture}
  [ trim axis left 
  ]
  \begin{axis} 
      [ width = \widthSubFigBBAxis
      , height = 3.5cm
      , xlabel = $t$
      , ylabel = $h$
      , y label style={
          at={(axis description cs:0.15,1)},rotate=-90,anchor=north
        }
      , axis x line = middle
      , axis y line = left
      , xmax = 4.2
      , ymin = 0.0
      , ymax = \ymaxBBh
      , xtick = {0,2,4}
      , ytick = {0}
      , yticklabels={0}
      , enlarge x limits=true
      , enlarge y limits=true
      ]
        \addplot [\colorTraj,\widthTraj,domain=0:2] 
          { -5*x*(x-2) };
        \addplot [\colorZeno,\widthTraj,domain=2:4] 
          { -5*(x-2)*(x-6) };
        \filldraw[\colorZeno] 
          (axis cs: 2,0) circle (\widthIC);
        \filldraw[\colorIC] 
          (axis cs: 0,0) circle (\widthIC);
  \end{axis}
  \begin{axis} 
      [ width = \widthSubFigBBAxis
      , height = 3.5cm
      , xlabel = $t$
      , ylabel = $v$
      , y label style={
          at={(axis description cs:0.15,1)},rotate=-90,anchor=north
        }
      , yshift = -2.5cm
      , axis x line = middle
      , axis y line = left
      , xmax = 4.2
      , ymax = \ymaxBBv
      , ymin = \yminBBv
      , xtick = {0,2,4}
      , xticklabels={}
      , ytick = {10,20}
      , yticklabels={$v_0$,$V$}
      , enlarge x limits=true
      , enlarge y limits=true
      ]
        \addplot [\colorTraj,\widthTraj,domain=0:2] 
          { 10-10*x };
        \addplot [\colorZeno,\widthTraj,domain=2:4] 
          { 40-10*x };
        \draw[dotted] 
          (axis cs: 2,-10) -- (axis cs: 2,20);
        \foreach \exponent in 
            { 1,2,3,4,5,6,7,8,9
            , 10,11,12,13,14,15,16,17,18,19
            , 20,21,22,23,24,25,26,27,28,29
            , 30,31,32,33,34,35,36,37,38,39
            , 40,41,42,43,44,45,46,47,48,49
            } {
          \edef\reset{\noexpand
            \filldraw[draw=black,fill=white] 
              (axis cs: 2,-10*0.6561^\exponent) circle (\widthIC);
          }
          \reset
        }
        \filldraw[\colorZeno] 
          (axis cs: 2, 0) circle (\widthIC);
        \filldraw[draw=black,fill=white] 
          (axis cs: 2, 20) circle (\widthIC);
        \filldraw[black] 
          (axis cs: 2, -10) circle (\widthIC);
        \filldraw[\colorIC] 
          (axis cs: 0,10) circle (\widthIC);
        \filldraw[black] 
          (axis cs: 6, -20) circle (\widthIC);
        \draw[dotted]
          (axis cs: 6,-20) -- (axis cs: 6,20);
        \filldraw[draw=black,fill=white] 
          (axis cs: 6, 0) circle (\widthIC);
        \filldraw[draw=black,fill=white] 
          (axis cs: 6, 20) circle (\widthIC);
  \end{axis}
\end{tikzpicture}
\caption{$0 < b < 1$}

\end{subfigure}

\caption{Trajectories of $\hs_B$ (Bouncing ball).
All trajectories start from $(h=0,v=v_0)$.}
\label{fig:bb-trajectories}
\end{figure}
\begin{figure}[t]
\centering
\begin{subfigure}[b]{0.3\textwidth}
\begin{tikzpicture}

\begin{axis} 
    [ width = 5cm
    , height = 3.5cm
    , xlabel = $v$
    , ylabel =  {\hspace{-1.8em}$h$}
    , axis x line = middle
    , axis y line = middle
    , xmax = 8.5
    , ymax = 390
    , ymin = 0.0
    , xtick = {-2,2}
    , xticklabels = {$-v_0$,$v_0$}
    , ytick = {0}
    , enlarge x limits=true
    , enlarge y limits=true
    ]
    \foreach \zl/\zr in 
        { 0.0/2.0
        , 2.0/6.0
        } {
      \edef\traj{\noexpand
        \addplot [\colorTraj,\widthTraj,domain=(\zr-\zl):-(\zr-\zl)] 
          ({-x},{ -5*(x-(\zr-\zl))*(x+(\zr-\zl)) });
      }
      \traj
    }
    \addplot 
      [ \colorTraj
      , \widthTraj
      , domain=-8:8]
      ({-x},{ -5*(x-8)*(x+8) });
    \draw[dashed] 
      (axis cs: 4,0) -- (axis cs: 4,330);
    \draw[dashed] 
      (axis cs: -4,0) -- (axis cs: -4,330);
    \draw[draw=none,fill=none] 
      (axis cs: -4.5,365) node 
      {\scriptsize $bV$};
    \draw[draw=none,fill=none] 
      (axis cs: 4.5,365) node 
      {\scriptsize $-bV$};
    \filldraw[draw=black,fill=white] 
      (axis cs: 4, 0) circle (\widthIC);
    \filldraw[draw=black,fill=white] 
      (axis cs: 8, 0) circle (\widthIC);
    \filldraw[black] 
      (axis cs: -2, 0) circle (\widthIC);
    \filldraw[black] 
      (axis cs: -4, 0) circle (\widthIC);
    \filldraw[\colorIC] 
      (axis cs: 2,0) circle (\widthIC);
\end{axis}

\end{tikzpicture}

\caption{$b < -1$}
\end{subfigure}
\hfill
\begin{subfigure}[b]{0.3\textwidth}

\begin{tikzpicture}

\begin{axis} 
    [ width = 5cm
    , height = 3.5cm
    , xlabel = $v$
    , ylabel =  {\hspace{-1.8em}$h$}
    , axis x line = middle
    , axis y line = middle
    , ymax = 4.2
    , ymin = 0.0
    , xtick = {-0.531441,-0.81,0.531441,0.81}
    , xticklabels = {$-v_0$,$-V$,$v_0$,$V$}
    , ytick = {0}
    , enlarge x limits=true
    , enlarge y limits=true
    ]
    \foreach \zl/\zr in 
        { 3.6855900000000004/4.217031
        , 6.712320754503902/6.941088679053512
        } {
      \edef\traj{\noexpand
        \addplot [\colorTraj,\widthTraj,domain=(\zr-\zl):-(\zr-\zl)] 
          ({-x},{ -5*(x-(\zr-\zl))*(x+(\zr-\zl)) });
      }
      \traj;
    }
    \foreach \zl/\zr in 
        { 0.9/1.71
        , 5.12579511/5.5132155990000005
        } {
      \edef\traj{\noexpand
        \addplot [\colorZeno,\widthTraj,domain=(\zr-\zl):-(\zr-\zl)] 
          ({-x},{ -5*(x-(\zr-\zl))*(x+(\zr-\zl)) });
      }
      \traj
    }
    \draw[draw=none,fill=none] 
      (axis cs: 0.9^14,4.2) node 
      {\scriptsize $-bv_0$};
    \draw[dashed] 
      (axis cs: 0.9^14,0) -- (axis cs: 0.9^14,3.8);
    \draw[draw=none,fill=none] 
      (axis cs: 0.9^9+0.05,3.4) node 
      {\scriptsize $-bV$};
    \draw[dashed] 
      (axis cs: 0.9^9,0) -- (axis cs: 0.9^9,3.15);
    \filldraw[\colorZeno] 
      (axis cs: 0,0) circle (\widthIC);
    \filldraw[draw=black,fill=white] 
      (axis cs: 0.9^14, 0) circle (\widthIC);
    \filldraw[draw=black,fill=white] 
      (axis cs: 0.9^9, 0) circle (\widthIC);
    \filldraw[draw=black,fill=white] 
      (axis cs: 0.9^2, 0) circle (\widthIC);
    \filldraw[black] 
      (axis cs: -0.9^2, 0) circle (\widthIC);
    \filldraw[black] 
      (axis cs: -0.9^6, 0) circle (\widthIC);
    \filldraw[\colorIC] 
      (axis cs: 0.9^6,0) circle (\widthIC);
\end{axis}

\end{tikzpicture}

\caption{$-1 < b < 0$}
\end{subfigure}
\hfill
\begin{subfigure}[b]{0.3\textwidth}

\begin{tikzpicture}

\begin{axis} 
    [ width = 5cm
    , height = 3.5cm
    , xlabel = $v$
    , ylabel =  {\hspace{-1.8em}$h$}
    , axis x line = middle
    , axis y line = middle
    , ymax = 4
    , ymin = 0.0
    , xtick = {-0.531441,-0.81,0.531441,0.81}
    , xticklabels = {$-v_0$,$-V$,$v_0$,$V$}
    , ytick = {0}
    , enlarge x limits=true
    , enlarge y limits=true
    ]
    \foreach \zl/\zr in 
        { 3.6855900000000004/4.217031
        } {
      \edef\traj{\noexpand
        \addplot [\colorTraj,\widthTraj,domain=(\zr-\zl):-(\zr-\zl)] 
          ({-x},{ -5*(x-(\zr-\zl))*(x+(\zr-\zl)) });
      }
      \traj
    }
    \foreach \zl/\zr in 
        { 0.9/1.71
        } {
      \edef\traj{\noexpand
        \addplot [\colorZeno,\widthTraj,domain=(\zr-\zl):-(\zr-\zl)] 
          ({-x},{ -5*(x-(\zr-\zl))*(x+(\zr-\zl)) });
      }
      \traj
    }
    \foreach \exponent in 
        { 7,8,9
        , 10,11,12,13,14,15,16,17,18,19
        , 20,21,22,23,24,25,26,27,28,29
        , 30,31,32,33,34,35,36,37,38,39
        , 40,41,42,43,44,45,46,47,48,49
        } {
      \edef\reset{\noexpand
        \filldraw[draw=black,fill=white] 
          (axis cs: -0.9^\exponent, 0) circle (\widthIC);
      }
      \reset
    }
    \foreach \exponent in 
        { 3,4,5,6,7,8,9
        , 10,11,12,13,14,15,16,17,18,19
        , 20,21,22,23,24,25,26,27,28,29
        , 30,31,32,33,34,35,36,37,38,39
        , 40,41,42,43,44,45,46,47,48,49
        } {
      \edef\reset{\noexpand
        \filldraw[draw=black,fill=white] 
          (axis cs: -0.9^\exponent, 0) circle (\widthIC);
      }
      \reset
    }
    \draw[dashed] 
      (axis cs: -0.9^7,0) -- (axis cs: -0.9^7,3.0);
    \draw[draw=none,fill=none] 
      (axis cs: -0.9^7-0.1,3.3) node 
      {\scriptsize $-bv_0$};
    \draw[dashed] 
      (axis cs: -0.9^3,0) -- (axis cs: -0.9^3,2.0);
    \draw[draw=none,fill=none] 
      (axis cs: -0.9^3-0.1,2.2) node 
      {\scriptsize $-bV$};
    \filldraw[\colorIC] 
      (axis cs: 0.9^6,0) circle (\widthIC);
    \filldraw[black] 
      (axis cs: -0.9^6, 0) circle (\widthIC);
    \filldraw[black] 
      (axis cs: -0.9^2, 0) circle (\widthIC);
    \filldraw[draw=black,fill=white] 
      (axis cs: 0.9^2, 0) circle (\widthIC);
    \filldraw[\colorZeno] 
      (axis cs: 0,0) circle (\widthIC);
\end{axis}

\end{tikzpicture}

\caption{$0 < b < 1$}
\end{subfigure}

\caption{Set of reachable states of $\hs_B$ (Bouncing ball).
The set $I$ is always $\{(h=0,v=v_0)\}$.
In case (a) there is an expanding sequence of parabolas \textcolor{\colorTraj}{$(-b)^n v_0\to+\infty$}.
In case (b) there are two shrinking sequences of parabolas \textcolor{\colorTraj}{$(-b)^n v_0\to 0$} and \textcolor{\colorZeno}{$(-b)^n V\to 0$}.}
\label{fig:bb-states}
\end{figure}
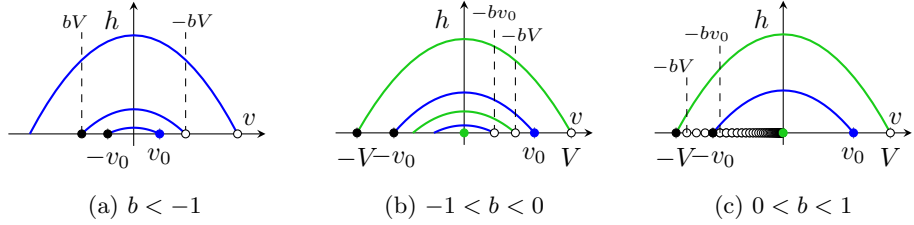
\begin{enumerate}
\item $S_f=S_s=\bigcup_n S(b^nv_0)$ when $b<-1$
\item $S_f=S_s=S(v_0)$ when $b=-1$ (elastic bounce)
\item $S_f=\bigcup_n S(b^nv_0)\subset S_f\cup S(0)\cup(\bigcup_n S(b^nV))=S_s$ when $-1<b<0$
\item $S_f=S_s=S(v_0)\cup S(0)\cup S(V)$ when $b=0$ (inelastic bounce)
\item $S_f=S(v_0)\cup S'(b,v_0)\subset S_f\cup S(0)\cup S(V)\cup S'(b,V)=S_s$ when $0<b<1$, with $S'(b,u)\defeq\{(0,-b^nu)|0\leq n\}$ sequence of instantaneous slowdowns
\item $S_f=S_s=S(v_0)$ when $b=1$
\item $S_f=S_s=S(v_0)\cup S'(b,v_0)$ when $1<b$, now $S'(b,u)$ is a sequence of instantaneous accelerations.
\end{enumerate}
To make $\hs_B$ compact, the simplest is to put an upper bound to the energy of the system, say $E_0=E(0,V_0)$ with $V_0>V$, and allow only $b$ st $|b|\leq 1$, so that the energy cannot increase when the ball bounces.  So we replace $\hs_B$ with the following compact HS $\hs$, whose support is $S_0=\{(h,v)|0\leq h\land E(h,v)\leq E_0\}$
\begin{itemize}
    \item $F=\{((h,v),(\Der{h},\Der{v}))|(h,v):S_0\land\Der{h}=v\land\Der{v}=-1\}$
    \item $G=\{((0,v),(0,v^+))|0\leq -v\leq V_0\land v^+=b*v\}\uplus\{((0,0),(0,V))\}$.
\end{itemize}
To define $S_R$ we fix a compact HS $\hs_0$ with support $S_0$.  The simplest choice is to
take $F_0=F$ and replace $G$ with a $G_0$ independent from $b$, namely
\begin{itemize}
    \item $G_0=\{((0,v),(0,v^+))|0\leq -v\leq V_0\land |v^+|\leq-v\}\uplus\{((0,0),(0,V))\}$.
\end{itemize}
The relations among $S_s$, $S_r$ and $S_R$, when $|b|\leq 1$ and $0<v_0<V<V_0$, are
\begin{enumerate}
\item $S_s=S_r=S(v_0)\subset \bigcup\{S(v)|v:[0,V]\}=S_R$ when $b=-1$ (elastic bounce)
\item $S_f=S_r=S_R=\bigcup_n S(b^nv_0)\cup S(0)\cup(\bigcup_n S(b^nV))$ when $-1<b<0$
\item $S_s=S_r=S_R=S(v_0)\cup S(0)\cup S(V)$ when $b=0$ (inelastic bounce)
\item $S_s=S_r=S_R=S(v_0)\cup S'(b,v_0)\cup S(0)\cup S(V)\cup S'(b,V)$ when $0<b<1$
\item $S_s=S_r=S(v_0)\subset S(v_0)\cup S(V)\cup\{(0,v)|-v:[0,V]\}=S_R$ when $b=1$.
\end{enumerate}
There is an informal explanation for $S_R$ in the case $b=-1$ (elastic bounce).
After each bounce the ball may lose a bit of energy, thus after sufficiently many bounces it may stop (minimum energy).
After a kick the energy will reach the maximum value allowed, and then it may decrease again after each bounce.
Thus any level of energy in $[0,E(0,V)]$ is reachable, assuming $0<v_0<V$.
More formally we define $\hs_\delta\ll\hs$ in $\HS_c(\hs_0)$ st $\hs_\delta\to\hs$ when $\delta\to 0$.
Since $\hs_0$ allows only perturbations in $G$, we define $G_\delta$ (for $\delta>0$) as
\[\{((0,v),(0,v^+))|0\leq -v\leq V_0\land|v^+|\leq-v\land -v-\delta\leq v^+\}\uplus\{((0,0),(0,V))\}\]
When $-v$ is small, ie $0\leq-v\leq\delta\leq V$, $|v^+|\leq-v$ and the energy is $\leq\frac{\delta^2}{2}$.
Otherwise, $0<\delta<-v$, $v^+:[-v-\delta,-v]$ and the energy loss is $\leq V\delta-\frac{\delta^2}{2}$.

\begin{remark}\label{rmk-constrain}
It is important what $\hs_0$ is chosen to capture the hard constrains on the HS $\hs$ of interest,
because it can affect how $S_R$ is computed.
For instance, for the bouncing ball one may replace $\hs_0$ with a \emph{more relaxed} $\hs'_0$
\begin{itemize}
    \item $G'_0=\{((0,v),(0,v^+))|0\leq -v\leq V_0\land |v^+|\leq V_0\}\uplus\{((0,0),(0,V))\}$
\end{itemize}
$\hs'_0$ has the same support of $\hs_0$, but $\hs'_0<\hs_0$, because after a bounce the ball can increase its energy as far as it stays below the upper bound $E(0,V_0)$.  This change results in a bigger subset $S_R$ when $|b|=1$, namely
\begin{itemize}
\item $S_0=S_R$ when $b=-1$, ie any state in the support of $\hs$ is reachable because of the more permissive perturbations
\item $S(v_0)\cup S(V)\cup\{(0,v)|-v:[0,V_0]\} =S_R$ when $b=1$.
\end{itemize}
\end{remark}

\section*{Conclusions and Future Work}

The main contributions of this paper concern reachability analysis in the context of hybrid (and continuous) systems.

Firstly, we have proposed safe reachability $\Rs{\hs}(I)$, which computes an over-approximation of the set of states reachable in finite time from the set $I$ of initial states by the hybrid system $\hs$, and compared it with the more naive reachability $\Rf{\hs}(I)$, which computes only an under-approximation.

Secondly, and more importantly, we have addressed the issue of \emph{robustness} of an analysis $A$ cast as a monotonic map $A:\Po(X,Y)$ between complete lattices $X$ and $Y$ of a particular form (ie, hyperspaces of metric spaces).  Robustness of $A$ means that $A(x_\delta)\to A(x)$ as $\delta\to 0$, where $x_\delta$ is a \emph{small perturbation} of $x$ depending on a $\delta>0$, which measures the level of inaccuracy.  In some cases (ie, when the metric spaces are compact) robustness amounts to Scott continuity, and one can exploit the following facts:
\begin{itemize}
\item Every monotonic map $A:\Po(X,Y)$ between complete lattices has a best Scott continuous approximation $\bca{A}\leq A:\Po(X,Y)$.
\item When $X$ is a continuous lattice, $\bca{A}(x)$ is the sup of $\{A(b)|b\ll_X x\}$, ie, it is computed by applying $A$ to \emph{way-below} approximations of $x$. 
\end{itemize} 
While the importance of safe/sound analyses is widely recognized, the issue of robustness is mostly overlooked (one reason being that for discrete systems it is not an issue).
In our view, robustness has at least two immediate implications:
\begin{description}
\item[Modeling languages.] There should be syntactic support to distinguish between hard and soft constraints on a hybrid system $\hs$.  Hard constraints must be satisfied also by small perturbations $\hs_\delta$. Thus, they identify the complete lattice (hyperspace) $X$ where $\hs$ is placed, while soft constraints provide the additional information to identify $\hs$ uniquely within $X$.  The distinction would be needed by tools that implement a robust analysis and can be ignored by other tools.
In \cite{gao2013delta} there is no explicit annotation for soft constraints, instead there is a re-interpretation of logical formula, which injects $\delta$-noise in specific sub-formulas.

\item[Finite model checking.]
Counterexample-guided Abstraction \& Refinement (CEGAR) is a general way of analyzing a system $\hs$ with an infinite state space by
leveraging finite model checking tools (see \cite{clarke2000counterexample,clarke2003verification}).
In the setting of abstract interpretation, CEGAR amounts to approximating an analysis $A:\Po(X,X)$ with \emph{finite} analyses $A':\Po(X_f,X_f)$, ie:
\[\commdiag{
X&&\rTo~{A}&&X\\
\uIntoB~\gamma&&\leq&&\uIntoB~\gamma&&\mbox{$\gamma$ sub-lattice inclusion}\\
X_f&&\rTo~{A'}&&X_f&&\mbox{$X_f$ finite (complete) lattice}\\
}\]
Whenever $X_f$ is fixed, there is a best approximation $A_f:\Po(X_f,X_f)$ of $A$ given by $A_f\defeq \gamma^R\circ A\circ\gamma$.

If $\hs$ is a HS on $\State=\Real^n$ and its support is included in a compact subset $K$, then there are three reachability analyses $\bca{\Rs{\hs}}\leq\Rs{\hs}\leq\Rf{\hs}:\Po(X,X)$, where $X=\cL(K)$ is a continuous lattice with a countable base, but uncountably many elements (unless $K$ is finite).  One may wonder whether replacing $X$ with a finite sub-lattice $X_f$ would make the three analyses indistinguishable: the answer is no (counterexamples can be given using $\hs_E$ and $\hs_D$ in Examples~\ref{ex-expand}~and~\ref{ex-decay}).

However, there is a way to turn a finite approximation $A' : \Po(X_f,X_f)$ of $A:\Po(X,X)$ into a finite approximation $A''$ of $\bca{A}$, provided that $X$ is a continuous lattice, namely, $A''(x)\defeq A'(x_b)$ where $x_b$ is the biggest element in $X_f$ such that $\gamma(b)\ll_X\gamma(x)$.
\end{description}
As future work we plan to address \textbf{computability} issues.  More specifically, given a compact HS $\hs_0$ on $\State$ with support $S_0$, is $\bca{\Rs{}}:\Po(\HS_c(\hs_0)\X\cL(S_0),\cL(S_0))$ computable?
When $\State$ has a countable dense subset, all  continuous lattices involved have a countable base, and the question can be formulated in a well-established setting.
In \cite{EP06} the authors study computability of the evolution map for a \emph{compact and continuous flow automaton} (ccFA for short).  In our setting, a ccFA is a tuple $(C,f,G,I)$, with $C$ a compact subset of $\Real^n$ called invariant, $f:\Top(C,\Real^n)$ the flow function, $G:\CS(C\times C)$ the jump relation, and $I:\CS(C)$ the set of initial states.
The graph of $f$ is flow relation $f:\CS(C\times V)$, with $V=f(C)$. Therefore, $(f,G)$ is a compact HS on $\Real^n$ with support included in the compact subset $K=C\cup V$.

The main contributions in \cite{EP06} are: the definition of the denotational semantics $\Den{(C,f,G,I)}:\Top(\Time,\cL(C))\cong\CS(\Time\X C)$ of a ccFA $(C,f,G,I)$, which is \emph{computable} for \emph{effectively given} ccFA (cf. \cite[Thm~38]{EP06}); and computational adequacy, ie $\Den{C,f,G,I}=\Ef{(f,G)}(I)$, for \emph{separated} ccFA (cf. \cite[Thm~25]{EP06}).

Most of the steps in defining the denotational semantics of a ccFA use Scott continuous maps, but for the lack of a continuous lattice/domain of ccFA.  It would be interesting to see if their denotational semantics extends to compact HS, giving
a Scott continuous \emph{computable} map $\Den{-}:\Po(\HS_c(\hs_0)\X\cL(S_0),\cL(S_0))$, and then compare it with the map $\bca{\Rs{}}$ between the same lattices.

\bibliographystyle{alpha}
\bibliography{local}

\appendix
\section{Proofs}\label{sec-proofs}

\begin{proof}[\textbf{of Thm~\ref{thm-fRE}}]
$\Ef{}(I)$ and $\Rf{}(I)$ are the least prefix-points of some monotonic maps on complete lattices of the form $(\PS(\State),\subseteq)$. Thus, we can exploit the universal property of the least prefix-point $X$ for a monotonic map $F$, ie, $F(Y)\leq Y\implies X\leq Y$.
\begin{enumerate}
\item Consider the monotonic maps $F$ and $F_I$ on $\PS(\Time\X\State)$:
\[F(S) \defeq \{(t+d,s')|\exists s.(t,s):S\land s\rTo^ds'\},\quad F_I(S) \defeq (\{0\}\X I)\cup F(S).\]
$\Ef{}(I)$ is the least prefix-point of $F_I$. Since $F_{I_0}(S)\subseteq F_{I_1}(S)$ when $I_0\subseteq I_1$, a prefix-point for $F_{I_1}$ is also a prefix-point for $F_{I_0}$. Hence, we conclude that $\Ef{}(I_0)\subseteq\Ef{}(I_1)$.

Since $I\subseteq\cup K$ when $I:K$, then $\Ef{}(I)\subseteq\Ef{}(\cup K)$. Now, let $U=\{\Ef{}(I)|I:K\}$. To prove $\Ef{}(\cup K)\subseteq\cup U$, observe that:
\begin{itemize}
\item $F$ preserves unions, thus $F(\cup U)=\cup\{F(S)|S:U\}$;
\item $\forall S:U.F(S)\subseteq S$, thus $F(\cup U)=\cup U$;
\item $\forall I:K.\exists S:U.\{0\}\X I\subseteq S$, thus $\{0\}\X(\cup K)\subseteq\cup U$.
\end{itemize}
Therefore, $\cup U$ is a prefix-point for $F_{\cup K}$.

\item Consider the monotonic maps $G$ and $G_I$ on $\PS(\State)$:
\[G(S) \defeq \{s'|\exists s:S.s\rTo s'\}, \quad G_I(S) \defeq I\cup G(S).\]
$\Rf{}(I)$ is the least prefix-point of $G_I$.
By analogy with the previous point, one can prove that $\Rf{}$ is monotonic and preserves unions.

Let $S$ be a prefix-point of $G_I$, ie, $G_I(S)\subseteq S$. Then:
\begin{itemize}
\item $I\subseteq S$, because $I\subseteq G_I(S)$.
\item $G_S(S)\subseteq S$, because $G(S)\subseteq G_I(S)$ and $G_S(S)=S\cup G(S)$.
\end{itemize}
By taking $S=\Rf{}(I)$, we conclude $I\subseteq\Rf{}(I)\subseteq\Rf{}(\Rf{}(I))\subseteq\Rf{}(I)$.

To prove $\pi(\Ef{}(I))=\Rf{}(I)$, observe that:
\begin{itemize}
\item $\forall E:\PS(\Time\X\State).\pi(F_I(E))=G_I(\pi(E))$. Hence, $\pi(\Ef{}(I))\supseteq\Rf{}(I)$.
\item $\forall S:\PS(\State).G_I(S)\subseteq S\implies F_I(\Time\X S)\subseteq\Time\X S$.
\end{itemize}
Therefore, $\Ef{}(I)\subseteq\Time\X\Rf{}(I)$, and consequently, $\pi(\Ef{}(I))\subseteq\Rf{}(I)$.

\item Consider the monotonic maps $F_I$ on $\PS(\Time\X\State)$, $G_I$ on $\PS(\State)$, $G^t_J$ on $\PS(\Real\X\State)$, whose least prefix-points are
$\Ef{\hs}(I)$, $\Rf{\hs}(I)$, and $\Rf{t(\hs)}(J)$, respectively.
Since $(t,s)\rTo_{t(\hs)}(t+d,s')\iff 0\leq d\land s\rTo_\hs^d s'$, by Prop~\ref{thm-HT}, these maps
are related as follows:
\begin{itemize}
\item $\forall E:\PS(\Time\X\State).F_I(E)=G^t_{\{0\}\X I}(E)$, so $\Ef{\hs}(I)=(\Rf{t(\hs)}(\{0\}\X I)$.
\item $\forall S:\PS(\Real\X\State).\pi(G^t_J(S))=G_{\pi(J)}(\pi(S))$, so $\pi(\Rf{t(\hs)}(J))\supseteq\Rf{\hs}(\pi(J))$.
\item $\forall S:\PS(\State).G_{\pi(J)}(S)\subseteq S\implies G^t_J(\Real\X S)\subseteq\Real\X S$.
\end{itemize}
As a result, $\Rf{t(\hs)}(J)\subseteq\Real\X\Rf{\hs}(\pi(J))$, and
$\pi(\Rf{t(\hs)}(J))\subseteq\Rf{\hs}(\pi(J))$.
\qed
\end{enumerate}
\end{proof}

\begin{proof}[\textbf{of Thm~\ref{thm-sRE}}]
$\Es{}(I)$ and $\Rs{}(I)$ are defined as least prefix-points of monotonic maps on complete lattices of the form $(\CS(\State),\subseteq)$,
$\CS(\State)$ is closed wrt arbitrary intersections and finite unions computed in $\PS(\State)$, and
the monotonic map $S\mapsto\cl{S}$ from $\PS(\State)$ to $\CS(\State)$ preserves finite unions.
\begin{enumerate}
\item $\Es{}(I)$ is the least prefix-point of a monotonic map $F'_I$ on $\CS(\Time\X\State)$ given by:
\[F'_I(S) \defeq (\{0\}\X I)\cup\cl{\{(t+d,s')|\exists s.(t,s):S\land s\rTo^ds'\}},\]
and the properties of $\Es{}$ are proved similar to those of $\Ef{}$, except
for the need to restrict to finite unions.

\item $\Rs{}(I)$ is the least prefix-point of a monotonic map $G'_I$ on $\CS(\State)$ given by:
\[G'_I(S) \defeq I\cup\cl{\{s'|\exists s:S.s\rTo s'\}},\]
and the properties of $\Rs{}$ are proved by analogy with those of $\Rf{}$. In particular, $\pi(\Es{}(I))\subseteq\Rs{}(I)$ follows from:
\begin{equation*}
    \forall S:\CS(\State).G'_I(S)\subseteq S\implies F'_I(\Time\X S)\subseteq\Time\X S.
\end{equation*}

As a result, $\Es{}(I)\subseteq\Time\X\Rs{}(I)$, and consequently $\pi(\Es{}(I))\subseteq\Rs{}(I)$. 

\item If $I:\PS(\State)$ and $S:\PS(\Time\X\State)$, then $I\subseteq\cl{I}:\CS(\State)$ and
$F_I(S)\subseteq F'_{\cl{I}}(S):\CS(\Time\X\State)$.
Hence, $\Ef{}(I)\subseteq\Es{}(\cl{I}):\CS(\Time\X\State)$, and consequently
$\Ef{}(I)\subseteq\Es{}(\cl{I})$.

The inclusion $\Rf{}(I)\subseteq\Rs{}(\cl{I})$ follows from $\Ef{}(I)\subseteq\Es{}(\cl{I})$,
since:
\begin{itemize}
\item $\Rf{}(I)=\pi(\Ef{}(I))$, by Thm~\ref{thm-fRE}.
\item $\pi(\Es{}(\cl{I}))\subseteq\Rs{}(\cl{I})$, by the previous point.
\end{itemize}

\item Consider the monotonic maps $F'_I$ on $\CS(\Time\X\State)$, $G'_I$ on $\CS(\State)$, and $G^t_J$ on $\CS(\Real\X\State)$, whose least prefix-points are
$\Es{\hs}(I)$, $\Rs{\hs}(I)$ and $\Rs{t(\hs)}(J)$, respectively.
Since $(t,s)\rTo_{t(\hs)}(t+d,s')\iff 0\leq d\land s\rTo_\hs^d s'$, by Prop~\ref{thm-HT}, these maps
are related as follows:
\begin{itemize}
\item $\forall E:\CS(\Time\X\State).F'_I(E)=G^t_{\{0\}\X I}(E)$, so $\Es{\hs}(I)=(\Rs{t(\hs)}(\{0\}\X I)$.
\item $\forall S:\CS(\State).G'_{\cl{\pi(J)}}(S)\subseteq S\implies G^t_J(\Real\X S)\subseteq\Real\X S$.
\end{itemize}

Hence, $\Rs{t(\hs)}(J)\subseteq\Real\X\Rs{\hs}(\cl{\pi(J)})$, and
$\pi(\Rs{t(\hs)}(J))\subseteq\Rs{\hs}(\cl{\pi(J)})$.
\qed
\end{enumerate}
\end{proof}

\begin{proof}[\textbf{of Thm~\ref{thm-mon-maps}}]
A Banach space is a metric space, and one can rely on characterizations of topological notions or exploit properties that are specific to metric spaces, eg, $x:\cl{U}\iff$ $x$ is the limit of a sequence $(x_n|n:\omega)$ in $U$.
\begin{enumerate}
    \item $t(F,G)=(F',G')$, where $F'=C_1\X F$ and $G'=C_2\X G$ with $C_1=\Real\X\{1\}$ and $C_2=\{(t,t)|t:\Real\}$, 
    modulo the isomorphism $(\Real\X\State)^2\lTo\Real^2\X\State^2$. $C_1$ and $C_2$ are closed subsets of $\Real^2$. Thus, $\cl{F'}=C_1\X\cl{F}$ and $\cl{G'}=C_2\X\cl{G}$, ie, $t(\cl{\hs})=\cl{t(\hs)}$.
    
    \item $\Ts(\hs,C)\subseteq C$ means that $C\leq\Ts(\hs,C)$. Hence, $C\leq\Ts(\hs,C)\leq\Ts(\hs',S')$ when $\hs\leq\hs'$ and $C\leq S'$, because $\Ts$ is monotonic.

    For $\Rf{}$ and $\Rs{}$ one can proceed as above, by  observing that $\Ts(\hs,C)\subseteq C$ means ``$C$ is closed wrt the transition relation $\rTo_\hs$'', and therefore $C\subseteq\Rf{}(\hs,C)\subseteq\Rs{}(\hs,C)\subseteq C$, by the assumptions on $C$ and by definition of $\Rf{}(\hs,C)$ and $\Rs{}(\hs,C)$.

    \item We prove that $\cl{\Sp(\hs)}$ is closed under the transition relation $\rTo_\hs$:
    \begin{itemize}
        \item If $s\rTo^0_\hs s'$, then $\rel{G}{s}{s'}$. Therefore, $s':\Sp(\hs)\subseteq\cl{\Sp(\hs)}$.
        \item If $s\rTo^d_\hs s'$ with $d>0$, then there exists $f:\Top([0,d],\State)$ such that $f(0,d)\subseteq \Sp(\hs)$ and $s'=f(d)$. Hence, $s':\cl{\Sp(\hs)}$, because $f$ is continuous.
    \end{itemize}
    
    \item It should be clear that $\cl{\Sp(\cl{\hs})}\leq\cl{\Sp(\hs)},\Sp(\cl{\hs})\leq \Sp(\hs):\pL(\State)$, as the maps $\Sp$ and $f^R(U)=\cl{U}$ are monotonic.
    To prove $\cl{\Sp(\cl{\hs})}=\cl{\Sp(\hs)}$, it suffices to show that $\cl{\Sp(\hs)}\leq \Sp(\cl{\hs})$, since $\cl{\cl{U}}=\cl{U}$.
    \begin{itemize}
        \item $\cl{\hs}=\cl{F+G}=\cl{F}+\cl{G}$. Therefore,
        $s:\Sp(\cl{\hs})\iff$\\$\exists x.(x:\cl{F}\land s=\pi_1(x))\lor(x:\cl{G}\land s=\pi_1(x))\lor(x:\cl{G}\land s=\pi_2(x))$.
        \item Assume wlog that $x:\cl{F}\land s=\pi_1(x)$, we have to show that $s:\cl{\Sp(\hs)}$.
        \item $x:\cl{F}$ implies that there exists a sequence $(x_n|n:\omega)$ in $F$ with limit $x$.
        \item Let $s_n=\pi_1(x_n)$. Then $(s_n|n:\omega)$ is a sequence in $\Sp(\hs)$ with limit $s$, because $\pi_1$ is continuous. Thus, $s:\cl{\Sp(\hs})$.
    \end{itemize}
    
    \item 
    $\Sp(\hs)$ is the union $\pi_1(F)\cup\pi_1(G)\cup\pi_2(G)$, and projections $\pi_i:\State\X\State\rTo\State$ are continuous maps.
    The image of a compact set along a continuous map is always compact.
    Therefore, if $\hs$ is compact---ie, $F$ and $G$ are compact---then $\Sp(\hs)$ is compact, because it is the union of three compact subsets. Moreover, in metric spaces compact subsets are always closed.\qed
\end{enumerate}
\end{proof}

\begin{proof}[\textbf{of Thm~\ref{thm-robust-equal}}]
We exploit the facts (F1-F3) in the proof of Lemma~\ref{thm-top}.

(1)$\implies$(2).
Given $U_2\subseteq\CS(\State_2)$ open (for the Robust topology), we have to prove that $U_1\defeq\inv{A}(U_2)\subseteq\CS(\State_1)$ is open. $U_1$ is upward closed, because $U_2$ is upward closed and $A$ is monotonic.
If $C:U_1$, ie, $A(C):U_2$, we have to find $\delta>0$ such that$\uparrow B(C,\delta)\subseteq U_1$, ie,
$C'\subseteq B(C,\delta)\implies A(C'):U_2$:
\begin{itemize}
\item $A(C):U_2$ and $U_2$ open imply that $\uparrow B(A(C),\delta')\subseteq U_2$ for some $\delta'>0$.
\item Now, let $\epsilon$ be in $(0,\delta')$. By (F3), we get $A(C)_\epsilon\subseteq B(A(C),\delta')$.
\item By robustness of $A$, there exists $\delta>0$ such that $A(C_\delta)\subseteq A(C)_\epsilon$.
\item By monotonicity of $A$, we have $A(C')\subseteq B(A(C),\delta')$, ie, $A(C'):U_2$, when $C'\subseteq B(C,\delta)\subseteq C_\delta$.
\end{itemize}
(2)$\implies$(1). The Robust topology on $\CS(\State_i)$ is between Scott and Alexandrov topologies, which have the same specialization order, ie, reverse inclusion.  Thus, $A$ is monotonic wrt the specialization orders, because it is continuous wrt the Robust topologies.

For $C:\CS(\State_1)$ and $\epsilon>0$, let $U_2\defeq\{C':\CS(\State_2)|\exists\epsilon'>0.B(C',\epsilon')\subseteq B(A(C),\epsilon)\}$.  Clearly $A(C):U_2\subseteq\uparrow B(A(C),\epsilon)\subseteq\uparrow A(C)_\epsilon$.
We prove that $U_2$ is open for the Robust topology, ie, $\forall C':U_2.\exists\delta>0.\uparrow B(C',\delta)\subseteq U_2$:
\begin{itemize}
\item $C':U_2$ implies $B(C',\epsilon')\subseteq B(A(C),\epsilon)$ for some $\epsilon'>0$.
\item By (F1), $B(B(C',\delta),\delta)\subseteq B(C',\epsilon')$, where $\delta=\epsilon'/2$.
\item Therefore, $\uparrow B(C',\delta)\subseteq U_2$.
\end{itemize}
$U_1\defeq\inv{A}(U_2)$ is open in the Robust topology, because $U_2$ is open and $A$ is continuous.
But $C:U_1$. Therefore, there exists $\delta'>0$ such that $\uparrow B(C,\delta')\subseteq U_1$.
If $\delta$ is in $(0,\delta')$, then (F3) implies that $C_\delta\subseteq B(C,\delta')$. Hence, $C_\delta:U_1$ and $A(C_\delta):U_2\subseteq\uparrow A(C)_\epsilon$, which means that $A(C_\delta)\subseteq A(C)_\epsilon$.
\qed\end{proof}

\end{document}